\journal{}
\newtheorem{theorem}{Theorem}
\newtheorem{lemma}{Lemma}
\newtheorem{corollary}{Corollary}
\newtheorem{observation}{Observation}
\newtheorem{definition}{Definition}
\theoremstyle{definition}
\newtheorem{example}{Example}
\theoremstyle{remark}
\newtheorem*{remark*}{Remark}
\newcommand{\DAWG}{\mathsf{DAWG}}
\newcommand{\CDAWG}{\mathsf{CDAWG}}
\newcommand{\DAWGedge}{\delta_{\mathsf{D}}}
\newcommand{\DAWGlink}{\mathit{sl_{\mathsf{D}}}}
\newcommand{\Longest}{\mathsf{long}}
\newcommand{\Shortest}{\mathsf{short}}
\newcommand{\ST}{\mathsf{STree}}
\newcommand{\STrie}{\mathsf{STrie}}
\newcommand{\SLT}{\mathsf{SLT}}
\newcommand{\Substr}{\mathsf{Substr}}
\newcommand{\Suffix}{\mathsf{Suffix}}
\newcommand{\MAW}{\mathsf{MAW}}
\newcommand{\AST}{\mathsf{STree^\prime}}
\newcommand{\AVs}{\mathsf{V_{S}^\prime}}
\newcommand{\AEs}{\mathsf{E_{S}^\prime}}
\newcommand{\LSTrie}{\mathsf{LSTrie}}
\newcommand{\BT}{\mathsf{BT}}
\newcommand{\ATree}{\mathsf{ATree}}
\newcommand{\Va}{\mathsf{V_A}}
\newcommand{\Ea}{\mathsf{E}_\mathsf{A}}
\newcommand{\Eaf}{\mathsf{E}_\mathsf{A}^{F}}
\newcommand{\Eab}{\mathsf{E}_\mathsf{A}^{B}}
\newcommand{\rev}[1]{\hat{#1}}
\newcommand{\Lt}{\mathsf{L_{T}}}
\newcommand{\Vt}{\mathsf{V_{T}}}
\newcommand{\Et}{\mathsf{E_{T}}}
\newcommand{\Ls}{\mathsf{L_{S}}}
\newcommand{\Vs}{\mathsf{V_{S}}}
\newcommand{\Es}{\mathsf{E_{S}}}
\newcommand{\Ld}{\mathsf{L_{D}}}
\newcommand{\Vd}{\mathsf{V_{D}}}
\newcommand{\Ed}{\mathsf{E_{D}}}
\newcommand{\WL}{\mathsf{WL}}
\newcommand{\mWL}{\mathsf{mWL}}
\newcommand{\BegPos}{\mathsf{BegPos}}
\newcommand{\EndPos}{\mathsf{EndPos}}
\newcommand{\nil}{\mathit{nil}}
\newcommand{\Leqr}{\equiv_\mathit{L}}
\newcommand{\Reqr}{\equiv_\mathit{R}}
\newcommand{\Leqc}[1]{[{#1}]_{L}}
\newcommand{\Reqc}[1]{[{#1}]_{R}}
\newcommand{\Lrep}[1]{\overrightarrow{#1}}
\newcommand{\Rrep}[1]{\overleftarrow{#1}}
\newcommand{\parent}{\mathsf{p}}
\begin{document}

\begin{frontmatter}

  \title{Linear-time Computation of DAWGs, Symmetric Indexing Structures, and MAWs for Integer Alphabets}

\author[kyushu,fujitsu]{Yuta Fujishige}
%\ead{fujishige.yuta@fujitsu.com}
\author[kyushu-ko]{Yuki Tsujimaru}
%\ead{1TE12160S@s.kyushu-u.ac.jp}
\author[kyushu]{Shunsuke Inenaga}
\ead{inenaga.shunsuke.380@m.kyushu-u.ac.jp}
\author[tmdu]{Hideo Bannai}
\ead{hdbn.dsc@tmd.ac.jp}
\author[kyushu]{Masayuki Takeda}
%\ead{takeda@inf.kyushu-u.ac.jp}
\address[kyushu]{Department of Informatics, Kyushu University, Japan}
\address[fujitsu]{Fujistu Limited (current affiliation)}
\address[kyushu-ko]{Department of Electrical Engineering and Computer Science, Kyushu University, Japan}
\address[tmdu]{M\&D Data Science Center, Tokyo Medical and Dental University, Japan}

\begin{abstract}
  The \emph{directed acyclic word graph} (\emph{DAWG}) of a string $y$
  of length $n$ is the smallest (partial) DFA which recognizes all suffixes of $y$
  with only $O(n)$ nodes and edges.
  In this paper, we show how to construct the DAWG for the input string $y$
  from the suffix tree for $y$,
  in $O(n)$ time for \emph{integer alphabets of polynomial size in $n$}.
  In so doing, we first describe a folklore algorithm which, given
  the suffix tree for $y$, constructs
  the DAWG for the reversed string $\rev{y}$ in $O(n)$ time.
  Then, we present our algorithm that builds
  the DAWG for $y$ in $O(n)$ time for integer alphabets, from
  the suffix tree for $y$.
  We also show that a straightforward modification to our DAWG construction
  algorithm leads to the first $O(n)$-time algorithm
  for constructing the \emph{affix tree} of a given string $y$
  over an integer alphabet.
  Affix trees are a text indexing structure
  supporting bidirectional pattern searches.
  We then discuss how our constructions can lead to linear-time algorithms
  for building other text indexing structures,
  such as \emph{linear-size suffix tries} and \emph{symmetric CDAWGs}
  in linear time in the case of integer alphabets.
  As a further application to our $O(n)$-time DAWG construction algorithm,
  we show that the set $\MAW(y)$ of all \emph{minimal absent words} (\emph{MAWs})
  of $y$ can be computed in
  \emph{optimal}, input- and output-sensitive $O(n + |\MAW(y)|)$ time and
  $O(n)$ working space for integer alphabets.
\end{abstract}

\begin{keyword}
string algorithms \sep DAWGs \sep suffix trees \sep affix trees \sep CDAWGs \sep minimal absent words
\end{keyword}

\end{frontmatter}

\section{Introduction}

\subsection{Constructing DAWGs for integer alphabets}

Text indexes are fundamental data structures that allow for efficient processing of string data,
and have been extensively studied. Although there are several alternative data structures which can be
used as an index, such as suffix trees~\cite{Weiner73} and suffix arrays~\cite{ManberM93} (and their compressed versions~\cite{Sadakane03,GrossiV05,Sadakane07}),
in this paper, we focus on \emph{directed acyclic word graphs} (\emph{DAWGs})~\cite{Blumer85}, which are a fundamental data structure for string processing.
Intuitively, the DAWG of string $y$,
denoted $\DAWG(y)$, is an edge-labeled DAG obtained by
merging isomorphic subtrees of the trie representing
all suffixes of string $y$, called the suffix trie of $y$.
Hence, $\DAWG(y)$ can be seen as an automaton
recognizing all suffixes of $y$.
Let $n$ be the length of the input string $y$.
Despite the fact that the number of nodes and edges of the suffix trie
is $\Omega(n^2)$ in the worst case,
Blumer et al.~\cite{Blumer85} proved that, surprisingly,
$\DAWG(y)$ has at most $2n-1$ nodes and $3n - 4$ edges for $n > 2$.
Crochemore~\cite{Crochemore86} showed that
$\DAWG(y)$ is the smallest (partial)  automaton
recognizing all suffixes of $y$,
namely, the sub-tree merging operation
which transforms the suffix trie to $\DAWG(y)$ indeed minimizes the automaton.

Since $\DAWG(y)$ is a DAG,
more than one string can be represented by the same node in general.
It is known that every string represented by the same node of $\DAWG(y)$
has the same set of ending positions in the string $y$.
Due to this property, if $z$ is the longest string represented by
a node $v$ of $\DAWG(y)$, then any other string represented by
the node $v$ is a proper suffix of $z$.
Hence, the \emph{suffix link} of each non-root node of $\DAWG(y)$
is well-defined; if $ax$ is the shortest string represented
by node $v$ where $a$ is a single character and $x$ is a string,
then the suffix link of $ax$ points to the node of $\DAWG(y)$
that represents string $x$.

One of the most intriguing properties of DAWGs
is that the suffix links of $\DAWG(y)$ for any string $y$
forms the suffix tree~\cite{Weiner73} of the reversed string of $y$.
Hence, $\DAWG(y)$ augmented with suffix links can be seen
as a \emph{bidirectional} text indexing data structure.
This line of research was followed by other types of
bidirectional text indexing data structures such as
\emph{symmetric compact DAWGs} (\emph{SCDAWGs})~\cite{Blumer87}
and \emph{affix trees}~\cite{Stoye00,Maass03}.
%DAWGs with suffix links also have applications to other
%kinds of string processing problems which may not always easily solvable by using suffix trees or arrays, such as:
%finding \emph{minimal absent words} for a given string~\cite{crochemore98:_autom_forbid_words,SugimotoIBT12},
%finding \emph{$\alpha$-gapped repeats} that occur
%in a given string~\cite{TanimuraFIIBT15},
%finding \emph{maximal-exponent repeats}
%in a given overlap-free string~\cite{BadkobehCT12},
%computing the \emph{Lempel-Ziv 77 factorization}~\cite{LZ77}
%of a given string in an online manner and
%with compact space~\cite{YamamotoIBIT14}.

\subsubsection{Our Contributions to Text Indexing Constructions}

Time complexities for constructing text indexing data structures
depend on the underlying alphabet.
See Table~\ref{table:results}.
For a given string $y$ of length $n$ over an ordered alphabet of size $\sigma$,
the suffix tree~\cite{McCreight76}, the suffix array~\cite{ManberM93},
the DAWG, and the \emph{compact DAWGs} (\emph{CDAWGs})~\cite{Blumer87}
of $y$ can all be constructed in $O(n \log \sigma)$ time.
Here, we recall that the CDAWG of a string $y$ is a path-compressed
version of the DAWG for $y$,
or equivalently, the CDAWG of $y$ is a DAG obtained by merging
isomorphic subtrees of the suffix tree of $y$.
The aforementioned bounds immediately lead to $O(n)$-time construction algorithms for constant-size alphabets.

\begin{table}[bt]
  \caption{Space requirements and construction times for text indexing structures
   for input strings of length $n$ over an alphabet of size $\sigma$. }
  \label{table:results}
  \centerline{
  \footnotesize
  \begin{tabular}{|l||l|l|l|l|}\hline
    & \multirow{2}{*}{\begin{tabular}[c]{@{}c@{}}space\\(in words) \end{tabular}} & \multicolumn{3}{|c|}{construction time}\\ \cline{3-5}
    &  & ordered alphabet &  integer alphabet & constant-size alphabet \\ \hline \hline
    suffix tries & $O(n^2)$ & $O(n^2)$ & $O(n^2)$ & $O(n^2)$ \\ \hline
    suffix trees & $O(n)$ & $O(n \log \sigma)$~\cite{McCreight76} & $O(n)$~\cite{Farach-ColtonFM00} & $O(n)$~\cite{Weiner73} \\ \hline
    suffix arrays & $O(n)$ & $O(n \log \sigma)$~\cite{McCreight76}+\cite{ManberM93} & $O(n)$~\cite{Farach-ColtonFM00}+\cite{ManberM93} & $O(n)$~\cite{Weiner73}+\cite{ManberM93} \\ \hline
    DAWGs & $O(n)$ & $O(n \log \sigma)$~\cite{Blumer85}& $O(n)$~[this work] & $O(n)$~\cite{Blumer85} \\ \hline
    CDAWGs & $O(n)$ & $O(n \log \sigma)$~\cite{Blumer87} & $O(n)$~\cite{Narisawa07}& $O(n)$~\cite{Blumer87} \\ \hline
    symmetric CDAWGs & $O(n)$ & $O(n \log \sigma)$~\cite{Blumer87} & $O(n)$~[this work] & $O(n)$~\cite{Blumer87} \\ \hline
    affix trees & $O(n)$ & $O(n \log \sigma)$~\cite{Maass03} & $O(n)$~[this work] & $O(n)$~\cite{Maass03} \\ \hline
    linear-size suffix tries & $O(n)$ & $O(n \log \sigma)$~\cite{CrochemoreEGM16} & $O(n)$~[this work] & $O(n)$~\cite{CrochemoreEGM16} \\ \hline
  \end{tabular}
  }
\end{table}

In this paper, we are particularly interested in
input strings of length $n$ over an \emph{integer alphabet} of polynomial size in $n$.
Farach-Colton et al.~\cite{Farach-ColtonFM00} proposed
the first $O(n)$-time suffix tree construction algorithm for
integer alphabets.
Since the out-edges of every node of the suffix tree
constructed by McCreight's~\cite{McCreight76} and
Farach-Colton et al.'s algorithms
are lexicographically sorted,
and since sorting is an obvious
lower-bound for constructing edge-sorted suffix trees,
the above-mentioned suffix-tree construction algorithms are optimal
for ordered and integer alphabets, respectively.
Since the suffix array of $y$ can be easily obtained
in $O(n)$ time from the edge-sorted suffix tree of $y$,
suffix arrays can also be constructed in optimal time.
In addition, since the edge-sorted suffix tree of $y$
can easily be constructed in $O(n)$ time from the edge-sorted CDAWG of $y$,
and since the edge-sorted CDAWG of $y$ can be constructed in $O(n)$ time
from the edge-sorted DAWG of $y$~\cite{Blumer87},
sorting is also a lower-bound for constructing
edge-sorted DAWGs and edge-sorted CDAWGs.
Using the technique of Narisawa et al.~\cite{Narisawa07},
edge-sorted CDAWGs can be constructed in optimal $O(n)$ time for
integer alphabets.
On the other hand,
the only known algorithm to construct DAWGs was Blumer et al.'s
$O(n \log \sigma)$-time online algorithm~\cite{Blumer85}
for ordered alphabets of size $\sigma$,
which results in $O(n \log n)$-time DAWG construction for integer alphabets.
In this paper, we show that the gap between the upper and lower bounds
for DAWG construction can be closed,
by introducing how to construct edge-sorted DAWGs in $O(n)$ time for integer alphabets,
in two alternative ways.

It is known that the suffix tree can be augmented with \emph{Weiner links},
which are a generalization of the reversed suffix links.
The DAG consisting of the nodes of the suffix tree for a string $y$ and its Weiner links
coincides with the DAWG of the reversed string $\rev{y}$~\cite{Blumer85,Chen85}.
In this paper, we first describe an $O(n)$-time folklore algorithm which computes the
sorted Weiner links
for integer alphabets, given that the suffix tree of the string is already computed.
This immediately gives us an $O(n)$-time algorithm for constructing
the edge-sorted DAWG for the reversed string $\rev{y}$ over an integer alphabet.
%We introduce the algorithm to compute Weiner links of a given suffix tree and its correctness.
%This algorithm can be easily modified to construct the DAWG of the reversed string,
%but there is no known algorithm for constructing a DAWG for a forward string from a suffix tree.

It was still left open whether one could efficiently construct the DAWG for a string $y$
from the suffix tree for $y$, in the case of integer alphabets.
%Thus we show a new $O(n)$-algorithm to construct a DAWG for a forward string from a suffix tree.
We close this question by proposing an $O(n)$-time algorithm
that builds the edge-sorted DAWG for the input (forward) string $y$.
Our algorithm builds $\DAWG(y)$ for a given string $y$
by transforming the suffix tree of $y$ to $\DAWG(y)$.
In other words, our algorithm simulates the minimization
of the suffix trie of $y$ to $\DAWG(y)$ using only $O(n)$ time and space.
Our algorithm also computes the suffix links of the DAWG,
and can thus be applied to various kinds of string processing problems.
This also means that we can construct the suffix tree for the reversed string $\rev{y}$,
as a byproduct.

A simple modification to our $O(n)$-time DAWG construction algorithm
also leads us to the first $O(n)$-time algorithm to construct edge-sorted affix trees
for integer alphabets.
We remark that the previous best known affix-tree construction
algorithm of Maa\ss~\cite{Maass03} requires
$O(n \log n)$ time for integer alphabets.

In addition, we show that our construction algorithms
for DAWGs and affix trees lead to linear-time constructions
of other indexing structures such as
symmetric CDAWGs~\cite{Blumer87} and linear-size suffix tries~\cite{CrochemoreEGM16} in the case of integer alphabets.

\subsection{Computing Minimal Absent Words for Integer Alphabets}

As yet another application of our $O(n)$-time DAWG construction algorithm,
we present an optimal time algorithm
to compute \emph{minimal absent words} for a given string.
  There are a number of applications to minimal absent words,
  including data compression~\cite{DBLP:conf/icalp/CrochemoreMRS99, DBLP:conf/isita/OtaM14a}
and molecular biology~\cite{DBLP:conf/psb/HampikianA07,DBLP:journals/bmcbi/HeroldKG08, DBLP:journals/ipl/WuJS10, DBLP:journals/bioinformatics/SilvaPCPF15,Charalampopoulos18, DBLP:journals/almob/AlmirantisCGIMP17,DBLP:journals/bioinformatics/HeliouPP17}.
Hence, it is important to develop
efficient algorithms for computing minimal absent words from a given string.

%A string $w$ is said to be an absent word of another string $y$
%if $w$ does \emph{not} occur in $y$.
%An absent word $w$ of string $y$ is said to be \emph{minimal}
%if all proper substrings of $w$ occur in $y$.
Let $\MAW(y)$ be the set of minimal absent words of a string $y$.
Crochemore et al.~\cite{crochemore98:_autom_forbid_words} proposed
an algorithm to compute $\MAW(y)$
in $\Theta(n \sigma)$ time and $O(n)$ working space.
Their algorithm first constructs $\DAWG(y)$ with suffix links
in $O(n \log \sigma)$ time and computes $\MAW(y)$ in $\Theta(n\sigma)$ time
using $\DAWG(y)$ and its suffix links.
Since $|\MAW(y)| = O(n \sigma)$~\cite{MignosiRS02},
the output size $|\MAW(y)|$ is hidden in the running time of their algorithm.

Later, Barton et al. proposed an alternative algorithm that computes $\MAW(y)$
using the suffix array, in $\Theta(n\sigma)$ time and space~\cite{DBLP:journals/bmcbi/BartonHMP14}.
The algorithm presented in \cite{DBLP:conf/ppam/BartonHMP15} can be seen as
a parallel version of this Barton et al.'s algorithm using suffix arrays.
While all of these algorithms achieve $O(n)$ running time
in the case of constant-size alphabets ($\sigma \in O(1)$),
it can require $O(n^2)$ time when $\sigma \in \Omega(n)$.

Mignosi et al.~\cite{MignosiRS02} gave a tight bound on $|\MAW(y)|$
such that $\sigma \leq |\MAW(y)| \leq (\sigma_y - 1)(|y|-1) + \sigma$,
where $\sigma$ is the size of the underlying alphabet
and $\sigma_y$ is the number of distinct characters occurring in $y$.
However, $|\MAW(y)|$ can be $o(n\sigma)$ for many strings.
Thus, it is important to develop an algorithm whose running time is
\emph{output-sensitive}, namely, linear in the output size $|\MAW(y)|$.

In this paper, we show that $\MAW(y)$ can be computed in
\emph{optimal} $O(n + |\MAW(y)|)$ time given that
the edge-sorted DAWG of $y$ is already computed.
%We first construct edge-sorted DAWG of $y$ in $O(n)$ time
%using the algorithm we propose in this paper,
In more detail,
given the edge-sorted DAWG for $y$,
which can be computed in $O(n)$ time as above,
our algorithm computes $\MAW(y)$ in $O(n + |\MAW(y)|)$ optimal time.
We remark that our algorithm for computing $\MAW(y)$ itself works within this time for general ordered alphabets.

Our algorithm is a modification of
Crochemore et al.'s algorithm~\cite{crochemore98:_autom_forbid_words}
for finding $\MAW(y)$.
We emphasize that for non-constant-size alphabets Crochemore et al.'s algorithm
takes super-linear time in terms of the input string length
independently of the output size $|\MAW(y)|$,
and thus our result greatly improves the efficiency for integer alphabets.
%Our result can also be applied to
%recent work by Crochemore et al.~\cite{CrochemoreFMP16}
%for string comparison with minimal absent words,
%resulting in a more efficient algorithm for string comparison
%with minimal absent words for integer alphabets.
%}

\subsubsection{Other efficient algorithms for computing MAWs}
Belazzougui et al.~\cite{BelazzouguiCKM13} showed that
using a representation of the bidirectional BWT of the input string $y$ of length $n$,
$\MAW(y)$ can be computed in $O(n + |\MAW(y)|)$ time.
%However, it is unclear to us whether their algorithm achieves this running time
%for integer alphabets of polynomial size in $n$.
%Another issue is that the construction time for the representation
Although the construction time for the representation
of the bidirectional BWT was not given in~\cite{BelazzouguiCKM13},
it was later shown by Belazzougui and Cunial~\cite{BelazzouguiC19}
that the bidirectional BWT of a given string $y$ over an integer alphabet
can be incrementally constructed in $O(n)$ time.

Independently of our work,
Charalampopoulos et al.~\cite{DBLP:conf/spire/Charalampopoulos18}
proposed an $O(n)$-space data structure
which, given a positive integer $l$, computes the set $\MAW_l(y)$ of
all minimal absent words of length $l$
%,that is $\MAW_l(y) = \MAW(y) \cap \Sigma^l$,
in $O(1+ |\MAW_l(y)|)$ query time.
This data structure can be constructed in $O(n)$ time and space.
By querying this data structure for all possible $l = 1, \ldots, n+1$,
one can compute $\MAW(y) = \bigcup_{l=1}^{n+1} \MAW_{l}(y)$ 
in $O(n + |\MAW(y)|)$ total time.

Fici and Gawrychowski~\cite{FiciG19} showed an optimal
$O(N+|\MAW(T)|)$-time algorithm for computing all MAWs
for a rooted tree $T$ of size $N$ in the case of integer alphabets.
It is known that the DAWG of a tree of size $N$
can have $\Omega(N^2)$ edges while its number of nodes is still $O(N)$~\cite{Inenaga20,Inenaga21}.
Instead of explicitly building the DAWG for $T$,
the algorithm of Fici and Gawrychowski~\cite{FiciG19}
simulates DAWG transitions by cleverly using lowest common ancestor queries
on the suffix tree of the reversed input tree $\rev{T}$.
In this view, their method can be seen as a tree input version
of our linear-time algorithm for computing MAWs.

Recently, Akagi et al.~\cite{AkagiOMNI22} proposed a space-efficient
representation of all MAWs for the input string $y$ which is
based on the \emph{run length encoding} (\emph{RLE}).
Their data structure takes $O(m)$ space and can output all MAWs in $O(|\MAW(y)|)$ time, where $m$ is the RLE size of $y$.

\subsection*{New Materials}
A preliminary version of this paper appeared in~\cite{DBLP:conf/mfcs/FujishigeTIBT16}.
Below is a list of the new materials in this full version:
\begin{itemize}
\item A clear description of the folklore linear-time construction of $\DAWG(\rev{y})$ for the reversed input string $\rev{y}$ from the suffix tree for the input string $y$ (Section~\ref{sec:folklore_algo});

\item The observation that our linear-time affix tree construction algorithm can also build $\DAWG(\rev{y})$ for the reversed input string $\rev{y}$ (Section~\ref{sec:DAWG_reversed});

  \item The first linear-time construction of symmetric CDAWGs in the case of integer alphabets (Section~\ref{sec:symmetric_cdawg});

\item The first linear-time construction of linear-size suffix tries in the case of integer alphabets (Section~\ref{sec:linear-size_suffix_tries}).
\end{itemize}

\section{Preliminaries}\label{preliminaries}

\subsection{Strings}
Let $\Sigma$ denote the alphabet.
An element of $\Sigma ^*$ is called a {\em string}.
Let $\varepsilon$ denote the empty string,
and let $\Sigma^+ = \Sigma^* \setminus \{\varepsilon\}$.
For any string $y$, we denote its length by $|y|$.
For any $1\leq i \leq |y|$, we use $y[i]$ to denote the
$i$th character of $y$.
For any string $y$, let $\rev{y}$ denote the reversed string of $y$.
If $y = uvw$ with $u,v,w \in \Sigma^*$,
then $u$, $v$, and $w$ are said to be
a \emph{prefix}, \emph{substring}, and \emph{suffix} of $y$,
respectively.
For any $1 \leq i \leq j \leq |y|$,
$y[i..j]$ denotes the substring of $y$
which begins at position $i$ and ends at position $j$.
For convenience, let $y[i..j] = \varepsilon$ if $i > j$.
%For any $1 \leq i \leq j\leq|y|$, $y[i..j] = y[i]y[i+1]\cdots y[j]$ is
%called a {\em substring} of $y$.
%Substring $y[i..j]$ is said to be a \emph{prefix} (resp. a \emph{suffix})
%of $y$ if $i = 1$ (resp. $j = |y|$).
Let $\Substr(y)$ and $\Suffix(y)$ denote
the set of all substrings and that of all suffixes of $y$, respectively.

Throughout this paper, we will use $y$ to denote the input string.
For any string $x \in \Sigma^*$, we define
the sets of beginning and ending positions of occurrences of $x$ in $y$,
respectively, by
\begin{eqnarray*}
  \BegPos(x) & = & \{ i \mid i\in [1,|y|-|x|+1],y[i..i+|x|-1] = x \}, \\
  \EndPos(x) & = & \{ i \mid i\in [|x|,|y|],y[i-|x|+1..i] = x \}.
\end{eqnarray*}
For any strings $u,v$, we write
$u \Leqr v$ (resp. $u\Reqr v$) when $\BegPos(u)=\BegPos(v)$ (resp. $\EndPos(u)=\EndPos(v)$).
For any string $x\in \Sigma^*$, the equivalence classes with respect to
$\Leqr$ and $\Reqr$ to which $x$ belong, are respectively denoted by
$\Leqc{x}$ and $\Reqc{x}$.
Also, $\Lrep{x}$ and $\Rrep{x}$ respectively denote the longest elements
of $\Leqc{x}$ and $\Reqc{x}$.

%When $u\Leqr v$ and $u\Reqr v$, we write $u\Beqr v$.
%For any string $x\in \Sigma^*$, the equivalence class with respect to
%$\Leqr,\Reqr,\Beqr$ that $x$ belongs to, are respectively denoted as
%$\Leqc{x},\Reqc{x},\Beqc{x}$.
%Also, $\Lrep{x},\Rrep{x},\Brep{x}$ respectively denote the longest elements
%of $\Leqc{x},\Reqc{x},\Beqc{x}$, and $x_\mathit{min}$ denotes the shortest elem%ent of $\Reqc{x}$.

For any set $S$ of strings where no two strings are of the same length,
let $\Longest(S) = \arg \max\{|x| \mid x \in S\}$ and
$\Shortest(S) = \arg \min\{|x| \mid x \in S\}$.

In this paper, we assume that the input string $y$ of length $n$
is over the integer alphabet $[1, n^c]$ for some constant $c$, and that
the last character of $y$ is a unique character denoted by $\mathtt{\$}$
that does not occur
elsewhere in $y$.
Our model of computation is a standard word RAM of machine word size
$\log_2 n$.
Space complexities will be evaluated by the number of words (not bits).

\subsection{Suffix trees and DAWGs}

Suffix trees~\cite{Weiner73} and
directed acyclic word graphs (\emph{DAWGs})~\cite{Blumer85}
are fundamental text data structures.
Both of these data structures are based on suffix tries.
The \emph{suffix trie} for string $y$,
denoted $\STrie(y)$, is a trie representing $\Substr(y)$,
formally defined as follows.
\begin{definition}
  $\STrie(y)$ for string $y$ is an edge-labeled
  rooted tree $(\Vt,\Et)$ such that
  \begin{eqnarray*}
   \Vt & = & \{x \mid x \in \Substr(y)\} \\
   \Et & = & \{(x, b ,xb) \mid x,xb  \in \Vt, b \in \Sigma\}.
  \end{eqnarray*}
  The second element $b$ of each edge $(x, b ,xb)$ is the label of the edge.
  We also define the set $\Lt$ of labeled ``reversed'' edges called the {\em suffix links} of $\STrie(y)$
  by
  \[
   \Lt = \{(ax, a ,x) \mid x, ax \in \Substr(y),  a \in \Sigma\}.
  \]
\end{definition}
As can be seen in the above definition,
each node $v$ of $\STrie(y)$ can be identified with the substring of $y$
that is represented by $v$.
Assuming that string $y$ terminates with a unique character
that appears nowhere else in $y$,
for each non-empty suffix $y[i..|y|] \in \Suffix(y)$
there is a unique leaf $\ell_i$ in $\STrie(y)$
such that the suffix $y[i..|y|]$ is spelled out by the path
from the root to $\ell_i$.

It is well known that $\STrie(y)$ may require $\Omega(n^2)$ space.
One idea to reduce its space to $O(n)$ is to contract
each path consisting only of non-branching edges
into a single edge labeled with a non-empty string.
This leads to the suffix tree $\ST(y)$ of string $y$.
Following conventions from~\cite{Blumer87,InenagaHSTAMP05},
$\ST(y)$ is defined as follows.

\begin{definition}
  $\ST(y)$ for string $y$ is an edge-labeled
  rooted tree $(\Vs,\Es)$ such that
  \begin{eqnarray*}
   \Vs & = & \{\Lrep{x} \mid x\in \Substr(y)\} \\
   \Es & = & \{(x, \beta ,x\beta) \mid x,x\beta \in \Vs, \beta \in \Sigma^+, b=\beta[1],   \Lrep{xb} = x\beta \}.
  \end{eqnarray*}
  The second element $\beta$ of each edge $(x, \beta, x\beta)$ is the label of the edge.
  We also define the set $\Ls$ of labeled ``reversed'' edges called the {\em suffix links} of $\ST(y)$
  by
  \[
   \Ls = \{(ax, a, x) \mid x, ax \in \Vs,  a \in \Sigma\},
  \]
  and denote the tree $(\Vs,\Ls)$ of the suffix links by $\SLT(y)$.
\end{definition}
Observe that each internal node of $\ST(y)$ is a branching internal node in $\STrie(y)$.
%% iff the substring $x \in \Substr(y)$ is the longest member of $\Leqc{x}$.
Note that for any $x \in \Substr(y)$
the leaves in the subtree rooted at $\Lrep{x}$ correspond to $\BegPos(x)$.
By representing each edge label $\beta$ with a pair of
integers $(i, j)$ such that $y[i..j] = \beta$,
$\ST(y)$ can be represented with $O(n)$ space.

% For any node $v \in \Vs$ of $\ST(y)$ and character $a \in \Sigma$,
% we write $\STedge(v, a) = u$ if there exist string $\alpha \in \Sigma^+$
% and node $u \in \Vs$ such that $\alpha[1] = a$ and $(v, \alpha, u) \in \Es$.
% Let $\STedge(v, a) = nil$ otherwise.
% For any suffix link $(u, v) \in \Ls$ of $\ST(y)$,
% we write $\STlink(u) = v$.

An alternative way to reduce the size of $\STrie(y)$ to $O(n)$
is to regard $\STrie(y)$ as a partial DFA
which recognizes $\Suffix(y)$, and to minimize it.
This leads to the directed acyclic word graph $\DAWG(y)$ of string $y$.
Following conventions from~\cite{Blumer87,InenagaHSTAMP05},
$\DAWG(y)$ is defined as follows.

\begin{definition}
  $\DAWG(y)$ of string
  $y$ is an edge-labeled DAG $(\Vd,\Ed)$ such that
  \begin{eqnarray*}
   \Vd & = & \{\Reqc{x} \mid x \in \Substr(y)\} \\
   \Ed & = & \{(\Reqc{x}, b, \Reqc{xb}) \mid x, xb \in \Substr(y), b \in \Sigma\}.
  \end{eqnarray*}
   We also define the set $\Ld$ of labeled ``reversed'' edges called the {\em suffix links} of $\DAWG(y)$
 by
  \[
   \Ld = \{(\Reqc{ax}, a, \Reqc{x}) \mid x, ax \in \Substr(y), a \in \Sigma, \Reqc{ax} \neq \Reqc{x} \}.
  \]
\end{definition}

See Figure~\ref{fig:trie_tree_DAWG}
for examples of $\STrie(y)$, $\ST(y)$, $\DAWG(y)$, and $\CDAWG(y)$.

\begin{theorem}[\cite{Blumer85}]
For any string $y$ of length $n > 2$,
the number of nodes in $\DAWG(y)$ is at most $2n-1$
and the number of edges in $\DAWG(y)$ is at most $3n - 4$.
\end{theorem}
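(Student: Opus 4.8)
The plan is to bound the nodes and the edges of $\DAWG(y)$ separately, in both cases exploiting a tree structure hidden inside the automaton.

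For the node count I would work directly with the end-position equivalence. The structural fact to establish first is that the sets $\EndPos(x)$ form a \emph{laminar} family over $\{1,\dots,n\}$: if two substrings $u,v$ share a common ending position $i$, then both end at $i$, so the shorter is a suffix of the longer, and hence the $\EndPos$ set of the longer is contained in that of the shorter. Consequently any two distinct classes $\Reqc{u}$ and $\Reqc{v}$ have $\EndPos$ sets that are either disjoint or nested, and ordering the nodes of $\DAWG(y)$ by reverse inclusion of their $\EndPos$ sets yields a tree --- precisely the tree of the suffix links $\Ld$ (the suffix tree of $\rev{y}$ alluded to in the introduction). A leaf of this tree is a node whose $\EndPos$ set is minimal; since the minimal sets of a laminar family are pairwise disjoint and nonempty, there are at most $n$ leaves.

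The crux of the node bound is then to show that every internal node branches, i.e. has at least two children. A node corresponds to a class whose longest member $w=\Rrep{w}$ is \emph{left-maximal}: if every occurrence of $w$ were preceded by one and the same character $a$ and $w$ never occurred as a prefix of $y$, then $aw$ would satisfy $\EndPos(aw)=\EndPos(w)$, contradicting maximality of $w$. Hence either $w$ is a prefix of $y$ or there are two distinct characters $a\neq a'$ with $aw,a'w\in\Substr(y)$, and then $\EndPos(aw)$ and $\EndPos(a'w)$ are nonempty, disjoint, and strictly contained in $\EndPos(w)$. I would argue these two sets cannot descend below a common child of the node, since the longest string of such a child would have $w$ as a suffix and would therefore force a single fixed character before $w$, contradicting the two distinct preceding characters. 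Thus the node has at least two children, and a tree with at most $n$ leaves in which every internal node branches has at most $n-1$ internal nodes, giving at most $2n-1$ nodes overall (the hypothesis $n>2$ only serves to rule out the degenerate short strings). I expect this left-maximality/branching step to be the main obstacle, as it is where the combinatorics of the equivalence classes must be pinned down precisely.

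For the edge count I would use the spanning-tree decomposition of $\Ed$. Call an edge $(\Reqc{x},b,\Reqc{xb})$ \emph{primary} when $xb=\Rrep{xb}$ is the longest string of its class; the primary edges form a spanning tree rooted at the source, so there are at most (number of nodes)$\,-1\le 2n-2$ of them. For each remaining \emph{secondary} edge I would read off a distinct proper suffix of $y$ by following primary edges from its head down to the sink (the class of $y$, with $\EndPos=\{n\}$) and taking the whole string spelled along the way; the resulting map from secondary edges to proper suffixes of $y$ is injective, because the secondary edge in question is the last non-primary edge encountered on this canonical path. This injectivity caps the number of secondary edges at $n-2$, and summing the two contributions yields at most $(2n-2)+(n-2)=3n-4$ edges, as claimed. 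The only delicate point on this side is verifying the injectivity of the secondary-edge-to-suffix correspondence, which is routine once the canonical path to the sink is set up correctly.
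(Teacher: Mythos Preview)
The paper does not prove this theorem; it is quoted from Blumer et al.\ without argument. Your outline is the classical one, but the node bound contains a genuine gap.

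You claim that every internal node of the suffix-link tree branches, yet you establish branching only in the case where the longest representative $w=\Rrep{w}$ admits two distinct left extensions $aw,a'w\in\Substr(y)$. The other disjunct of your left-maximality dichotomy---``$w$ is a prefix of $y$''---is silently dropped, and it does \emph{not} imply branching. Take $y=\mathtt{aba\$}$: the node $\Reqc{\mathtt{a}}$ has $\EndPos(\mathtt{a})=\{1,3\}$ and $\mathtt{a}=\Rrep{\mathtt{a}}$ (it is a prefix), but its sole child in the suffix-link tree is $\Reqc{\mathtt{ba}}=\Reqc{\mathtt{aba}}$ with $\EndPos=\{3\}$; there is no node with $\EndPos=\{1\}$. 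The repair is to note that every leaf \emph{and} every non-branching internal node of the suffix-link tree has a prefix of $y$ as its longest representative, so these together number at most $n$; the genuinely branching internal nodes are at most $(\text{leaves})-1$, and combining the inequalities gives $|\Vd|\le 2n-1$.

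Your edge argument is also garbled as stated. Primary edges form a spanning tree rooted at the \emph{source}, so ``following primary edges from its head down to the sink'' is generally impossible---there is no primary path from an arbitrary node to the sink. The standard construction runs the other way: for a secondary edge $e=(u,b,v)$ take the primary path from the source to $u$, then $e$, then any path from $v$ to the sink; now $e$ is the \emph{first} non-primary edge on this path, which is what yields injectivity into the non-empty suffixes other than $y$. That gives at most $n-1$ secondary edges and hence $3n-3$ total; reaching $3n-4$ requires one further observation, which your outline does not supply (and your claimed cap of $n-2$ secondary edges is not justified by the argument you sketch).
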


Minimization of $\STrie(y)$ to $\DAWG(y)$ can be done by
merging isomorphic subtrees of $\STrie(y)$ which are rooted
at nodes connected by a chain of suffix links of $\STrie(y)$.
Since the substrings represented by these merged nodes
end at the same positions in $y$,
each node of $\DAWG(y)$ forms an equivalence class $\Reqc{x}$.
We will make an extensive use of this property
in our $O(n)$-time construction algorithm
for $\DAWG(y)$ over an integer alphabet.

\begin{figure}[tb]
 \centerline{\includegraphics[scale=0.45]{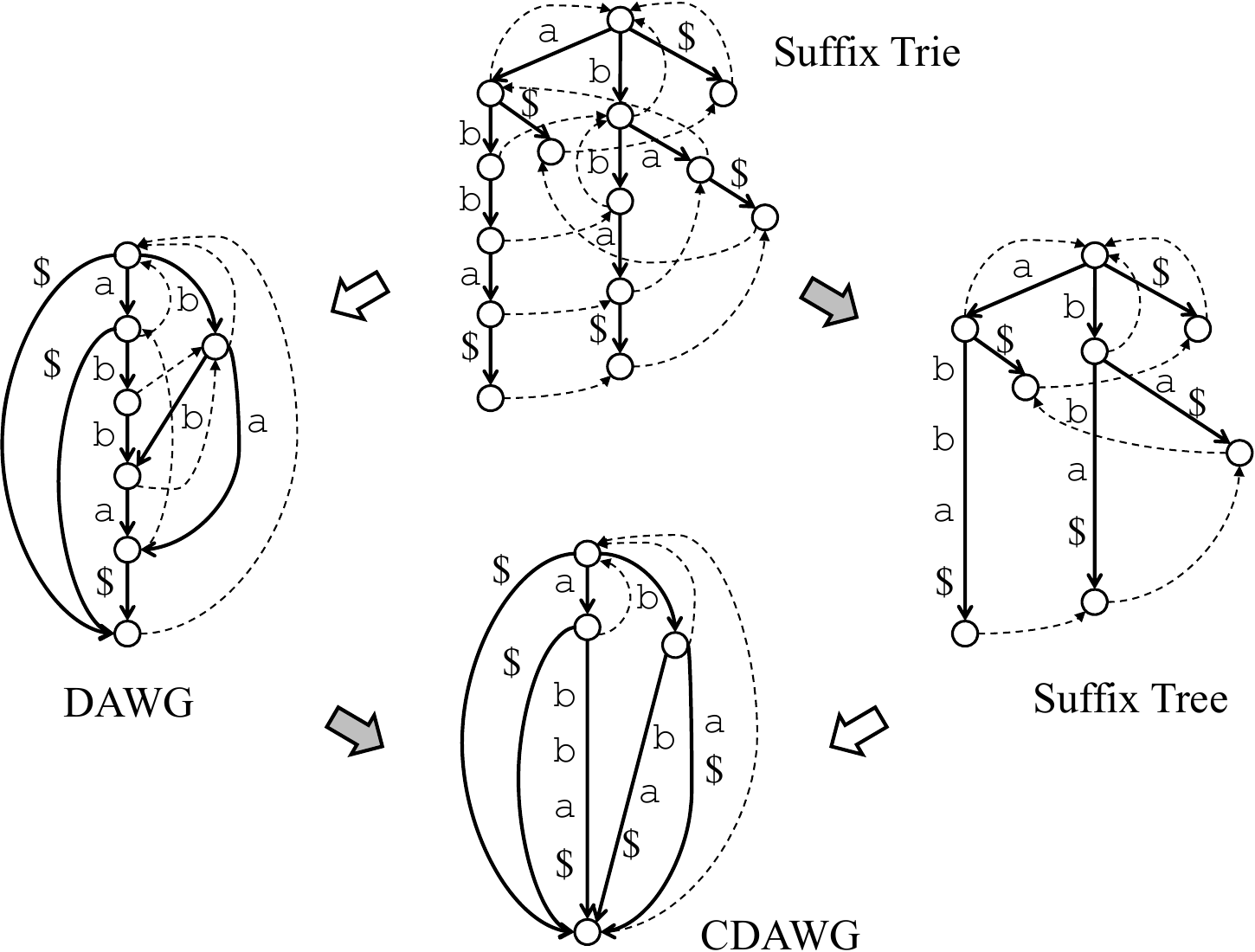}}
 \caption{The suffix trie, the suffix tree, the DAWG, and the CDAWG
  for string $y=\mathtt{abba\$}$. The solid arcs represent edges,
  and the broken arcs represent suffix links.
  The DAWG can be obtained by merging isomorphic subtrees of the suffix trie,
  while the suffix tree can be obtained by performing path-compressions on the suffix trie.
  The CDAWG can be obtained by merging isomorphic subtrees of the suffix tree,
  or by performing path-compressions on the DAWG.
  }
 \label{fig:trie_tree_DAWG}
\end{figure}

\subsection{Minimal Absent Words}

A string $x$ is said to be an {\em absent word} of another string $y$
if $x \notin \Substr(y)$.
An absent word $x$ of $y$ is said to be
a {\em minimal absent word} (\emph{MAW}) of $y$
if $\Substr(x) \setminus \{x\} \subset  \Substr(y)$.
The set of all MAWs of $y$ is denoted by $\MAW(y)$.
For example, if
$\Sigma = \{\mathtt{a}, \mathtt{b}, \mathtt{c}\}$
and $y = \mathtt{abaab}$, then $\MAW(y) = \{\mathtt{aaa}, \mathtt{aaba}, \mathtt{bab}, \mathtt{bb}, \mathtt{c}\}$.

\begin{lemma}[\cite{MignosiRS02}] \label{lem:MAW_bounds}
  For any string $y \in \Sigma^*$,
  $\sigma \leq |\MAW(y)| \leq (\sigma_y - 1)(|y|-1) + \sigma$,
  where $\sigma = |\Sigma|$ and $\sigma_y$ is the number of distinct
  characters occurring in $y$.
  These bounds are tight.
\end{lemma}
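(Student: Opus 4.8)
The plan is to record a workable description of MAWs and then prove the two inequalities separately, closing with extremal strings that witness tightness; throughout write $n=|y|$ and $\sigma=|\Sigma|$. First I would note the following characterization of $\MAW(y)$. A single character $c$ is a MAW iff $c\notin\Substr(y)$, so there are exactly $\sigma-\sigma_y$ MAWs of length $1$. For a candidate $x$ with $|x|\ge 2$, write $a=x[1]$, $b=x[|x|]$ and $w=x[2..|x|-1]$, so that $x=awb$; since every proper factor of $awb$ is a factor of the prefix $aw$ or of the suffix $wb$, the word $x$ is a MAW iff $aw,wb\in\Substr(y)$ but $awb\notin\Substr(y)$. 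I will use this throughout, and write $r(u)$ for the number of characters $\beta$ with $u\beta\in\Substr(y)$, the number of right extensions of $u$.

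For the lower bound $\sigma\le|\MAW(y)|$, the $\sigma-\sigma_y$ absent characters already give that many MAWs, so it suffices to exhibit one further MAW beginning with each of the $\sigma_y$ characters occurring in $y$; these are automatically distinct from one another (distinct first characters) and from the length-$1$ MAWs. Fix an occurring character $a$. If $ab\notin\Substr(y)$ for some occurring character $b$, then $ab$ is a MAW by the characterization. Otherwise $ab\in\Substr(y)$ for every occurring $b$; taking $b=a$ gives $aa\in\Substr(y)$, and if $k\ge 2$ is maximal with $a^{k}\in\Substr(y)$ (finite since $y$ is finite), then $a^{k+1}\notin\Substr(y)$ while every proper factor $a^{j}$, $j\le k$, lies in $\Substr(y)$, so $a^{k+1}$ is a MAW beginning with $a$. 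This produces $\sigma_y$ MAWs, one per occurring character, and hence $|\MAW(y)|\ge(\sigma-\sigma_y)+\sigma_y=\sigma$.

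For the upper bound, by the characterization every MAW of length $\ge 2$ is determined by its triple $(a,w,b)$, and for a fixed prefix $z=aw\in\Substr(y)\setminus\{\varepsilon\}$ the admissible last characters $b$ are exactly those with $wb\in\Substr(y)$ but $awb\notin\Substr(y)$; since $awb\in\Substr(y)$ forces $wb\in\Substr(y)$, their number is $r(w)-r(aw)=r(z[2..|z|])-r(z)\ge 0$. Summing over all nonempty factors gives the identity
\begin{equation*}
|\{x\in\MAW(y): |x|\ge 2\}| \;=\; \sum_{z\in\Substr(y)\setminus\{\varepsilon\}}\bigl(r(z[2..|z|])-r(z)\bigr).
\end{equation*}
A crude estimate already bounds the right-hand side by $O(n\sigma)$, but the tight constant requires more: one would group the terms by $|z|$ and by the suffix-link target $z[2..|z|]$, telescope along suffix-link chains, and control the sum using $r(z[2..|z|])\ge r(z)$ together with the fact that $y$ has only finitely many factors of each length. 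I expect pinning down the exact bound $(\sigma_y-1)(n-1)$ on the length-$\ge 2$ MAWs to be the main obstacle, since the loose bound is immediate while the refinement to the optimal constant demands a careful amortized analysis of this sum over the suffix-link structure (equivalently, a count of the right extensions of $w$ that are not inherited by its left extension $aw$, which are exactly the edges of $\DAWG(y)$ not passed down along suffix links). Adding back the $\sigma-\sigma_y$ length-$1$ MAWs then yields $|\MAW(y)|\le(\sigma_y-1)(n-1)+\sigma$.

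Finally I would verify tightness with two families. Taking $y=a^{n}$ (a single distinct character, $\sigma_y=1$) leaves $\sigma-1$ absent characters and the single MAW $a^{n+1}$, for a total of $\sigma=(\sigma_y-1)(n-1)+\sigma$, matching the lower bound. For the upper bound, take $y=c_1\cdots c_n$ with pairwise distinct characters (so $\sigma_y=n$): every factor of length $\ge 2$ occurs exactly once and extends uniquely on each side, so there is no MAW of length $\ge 3$, while $awb$ with $w=\varepsilon$ is a MAW for every ordered pair $(a,b)$ of occurring characters that is not one of the $n-1$ adjacent pairs $c_ic_{i+1}$; together with the $\sigma-\sigma_y$ absent characters this gives $\sigma_y^{2}-(n-1)+(\sigma-\sigma_y)=(\sigma_y-1)(n-1)+\sigma$ MAWs, meeting the upper bound.
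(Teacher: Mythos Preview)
The paper does not give its own proof of this lemma; it is quoted from Mignosi, Restivo and Sciortino, so there is nothing in the paper to compare your argument against and I can only assess correctness.

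Your lower-bound argument and both tightness examples are correct.

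The upper bound, however, has a genuine gap. You set up the identity
\[
\bigl|\{x\in\MAW(y): |x|\ge 2\}\bigr| \;=\; \sum_{z\in\Substr(y)\setminus\{\varepsilon\}}\bigl(r(z[2..|z|])-r(z)\bigr),
\]
but then explicitly stop, calling the extraction of the constant $(\sigma_y-1)(n-1)$ ``the main obstacle'' and only gesturing at a suffix-link amortisation. That is a plan, not a proof. The gap can be closed more directly by an injection that bypasses the telescoping entirely: send each MAW $awb$ of length $\ge 2$ to the pair $(p,b)$, where $p$ is the ending position of the \emph{rightmost} occurrence of $aw$ in $y$. If two MAWs $awb$ and $a'w'b$ share the same image with, say, $|aw|<|a'w'|$, then $aw$ and $a'w'$ are both suffixes of $y[1..p]$, so $aw$ is a proper suffix of $a'w'$ and hence (having length at most $|w'|$) a suffix of $w'$; thus $awb$ is a factor of $w'b\in\Substr(y)$, contradicting that $awb$ is absent. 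So the map is injective. For every image $(p,b)$ one has $1\le p\le n$, the character $b$ occurs in $y$, and if $p<n$ then $b\neq y[p{+}1]$ (otherwise $awb$ would occur ending at position $p{+}1$). The number of such pairs is at most $(n-1)(\sigma_y-1)+\sigma_y$; adding the $\sigma-\sigma_y$ MAWs of length $1$ yields $(\sigma_y-1)(n-1)+\sigma$.
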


The next lemma follows from the definition of MAWs.
\begin{lemma} \label{lem:maw}
  Let $y$ be any string.
  For any characters $a, b \in \Sigma$
  and string $x \in \Sigma^*$,
  $axb \in \MAW(y)$ iff $axb \notin \Substr(y)$,
  $ax \in \Substr(y)$, and $xb \in \Substr(y)$.
%For any strings $x, y$ with $2 \leq |x| \leq |y|$,
%$x\in \MAW(y)$ iff
%$x \notin \Substr(y)$, $x[2..|x|] \in \Substr(y)$,
%and $x[1..|x|-1] \in \Substr(y)$.
\end{lemma}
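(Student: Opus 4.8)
The plan is to reduce the minimality condition in the definition of a MAW to a condition involving only the two longest proper substrings of $axb$, namely $ax$ and $xb$. The structural fact I would rely on is that $\Substr(y)$ is closed under taking substrings: if $u \in \Substr(y)$ and $v$ is a substring of $u$, then $v \in \Substr(y)$. This is immediate from transitivity of the substring relation. First I would simply unfold the definition: $axb \in \MAW(y)$ means exactly that $axb \notin \Substr(y)$ together with $\Substr(axb) \setminus \{axb\} \subseteq \Substr(y)$, i.e. every proper substring of $axb$ belongs to $\Substr(y)$. Since $ax$ and $xb$ are themselves proper substrings of $axb$, the forward implication is then trivial: minimality forces $ax, xb \in \Substr(y)$, and absence gives $axb \notin \Substr(y)$.

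For the backward implication I would assume $axb \notin \Substr(y)$, $ax \in \Substr(y)$, and $xb \in \Substr(y)$, and show that every proper substring of $axb$ lies in $\Substr(y)$. The decisive observation is that any proper substring $v$ of $axb$ must omit at least one of the two end characters: writing $v$ as the factor of $axb$ occupying positions $i$ through $j$ with $(i,j) \neq (1,|axb|)$, either $i > 1$, in which case $v$ is a substring of the length-$(|axb|-1)$ suffix $xb$, or $j < |axb|$, in which case $v$ is a substring of the length-$(|axb|-1)$ prefix $ax$. By closure of $\Substr(y)$ under substrings, in either case $v \in \Substr(y)$. This yields $\Substr(axb) \setminus \{axb\} \subseteq \Substr(y)$, which together with $axb \notin \Substr(y)$ gives $axb \in \MAW(y)$.

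The only point requiring care — the ``main obstacle,'' though it is a mild one — is the case analysis showing that every proper substring omits an endpoint and is therefore captured by either $ax$ or $xb$. The degenerate case $x = \varepsilon$ (so that $axb = ab$, $ax = a$, and $xb = b$) and the empty substring are handled by the same argument, since $\varepsilon \in \Substr(y)$ always holds; no separate treatment is needed.
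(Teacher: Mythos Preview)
Your proof is correct and is exactly the natural unfolding of the definition; the paper itself does not give a proof at all, stating only that the lemma ``follows from the definition of MAWs.'' Your argument supplies precisely the details the paper omits, and the key observation---that every proper substring of $axb$ is a substring of $ax$ or of $xb$---is the only point with any content.
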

By Lemma~\ref{lem:maw},
we can encode each MAW $axb$ of $y$ in $O(1)$ space by $(i, j, b)$,
where $ax = y[i..j]$.

%In Section~\ref{sec:MAW},
%we will show an optimal $O(n + |\MAW(y)|)$-time algorithm
%to compute $\MAW(y)$ for a given string $y$ of length $n$
%over an integer alphabet.
%This algorithm uses Lemma~\ref{lem:maw} and $\DAWG(y)$.

\section{Folklore Algorithm to Construct $\DAWG(\rev{y})$ from $\ST(y)$ in $O(n)$ Time for Integer Alphabet}
\label{sec:folklore_algo}

For the suffix tree $\ST(y)=(\Vs,\Es)$,
we define the set $\WL$ of edges called \emph{Weiner links} of $\ST(y)$ by
$$\WL = \{(x, a, ax\beta) \mid  a \in \Sigma, \beta \in \Sigma^*, x \in \Vs, \Lrep{ax} = ax\beta \in \Vs\}.$$
The following lemma on the relation between Weiner links and DAWG is well known.

\begin{lemma}[\cite{Blumer85,Chen85}]
\label{lem:revDAWG}
For the suffix tree $\ST(y)=(\Vs,\Es)$ of a string $y$,
$(\mathrm{V}, \mathrm{E})$ is the DAWG of $\rev{y}$,
where $\mathrm{V} = \{ \Leqc{x} \mid x \in \Vs \}$ and
$\mathrm{E} = \{ (\Leqc{x}, a, \Leqc{ax\beta}) \mid (x, a, ax\beta) \in \WL \}$,
and $\mathrm{L} = \{(\Leqc{xb\beta}, b, \Leqc{x}) \mid b\in \Sigma, x,\beta \in \Sigma^*, (x, b\beta, xb\beta) \in \Es \}$ are the suffix links of the DAWG $(\mathrm{V}, \mathrm{E})$ of $\rev{y}$.
\end{lemma}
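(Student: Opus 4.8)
The plan is to exploit the reversal bijection between substrings of $y$ and of $\rev{y}$ and to verify the three claims (nodes, edges, suffix links) one at a time against the definition of $\DAWG(\rev{y})$. Writing $n = |y|$, a direct index computation shows that $z \in \Substr(\rev{y})$ iff $\rev{z} \in \Substr(y)$, and moreover that $i \in \EndPos_{\rev{y}}(z)$ iff $n+1-i \in \BegPos_{y}(\rev{z})$; hence $i \mapsto n+1-i$ is a bijection between $\EndPos_{\rev{y}}(z)$ and $\BegPos_{y}(\rev{z})$. Consequently $z \Reqr z'$ in $\rev{y}$ iff $\rev{z} \Leqr \rev{z'}$ in $y$, so $\Reqc{z} \mapsto \Leqc{\rev{z}}$ is a well-defined bijection between the $\Reqr$-classes of $\rev{y}$ and the $\Leqr$-classes of $y$. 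Since every $\Leqr$-class contains its longest element $\Lrep{x} \in \Vs$, this identifies $\Vd$ of $\rev{y}$ with $\mathrm{V} = \{\Leqc{x} \mid x \in \Vs\}$, settling the node claim.

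For the edges, I would show that the Weiner links of $\ST(y)$ are exactly the reversals of the DAWG edges of $\rev{y}$. A DAWG edge $(\Reqc{z}, b, \Reqc{zb})$ of $\rev{y}$ reverses to a pair $x, bx \in \Substr(y)$ with $x = \rev{z}$, so it suffices to match transitions of the form $(\Leqc{x}, a, \Leqc{ax})$ with $x, ax \in \Substr(y)$ against $\mathrm{E}$. The key observation is that the set of left-extension characters of a string depends only on its set of beginning positions: since $\BegPos(x) = \BegPos(\Lrep{x})$, we have $a\Lrep{x} \in \Substr(y)$ iff $ax \in \Substr(y)$, and $\BegPos(a\Lrep{x}) = \BegPos(ax)$, whence $\Leqc{ax} = \Leqc{a\Lrep{x}}$. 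Therefore each DAWG transition out of a class, read from its longest representative $\Lrep{x} \in \Vs$, corresponds to the unique Weiner link $(\Lrep{x}, a, \Lrep{ax}) \in \WL$ and hence to the $\mathrm{E}$-edge $(\Leqc{x}, a, \Leqc{ax\beta})$; conversely each Weiner link yields a valid DAWG transition. Functionality of Weiner links on each (node, character) pair matches that of DAWG transitions, giving a bijection on edges.

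For the suffix links, I would first record the structural fact that the strings in a single $\Leqr$-class are exactly those read along one suffix-tree edge: for $(x, b\beta, xb\beta) \in \Es$, the class $\Leqc{xb\beta}$ consists of the implicit and explicit nodes from $xb$ up to $xb\beta$, so $\Shortest(\Leqc{xb\beta}) = xb$. Now a DAWG suffix link $(\Reqc{az}, a, \Reqc{z})$ of $\rev{y}$ is, by the defining condition $\Reqc{az} \neq \Reqc{z}$, taken at the shortest member $az$ of its class; reversing, this corresponds to the shortest member $wa$ (with $w = \rev{z}$) of a $\Leqr$-class $\Leqc{wa}$ satisfying $\Leqc{wa} \neq \Leqc{w}$. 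By the structural fact, $wa = xb$ for the unique suffix-tree edge $(x, b\beta, xb\beta)$ carrying this class, so the link reverses to $(\Leqc{xb\beta}, b, \Leqc{x})$, which is precisely the generic element of $\mathrm{L}$. This shows $\mathrm{L}$ is exactly the set of DAWG suffix links.

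I expect the suffix-link step to be the main obstacle: it requires carefully matching the shortest-representative condition $\Reqc{az} \neq \Reqc{z}$ in the DAWG definition with the ``parent plus first edge-label character'' description of the shortest element of a $\Leqr$-class, and checking that this description holds for every non-root class (equivalently, that the suffix-tree edges partition all such classes). The edge step's only delicate point is the invariance of left-extendability and of the resulting beginning-position set under passing from $x$ to its longest representative $\Lrep{x}$, which follows from $\BegPos(x) = \BegPos(\Lrep{x})$.
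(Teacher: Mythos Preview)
The paper does not give its own proof of this lemma; it is stated with citations to \cite{Blumer85,Chen85} as a known result, so there is nothing in the paper to compare against. Your argument is correct and supplies exactly the details the paper omits: the index bijection $i\mapsto n+1-i$ identifying $\EndPos_{\rev{y}}(z)$ with $\BegPos_y(\rev z)$ yields the correspondence between $\Reqr$-classes of $\rev y$ and $\Leqr$-classes of $y$; the edge step is sound because left-extendability and the resulting $\BegPos$-set are invariant under replacing $x$ by $\Lrep{x}$; and the suffix-link step correctly identifies $\Shortest(\Leqc{xb\beta})=xb$ from the suffix-tree edge $(x,b\beta,xb\beta)$, matching the $\Reqc{az}\neq\Reqc{z}$ condition in the DAWG suffix-link definition. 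The injectivity of the Weiner-link-to-edge map (two Weiner links with the same source class and label must have equal sources, since sources lie in $\Vs$ and are longest in their classes) and of the suffix-tree-edge-to-suffix-link map (each non-root node of $\ST(y)$ has a unique incoming edge) complete the bijections you assert.
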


Each Weiner-link $(x, a, ax\beta) \in \WL$ is called \emph{explicit}
\sinote*{changed}{%
  if $\beta = \varepsilon$ (namely, $ax \in \Vs$ or equivalently $\Lrep{ax} = ax$),
}%
and it is called \emph{implicit} otherwise.
By definition, the explicit Weiner links are identical to the reversed suffix links.
Figure~\ref{fig:STwithSLWL} shows a concrete example of suffix links and Weiner links.

\begin{figure}[tb]
  \centerline{
  \includegraphics[scale=0.45,clip]{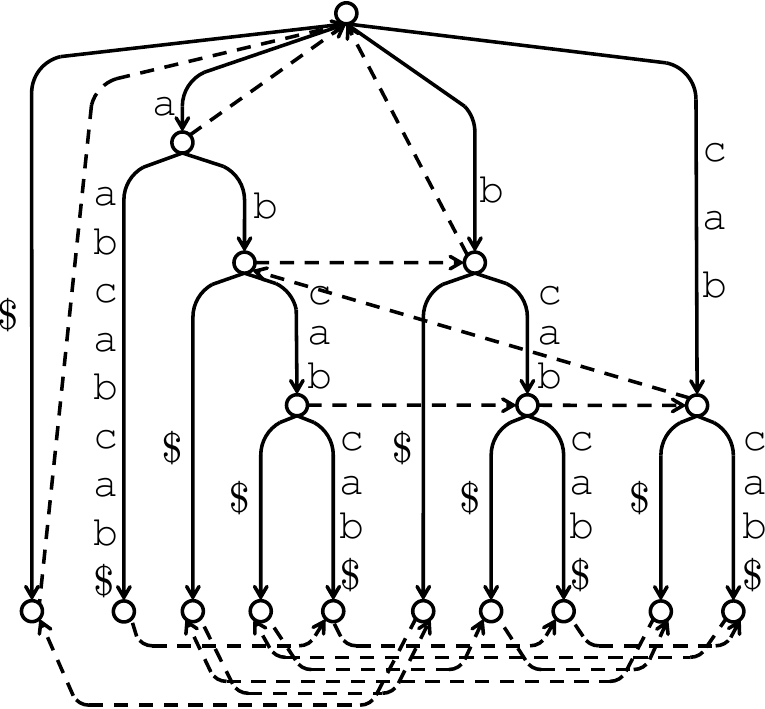}
  \hfill
  \includegraphics[scale=0.45,clip]{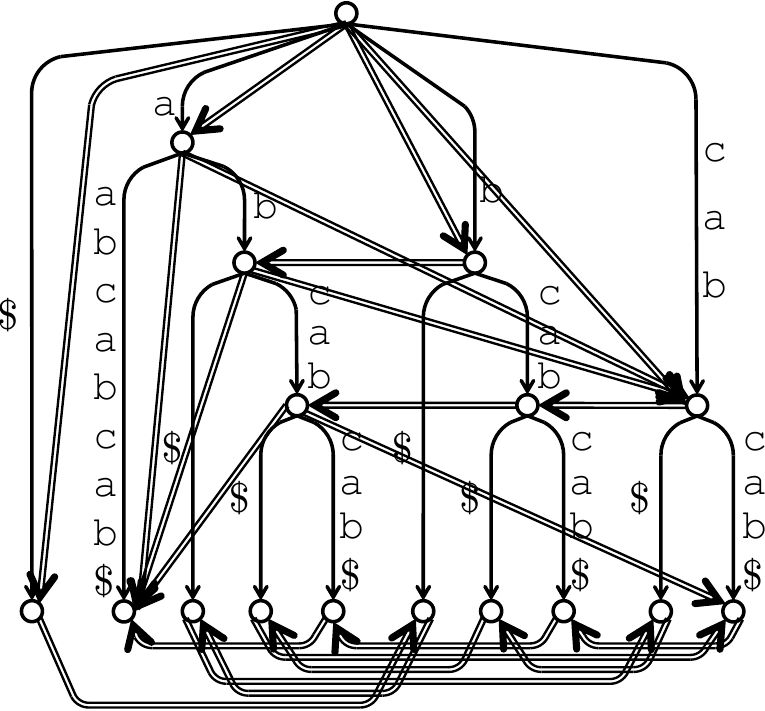}
  }
  \caption{
  (Left): Illustration of the suffix tree $\ST(y)$ for $y = \mathtt{aabcabcab}\$$ with its suffix links.
  (Right): Illustration of the suffix tree $\ST(y)$ for $y$ with its Weiner links.
  }
  \label{fig:STwithSLWL}
\end{figure}

By the above argument, what is left is to compute the implicit Weiner links.
In this section, we describe a folklore algorithm for computing the Weiner links
in linear time, from a given suffix tree $\ST(y)$ augmented with the suffix links.

Let $\parent^1(x)$ be $\parent(x)$ and $\parent^i(x)$ be the parent of $\parent^{i-1}(x)$ for $i > 1$ in the suffix tree $\ST(y)$.
Then the following properties hold.
\begin{observation}[\cite{DBLP:journals/corr/abs-1302-3347}]
\label{obs:WeinerLink}
For each implicit Weiner link $(x, a, ax\beta) \in \WL$ of $\ST(y)$ with $\beta \in \Sigma^+$,
there exists the explicit Weiner link $(x\beta, a, ax\beta)$.
For each explicit Weiner link $(w,a,aw)$,
if $k$ is the smallest integer such that $\parent(aw) = a \parent^k(w)$
(i.e. $k$ is the smallest integer such that $(\parent^k(w), a, \parent(aw))$ is an explicit Weiner link),
then $(\parent^i(w), a, aw)$ is an implicit Weiner link for each $1 \leq i < k$.
See Figure~\ref{fig:WLConst} and Figure~\ref{fig:WL_concrete}.
\end{observation}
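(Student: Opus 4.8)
The plan is to reduce both claims to three elementary facts about the left-maximal representatives $\Lrep{\cdot}$ and then to follow the targets of the Weiner links along the ancestor path of $w$ in $\ST(y)$. The three facts I would record first are:
(i) \emph{prefix monotonicity}: if $u$ is a prefix of $u'$ then $\Lrep{u}$ is a prefix of $\Lrep{u'}$ --- proved by comparing $|u'|$ with $|\Lrep{u}|$ and using that every occurrence of $u$ is forced to continue as $\Lrep{u}$, so that either $\BegPos(u)=\BegPos(u')$ (when $|u'|\le|\Lrep{u}|$), whence $\Lrep{u}=\Lrep{u'}$, or $\Lrep{u}$ is already a prefix of $u'$ and hence of $\Lrep{u'}$;
(ii) \emph{class-length characterization}: for a node $N\in\Vs$ with $q=\parent(N)$, one has $\Lrep{u}=N$ iff $u$ is a prefix of $N$ with $|q|<|u|\le|N|$ --- this is exactly the statement that the begin-position set is constant as one descends the edge entering $N$ and strictly shrinks at $q$;
(iii) \emph{inheritance of node-ness}: if $av\in\Vs$ then $v\in\Vs$ --- if $av$ has two distinct right extensions then so does $v$ after deleting the leading $a$ from the witnessing occurrences, and if $av$ is a leaf (a suffix of $y$, by the terminal $\mathtt{\$}$) then the shorter suffix $v$ is a leaf as well.

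For the first claim, let $(x,a,ax\beta)\in\WL$ be implicit, so $ax\beta=\Lrep{ax}\in\Vs$ and $\beta\in\Sigma^+$. By fact (iii) applied to the node $ax\beta$ we get $x\beta\in\Vs$. Since $ax\beta$ is a node, $\Lrep{a\cdot x\beta}=\Lrep{ax\beta}=ax\beta$, so the Weiner link leaving $x\beta$ under label $a$ is precisely $(x\beta,a,ax\beta)$ and is explicit, as its extension part is empty. This is the asserted explicit Weiner link.

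For the second claim, fix the explicit Weiner link $(w,a,aw)$ and put $p=\parent(aw)$. For every $i\ge 1$ the ancestor $\parent^i(w)$ is a prefix of $w$, hence $a\parent^i(w)$ is a prefix of $aw$ and lies in $\Substr(y)$, giving a Weiner link $(\parent^i(w),a,\Lrep{a\parent^i(w)})$ whose target is, by fact (i), a prefix of $\Lrep{aw}=aw$, i.e.\ an ancestor of $aw$. Applying fact (ii) to the node $aw$ with parent $p$, this target equals $aw$ exactly when $|a\parent^i(w)|>|p|$, i.e.\ when $|\parent^i(w)|\ge|p|$. To locate $k$, note that $p$ is a nonempty proper prefix of $aw$ (when $p\ne\varepsilon$), so $p=ap'$; by fact (iii) $p'\in\Vs$, and since $p'$ is a proper prefix of $w$ it equals $\parent^k(w)$ for a unique $k\ge1$, which is the smallest $i$ with $a\parent^i(w)=p$ (equivalently, the smallest $i$ for which $(\parent^i(w),a,p)$ is explicit). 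As $|\parent^k(w)|=|p|-1$, every proper descendant $\parent^i(w)$ with $1\le i<k$ satisfies $|\parent^i(w)|>|p|-1$, hence $|\parent^i(w)|\ge|p|$ and its target is $aw$; and since $\parent^i(w)$ is a \emph{proper} prefix of $w$, the link $(\parent^i(w),a,aw)$ has nonempty extension part and is therefore implicit, as claimed.

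The step I expect to require the most care is fact (ii) together with the translation of the length threshold $|a\parent^i(w)|\le|p|$ into the structural identity $a\parent^k(w)=\parent(aw)$ that defines $k$. Some attention is also needed for the boundary case $p=\varepsilon$, in which $aw=\Lrep{a}$ hangs directly off the root: then no explicit Weiner link lies strictly above $aw$, the $a$-Weiner links of \emph{all} ancestors of $w$ (up to and including the root) point to $aw$, and I would record this case separately, observing that it merely extends the range $1\le i<k$ up to the root while leaving the conclusion unchanged.
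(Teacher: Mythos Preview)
The paper does not supply its own proof of this observation; it is stated with a citation to~\cite{DBLP:journals/corr/abs-1302-3347} and then used immediately in the linear-time Weiner-link computation. So there is no ``paper proof'' to compare against.

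Your argument is correct and self-contained. The three facts you isolate are exactly the right ingredients: fact~(iii) is the standard suffix-link closure property of $\Vs$, fact~(ii) is the characterization of an edge in $\ST(y)$ as a maximal run of strings sharing the same $\BegPos$, and fact~(i) is the monotonicity of $\Lrep{\cdot}$ under prefix extension. The first claim falls out of~(iii) in one line, and for the second claim your length-threshold reduction via~(ii) is clean: the target of the $a$-labeled Weiner link from $\parent^i(w)$ is $aw$ precisely when $|\parent^i(w)|\ge|\parent(aw)|$, and identifying $k$ as the unique index with $\parent^k(w)=p'$ (where $\parent(aw)=ap'$) pins down the range $1\le i<k$. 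Your treatment of the boundary case $\parent(aw)=\varepsilon$ is appropriate; in that situation the statement's hypothesis ``if $k$ is the smallest integer such that\dots'' is vacuous, and your remark that all proper ancestors of $w$ then carry implicit $a$-links to $aw$ is the right extension.

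One small stylistic point: when you write ``$p'=\parent^k(w)$ for a unique $k\ge1$'', you are using that the ancestors of $w$ in $\ST(y)$ are exactly the elements of $\Vs$ that are prefixes of $w$; this is immediate from the tree structure, but worth stating once since it is the bridge between ``$p'\in\Vs$ is a proper prefix of $w$'' and ``$p'$ is some iterated parent of $w$''.
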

From Observation~\ref{obs:WeinerLink},
every implicit Weiner link of $\ST(y)$ is of the form $(\parent^i(w), a, aw)$ described above
for some explicit Weiner link $(w,a,aw)$.
Thus we can obtain all implicit Weiner links in linear time by computing $(\parent^i(w), a, aw)$ for
all explicit Weiner links $(w,a,aw)$ which are the reversed edges of $\Ls$ of $\ST(y)$.
This computation takes $O(n)$ total time.

\begin{figure}
  \centerline{
    \includegraphics[scale=0.6]{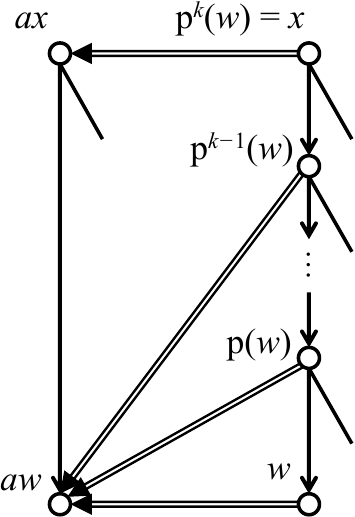}
    \hspace{80pt}
    \includegraphics[scale=0.6]{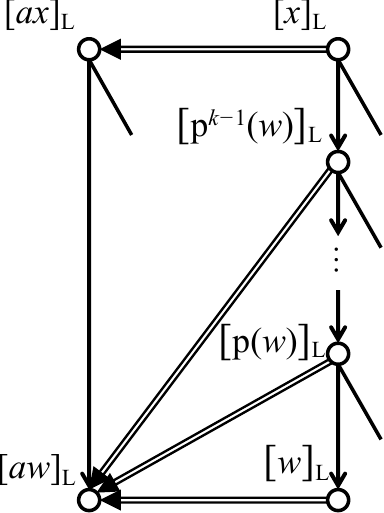}
  }
  \caption{
  (Left) Illustration for implicit Weiner links and explicit Weiner links.
    The double-lined arrows represent Weiner links,
    the single-lined arrows represent suffix tree edges,
    and the white circles represent suffix tree nodes.
    (Right) Illustration for a relation between implicit Weiner links and
    their corresponding equivalence classes w.r.t. $\Leqr$.
  }
  \label{fig:WLConst}
\end{figure}

\begin{figure}
  \begin{center}
    \includegraphics[width=0.9\textwidth]{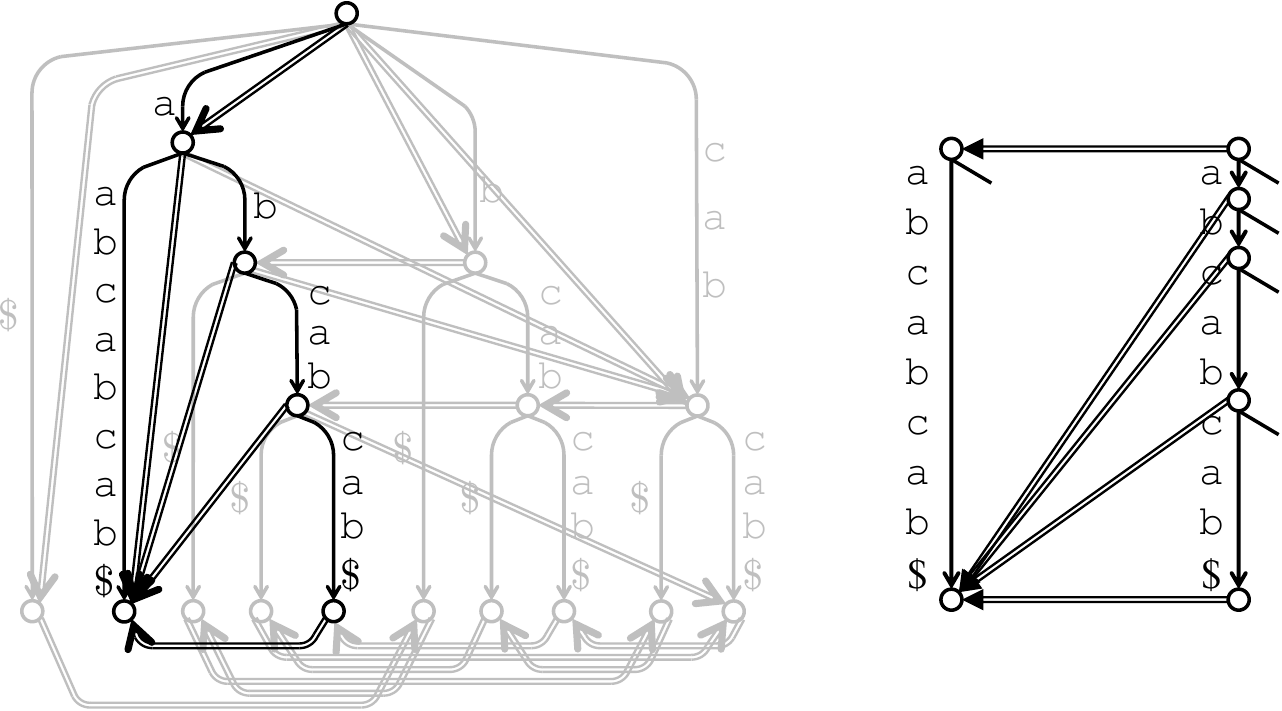}
  \end{center}
  \caption{
  Illustration of some Weiner links in $ \ST(\mathtt{aabcabcab}\$)$
  }
  \label{fig:WL_concrete}
\end{figure}

\begin{theorem}\label{theo:WL_const}
Given $\ST(y)$ and its suffix links $\Ls$ of string $y$ of length $n$ over an integer alphabet,
the Weiner links $\WL$ of $\ST(y)$ can be computed in $O(n)$ time.
\end{theorem}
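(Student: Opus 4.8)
The plan is to compute the two kinds of Weiner links separately. The explicit Weiner links are, by definition, exactly the reversed edges of the suffix links $\Ls$, so they are immediately available from the input in $O(n)$ time, and there are $O(n)$ of them (one per suffix link, i.e.\ one per non-root node of $\ST(y)$). It remains to generate every implicit Weiner link. For this I would use Observation~\ref{obs:WeinerLink}: process each explicit Weiner link $(w,a,aw)$ in turn and walk upward from $w$ along parent pointers, emitting the implicit Weiner link $(\parent^i(w),a,aw)$ at each ancestor $\parent^i(w)$, and halting at the first ancestor that itself carries an explicit Weiner link labelled $a$ (equivalently, the first $\parent^k(w)$ with $\parent(aw)=a\,\parent^k(w)$). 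Correctness is precisely Observation~\ref{obs:WeinerLink}: the emitted triples are exactly the implicit Weiner links, and since that observation pairs each implicit Weiner link $(x,a,ax\beta)$ with the \emph{unique} explicit Weiner link $(x\beta,a,ax\beta)$, every implicit Weiner link is produced once and only once.

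The main obstacle is the running time, since a single upward walk could a priori traverse $\Omega(n)$ ancestors, threatening an $O(n^2)$ bound. The key is to charge the steps of the walks against the output. By Lemma~\ref{lem:revDAWG} the set $\WL$ is in one-to-one correspondence with the edge set of $\DAWG(\rev{y})$ (distinct suffix-tree nodes are longest representatives of distinct $\Leqr$-classes, so distinct Weiner links yield distinct DAWG edges), and by the size bound of Blumer et al.~\cite{Blumer85} this edge set has at most $3n-4$ elements. Hence $|\WL|=O(n)$, so in particular the number of implicit Weiner links is $O(n)$. Now observe that each iteration of an upward walk either emits a distinct implicit Weiner link or is the single terminating check that reaches an ancestor carrying an explicit $a$-labelled link. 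Summing over all explicit Weiner links, the total number of iterations is therefore at most (the number of implicit Weiner links) $+$ (the number of explicit Weiner links) $=O(n)$.

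What remains is to perform each terminating test in $O(1)$ time over an integer alphabet, which is the only place the alphabet assumption is used. I would first group the explicit Weiner links by their label: since there are $O(n)$ of them and labels lie in $[1,n^c]$, radix sort orders them by character in $O(n)$ time. I then process the characters one at a time. For a fixed character $a$, let $W_a$ be the set of source nodes of explicit $a$-labelled Weiner links; I mark every node of $W_a$ in a single boolean array indexed by node identifier, run the upward walks for all $w\in W_a$ (each walk now halting exactly when it first meets a marked node), and finally unmark $W_a$. Because each implicit Weiner link is generated exactly once, no node is entered by two distinct walks, so the marking, unmarking, and walking together cost $O(|W_a| + \#\{\text{implicit } a\text{-links}\})$ for each character; summing over all characters gives $O(n)$ total time, which completes the proof.
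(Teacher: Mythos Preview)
Your proposal is correct and follows essentially the same approach as the paper: obtain the explicit Weiner links as the reversed suffix links, then for each explicit link $(w,a,aw)$ climb from $w$ via parent pointers emitting the implicit links $(\parent^i(w),a,aw)$ until the first ancestor already carrying an explicit $a$-link, with total work bounded by $|\WL|=O(n)$ via Lemma~\ref{lem:revDAWG} and the DAWG size bound. The paper's own argument is terser and does not spell out your per-character marking trick; note that an even simpler $O(1)$ termination test is to stop once the current ancestor has string-depth $|\parent(aw)|-1$ (or at the root, when $\parent(aw)=\varepsilon$), which avoids the bucketing altogether.
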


We can obtain $\DAWG(\rev{y})$ from $\ST(y)$ by Lemma~\ref{lem:revDAWG} and Theorem~\ref{theo:WL_const}.
We remark however that the edges of $\DAWG(\rev{y})$
might not be sorted, since the Weiner links $\WL$ of $\ST(y)$ might not be sorted.
Still, we can easily sort all the edges of $\DAWG(\rev{y})$ in $O(n)$ total time
after they are constructed:
First, extract all edges of $\DAWG(\rev{y})$ by a standard traversal on
$\DAWG(\rev{y})$, which takes $O(n)$ time.
Next, radix sort them by their labels,
which takes $O(n)$ time because we assumed an integer alphabet
of polynomial size in $n$.
Finally, re-insert the edges to their respective nodes
in the sorted order.

\begin{theorem}
Given $\ST(y)$ and its suffix links $\Ls$ of string $y$ of length $n$ over an integer alphabet,
the edge-sorted $\DAWG(\rev{y})$ of the reversed string $\rev{y}$ with suffix links can be constructed in $O(n)$ time and space.
\end{theorem}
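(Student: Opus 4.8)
The plan is to combine the two preceding results to obtain the combinatorial structure of $\DAWG(\rev{y})$ and then to restore the sorted order of the out-edges by a single global radix sort. First I would invoke Theorem~\ref{theo:WL_const} to compute all Weiner links $\WL$ of $\ST(y)$ in $O(n)$ time. By Lemma~\ref{lem:revDAWG}, the node set $\{\Leqc{x} \mid x \in \Vs\}$ of $\DAWG(\rev{y})$ is in bijection with the node set $\Vs$ of $\ST(y)$: each suffix-tree node $\Lrep{x}$ is the longest member of its $\Leqr$-class, so distinct suffix-tree nodes lie in distinct classes, and every class has its representative in $\Vs$. Hence $\DAWG(\rev{y})$ has $O(n)$ nodes; its edges are exactly the images of the Weiner links under this identification, and its suffix links are obtained by reversing the suffix-tree edges $\Es$. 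Mapping each Weiner link and each edge of $\Es$ to the corresponding $\Leqr$-class costs $O(1)$ per item once we store, for every suffix-tree node, a pointer to the DAWG node representing its class. This yields the full structure of $\DAWG(\rev{y})$ together with its suffix links in $O(n)$ time and space.

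The only property not yet guaranteed is that the out-edges of each node appear in the order of their first-character labels, since the folklore algorithm emits the Weiner links in an order that need not respect the alphabet ordering. To fix this I would: (i) traverse the constructed graph and extract every edge as a triple consisting of source, label, and target, which costs $O(n)$ time because there are $O(n)$ edges; (ii) radix sort this list by the single-character label; and (iii) empty each node's adjacency list and re-append the edges while scanning the sorted list. Since the list is globally ordered by label, for any fixed source node its out-edges are appended in non-decreasing label order, so every node ends up edge-sorted. Steps (i) and (iii) are clearly linear, and step (ii) is linear as well: a label is an integer in $[1,n^c]$, i.e. a $c$-digit number in base $n$, so the radix sort performs $c = O(1)$ passes over $n$ buckets, for $O(n)$ total time.

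I expect no deep obstacle here, as the heavy machinery is already supplied by Theorem~\ref{theo:WL_const}; the points requiring care are mainly bookkeeping. One is verifying that distributing a globally label-sorted edge list into per-node adjacency lists indeed produces locally sorted order, which holds because in a DAWG the out-edges of a single node carry pairwise distinct labels. The other is verifying that the radix sort runs in linear time, which is exactly where the assumption of an integer alphabet of polynomial size in $n$ is used. I would also note that the correspondence between suffix-tree nodes and DAWG nodes needed in the first paragraph can be established during the same traversal used to extract the edges, so no additional asymptotic cost is incurred.
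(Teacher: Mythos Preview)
Your proposal is correct and follows essentially the same approach as the paper: invoke Theorem~\ref{theo:WL_const} together with Lemma~\ref{lem:revDAWG} to obtain the (unsorted) $\DAWG(\rev{y})$ with its suffix links, then extract all edges, radix sort them by their single-character labels, and re-insert them into their source nodes in sorted order. Your additional remarks on why a global label sort yields locally sorted adjacency lists and why the integer-alphabet assumption makes the radix sort linear are exactly the right justifications, and match what the paper does.
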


In some applications such as bidirectional pattern searches,
it is preferable that the in-coming suffix links at each node of $\DAWG(\rev{y})$
are also sorted in lexicographical order,
but the algorithm described above does not sort the suffix links.
However, one can sort the suffix links in $O(n)$ time
by the same technique applied to the edges of $\DAWG(\rev{y})$.

\begin{remark*}
  It is noteworthy that the parallel DAWG construction algorithm
  proposed by Breslauer and Hariharan~\cite{DBLP:journals/ppl/BreslauerH96}
  can be seen as a parallel version of the above folklore algorithm.
\end{remark*}

\section{Constructing $\DAWG(y)$ from $\ST(y)$ in $O(n)$ Time for Integer Alphabet}
\label{sec:DAWG_linear_const}

In this section, we present an optimal $O(n)$-time algorithm
to construct $\DAWG(y)$ with suffix links $\Ld$
for a given string $y$ of length $n$ over an integer alphabet.
Our algorithm constructs $\DAWG(y)$ with suffix links $\Ld$
from $\ST(y)$ with suffix links $\Ls$.
The following result is known.

\begin{theorem}[\cite{Farach-ColtonFM00}] \label{theo:ST_const}
Given a string $y$ of length $n$ over an integer alphabet,
edge-sorted $\ST(y)$ with suffix links $\Ls$ can be computed in $O(n)$ time.
\end{theorem}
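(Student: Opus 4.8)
The plan is to establish this via the divide-and-conquer strategy of Farach-Colton et al., which reduces the construction of $\ST(y)$ to that of a suffix tree over a string of half the length and then merges two partial trees in linear time. First I would split the suffix starting positions of $y$ into the odd-indexed and the even-indexed ones. To handle the odd suffixes, I would pack $y$ into blocks of two consecutive characters, radix sort the resulting $\lceil n/2 \rceil$ pairs (possible in $O(n)$ time precisely because the alphabet is $[1,n^c]$), and rename each distinct pair by its rank to obtain a string $y'$ of length $\lceil n/2 \rceil$ over an integer alphabet. Recursively building $\ST(y')$ and unpacking the renamed characters yields the compacted tree $T_o$ of the odd-start suffixes of $y$, with edges sorted; since the renamed alphabet is contained in $[1,n]$, the recursion stays within the integer-alphabet regime.

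Next I would derive the tree $T_e$ of the even-start suffixes from $T_o$ without a second recursive call. Each even suffix $y[2i..n]$ equals $y[2i]$ followed by the odd suffix $y[2i+1..n]$, so the lexicographic order of the even suffixes is the order induced by the pair consisting of $y[2i]$ and the rank of $y[2i+1..n]$, where the ranks are read off from the leaf order of $T_o$; this order is produced by one radix sort in $O(n)$ time. The longest common prefix of two adjacent even suffixes is either $0$, when their first characters differ, or $1$ plus the longest common prefix of the two corresponding odd suffixes, and the latter is a constant-time lowest-common-ancestor query on $T_o$ after linear preprocessing. Given the sorted even suffixes and their adjacent longest-common-prefix values, $T_e$ is assembled left to right in $O(n)$ time.

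The heart of the argument, and the step I expect to be the main obstacle, is merging $T_o$ and $T_e$ into $\ST(y)$ in linear time. A naive simultaneous top-down merge that coalesces edges sharing a first character produces an \emph{overmerged} tree, since equality of the leading coupling characters does not guarantee equality of the underlying suffixes, so some branches that should diverge get fused. To repair this I would keep, with each coalesced edge, the two suffixes (one odd, one even) that justified the merge and compute their true longest common prefix: writing an even suffix as $y[2i..n]$ and an odd suffix as $y[2j+1..n]$, their common-prefix length is obtained with a constant number of lowest-common-ancestor queries across $T_o$ and $T_e$ together with at most one direct character comparison. Feeding these exact lengths back into the overmerged tree splits each incorrectly fused edge at the correct depth, and a single linear-time pass then outputs $\ST(y)$. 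I would perform the coalescing in the order already present in $T_o$ and $T_e$, and apply a final radix sort of edge labels if needed, so that the out-edges remain lexicographically sorted.

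Finally I would attach the suffix links $\Ls$. For leaves, the suffix link of the leaf for $y[i..n]$ is the leaf for $y[i+1..n]$, read directly from the leaf labels. For an internal node $v = \Lrep{x}$ I would select two leaves in its subtree, with starting positions $i$ and $j$, whose lowest common ancestor is exactly $v$; then the target of the suffix link of $v$ is the lowest common ancestor of the leaves for $y[i+1..n]$ and $y[j+1..n]$, which represents exactly $x$ with its first character stripped, using the standard fact that if $ax$ is a branching node then $x$ is one too. Each such query is constant time after linear lowest-common-ancestor preprocessing, so all of $\Ls$ is built in $O(n)$ time. The whole procedure obeys the recurrence $T(n) = T(\lceil n/2 \rceil) + O(n)$, which solves to the claimed $O(n)$ bound.
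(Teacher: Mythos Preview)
The paper does not prove this theorem at all: it is stated as a citation of Farach--Colton, Ferragina, and Muthukrishnan and used as a black box. Your proposal is a faithful sketch of exactly that cited algorithm, so in that sense your approach is not merely ``essentially the same'' as the paper's---it supplies content the paper deliberately omits.

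One technical point deserves care. In the merge step you write that the longest common prefix of an odd suffix and an even suffix ``is obtained with a constant number of lowest-common-ancestor queries across $T_o$ and $T_e$ together with at most one direct character comparison.'' This is not quite how the Farach algorithm works, and taken literally it does not go through: if $y[2j{+}1] = y[2i]$ then $\mathrm{lcp}(y[2j{+}1..n], y[2i..n]) = 1 + \mathrm{lcp}(y[2j{+}2..n], y[2i{+}1..n])$, and the right-hand side is again an odd--even pair, not something answerable by a single LCA in $T_o$ or $T_e$. Farach's fix is that the odd--even pairs needed during the overmerge are not arbitrary---each overmerged node carries one such pair, and the ``strip one character'' recurrence links each pair to the pair at another overmerged node. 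These links form a forest (often called the $d$-tree) of total size $O(n)$, and a single bottom-up pass over it resolves all the LCP values in linear time. If you replace your per-pair constant-time claim with this batch argument, the sketch is correct and matches the original construction; the remaining parts (recursive halving via radix sort of character pairs, building $T_e$ from $T_o$, and recovering $\Ls$ by LCA on shifted leaf pairs) are fine as written.
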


%\begin{wrapfigure}{r}{0.4\textwidth}
%  \begin{center}
%    \includegraphics[width=0.4\textwidth]{STDAWG.eps}
%  \end{center}
%  \caption{An example of $\AST(y)$ with string $y = \mathtt{aabcabcab\$}$.}
%  \label{fig:stdawg}
%\end{wrapfigure}

\begin{figure}
  \begin{center}
    \includegraphics[width=0.5\textwidth]{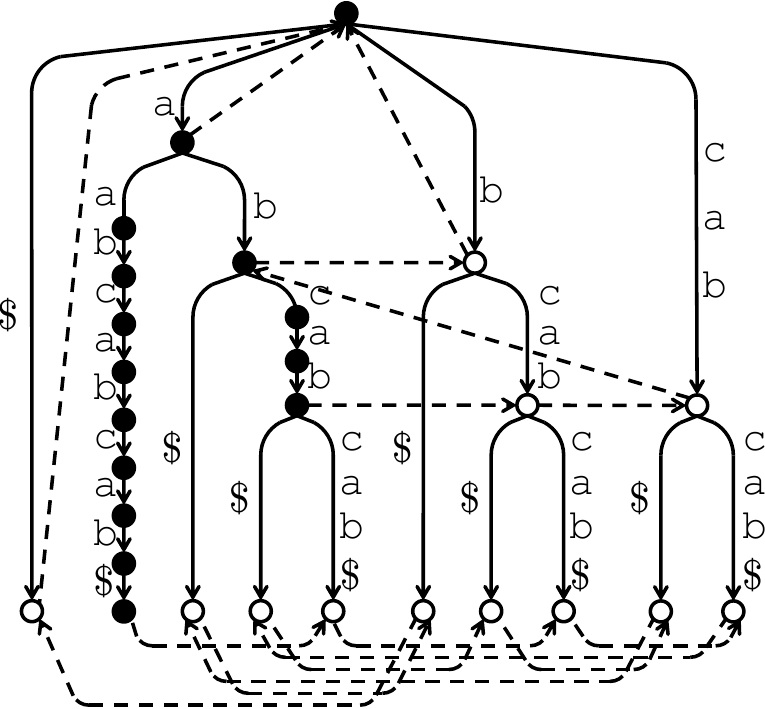}
  \end{center}
  \caption{Illustration of $\AST(y)$ with string $y = \mathtt{aabcabcab\$}$.}
  \label{fig:stdawg}
\end{figure}

Let $\mathcal{L}$ and $\mathcal{R}$
be, respectively, the sets of longest elements of all equivalence classes
on $y$ w.r.t. $\Leqr$ and $\Reqr$, namely,
$\mathcal{L} = \{\Lrep{x} \mid x \in \Substr(y)\}$ and
$\mathcal{R} = \{\Rrep{x} \mid x \in \Substr(y)\}$.
Let $\AST(y) = (\AVs, \AEs)$ be the edge-labeled rooted tree
obtained by adding extra nodes for strings in $\mathcal{R}$ to $\ST(y)$, namely,
\begin{eqnarray*}
  \AVs & = & \{x \mid x \in \mathcal{L} \cup \mathcal{R} \}, \\
  \AEs & = & \{(x, \beta, x\beta) \mid x,x\beta \in \AVs, \beta \in \Sigma^+,
  1 \leq \forall i < |\beta|,x\cdot\beta[1..i]\notin \AVs\}.
% \AEs & = & \{(x, \beta, x\beta) \mid x,x\beta \in \AVs, \beta \in \Sigma^+,\\
% && \phantom{\{} 1\leq \forall i < |\beta|,x\cdot\beta[1..i]\notin \AVs\}.
\end{eqnarray*}
Notice that the size of $\AST(y)$ is $O(n)$,
since $|\mathcal{L} \cup \mathcal{R}| \leq |\Vs| + |\Vd| = O(n)$,
where $\Vs$ and $\Vd$ are respectively the sets of nodes of
$\ST(y)$ and $\DAWG(y)$.

A node $x \in \AVs$ of $\AST(y)$ is called \emph{black}
iff $x \in \mathcal{R}$.
% and is called \emph{white} otherwise.
See Figure~\ref{fig:stdawg} for an example of $\AST(y)$.

\begin{lemma} \label{lem:black_nodes_monotonicity}
For any $x \in \Substr(y)$,
if $x$ is represented by a black node in $\AST(y)$,
then every prefix of $x$ is also represented by a black node in $\AST(y)$.
%For any black node $x$ of $\AST(y)$,
%every ancestor of $x$ is also black.
\end{lemma}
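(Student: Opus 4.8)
The plan is to work directly with the characterization of black nodes in terms of ending positions. By definition a node $z \in \AVs$ is black iff $z \in \mathcal{R}$, i.e.\ iff $z = \Rrep{z}$, equivalently iff $z$ is the longest string in its class $\Reqc{z}$. Since $\mathcal{R} \subseteq \AVs$, it suffices to prove the purely combinatorial statement: if $x = \Rrep{x}$ and $u$ is a prefix of $x$, then $u = \Rrep{u}$ (membership of $u$ in $\AVs$ is then automatic). I would prove the contrapositive: assuming some prefix $u$ of $x$ satisfies $u \neq \Rrep{u}$, I derive $x \neq \Rrep{x}$.

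First I would record a local characterization of right-representatives: for any string $v$, one has $v \neq \Rrep{v}$ iff there is a single character $a \in \Sigma$ with $\EndPos(av) = \EndPos(v)$. The only nontrivial direction is that non-maximality forces such an $a$ to exist, and here I would use that $\EndPos$ is antimonotone under the suffix order, i.e.\ if $s$ is a suffix of $t$ then $\EndPos(t) \subseteq \EndPos(s)$. If $v \neq \Rrep{v}$, pick $w \in \Reqc{v}$ longer than $v$ (e.g.\ $w = \Rrep{v}$); since $v$ and $w$ share their end positions, $v$ is a suffix of $w$, so the length-$(|v|+1)$ suffix $av$ of $w$ satisfies $\EndPos(w) \subseteq \EndPos(av) \subseteq \EndPos(v)$, and as $\EndPos(w) = \EndPos(v)$ this squeezes to $\EndPos(av) = \EndPos(v)$.

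The main step is a positional translation. Writing $x = uw$ and taking the character $a$ with $\EndPos(au) = \EndPos(u)$ supplied by the previous step, I would show $\EndPos(ax) = \EndPos(x)$. The inclusion $\EndPos(ax) \subseteq \EndPos(x)$ is immediate since $x$ is a suffix of $ax$. For the converse, take $j \in \EndPos(x)$: the occurrence of $x$ ending at $j$ begins at position $j - |x| + 1$, and its prefix $u$ occupies the same starting position, so $u$ ends at $i := j - |w| \in \EndPos(u)$. Because $\EndPos(au) = \EndPos(u)$, we get $i \in \EndPos(au)$, so the character immediately left of this occurrence of $u$ — the character at position $i - |u| = j - |x|$ — equals $a$, and in particular $j - |x| \geq 1$. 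But position $j - |x|$ is exactly the position immediately left of the occurrence of $x$ ending at $j$, hence $ax$ ends at $j$ and $j \in \EndPos(ax)$. Thus $\EndPos(x) = \EndPos(ax)$, so $ax \Reqr x$ with $|ax| > |x|$, contradicting $x = \Rrep{x}$.

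I expect the only delicate point to be the index bookkeeping in the translation step — in particular confirming that the boundary condition $j - |x| \geq 1$ (no occurrence of $x$ touching the left end of $y$) is automatically supplied by the left-extendability of $u$, rather than requiring a separate case. Everything else reduces to the suffix-antimonotonicity of $\EndPos$ together with the observation that a prefix of $x$ shares its left context with $x$ at every common occurrence.
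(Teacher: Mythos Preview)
Your proof is correct and follows essentially the same route as the paper's: both reduce to showing that if a prefix $z$ of $x$ satisfies $z \neq \Rrep{z}$ then $x \neq \Rrep{z}$ cannot be the longest in $\Reqc{x}$, by observing that a left-extension of $z$ within its $\Reqr$-class induces a left-extension of $x = zu$ within $\Reqc{x}$. The paper's version is simply terser---it uses $\Rrep{z}$ directly rather than a single-character extension $a$, and asserts $zu \Reqr \Rrep{z}u$ in one line without unpacking the position arithmetic you carry out explicitly.
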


\begin{proof}
Since $x$ is a black node, $x = \Rrep{x}$.
Assume on the contrary that there is a proper prefix $z$ of $x$
such that $z$ is not represented by a black node.
Let $zu = x$ with $u \in \Sigma^+$.
Since $z \Reqr \Rrep{z}$, we have $x = zu \Reqr \Rrep{z}u$.
On the other hand, since $z$ is not black, we have $|\Rrep{z}| > |z|$.
However, this contradicts that $x$ is the longest member $\Rrep{x}$ of $\Reqc{x}$.
Thus, every prefix of $x$ is also represented by a black node.
\end{proof}

\begin{lemma} \label{lem:black_leaf_condition}
  For any string $y$,
  let $\BT(y)$ be the trie consisting only of the
  black nodes of $\AST(y)$.
  Then, every leaf $\ell$ of $\BT(y)$ is a node of the original suffix tree
  $\ST(y)$.
\end{lemma}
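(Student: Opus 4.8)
The plan is to argue by contradiction. Suppose some leaf $\ell$ of $\BT(y)$ is \emph{not} a node of $\ST(y)$, i.e.\ $\ell \neq \Lrep{\ell}$, and I will produce a black one-character extension of $\ell$, contradicting the assumption that $\ell$ is a leaf of $\BT(y)$.

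First I would record the structural facts I need. Since $\ell$ is a black node of $\AST(y)$, it is a substring of $y$ and $\ell = \Rrep{\ell}$. By Lemma~\ref{lem:black_nodes_monotonicity} the black nodes are prefix-closed, so $\BT(y)$ is a genuine trie in which every parent-child edge is a single character; hence ``$\ell$ is a leaf of $\BT(y)$'' means exactly that $\ell c$ is \emph{not} black for every $c \in \Sigma$. Next I would exploit $\ell \neq \Lrep{\ell}$: because any two strings with the same set of beginning positions are prefixes of one another, $\ell$ is a proper prefix of $\Lrep{\ell}$. Writing $b$ for the character immediately following $\ell$ inside $\Lrep{\ell}$, I obtain $\ell b \in \Substr(y)$ and, by squeezing $\BegPos(\ell) \supseteq \BegPos(\ell b) \supseteq \BegPos(\Lrep{\ell}) = \BegPos(\ell)$, the equality $\BegPos(\ell b) = \BegPos(\ell)$.

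The heart of the argument is the claim that $\ell$ and $\ell b$ are black \emph{simultaneously}. I would establish this through a left-context characterization of blackness: a substring $x$ fails to be black, i.e.\ $x \neq \Rrep{x}$, precisely when there is a single character $a$ with $\EndPos(ax) = \EndPos(x)$, which holds iff every occurrence of $x$ is immediately preceded by that one character $a$ and $x$ does not occur at position $1$. Thus blackness of $x$ depends only on the left-context data $\{\, y[i-1] : i \in \BegPos(x),\, i \geq 2 \,\}$ together with the flag ``$1 \in \BegPos(x)$''. Since $\BegPos(\ell) = \BegPos(\ell b)$, these data coincide for $\ell$ and for $\ell b$, so $\ell$ is black iff $\ell b$ is black. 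As $\ell$ is black, $\ell b$ is a black one-character extension of $\ell$, contradicting that $\ell$ is a leaf of $\BT(y)$. Hence $\ell = \Lrep{\ell}$, i.e.\ $\ell \in \Vs$, so $\ell$ is a node of $\ST(y)$.

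The step I expect to be the crux is this left-context equivalence: one must translate the condition $x = \Rrep{x}$ from a statement about \emph{end}-position sets into one about immediately preceding characters, and then notice that preceding characters are read off directly from the \emph{begin} positions. The delicate point is the boundary case where $\ell$ (and therefore $\ell b$) occurs as a prefix of $y$ at position $1$, where there is no preceding character; I must treat ``start-of-string'' as a left context distinct from every alphabet symbol so that the equivalence ``$\ell$ black iff $\ell b$ black'' survives that case. The remaining ingredients — prefix-closedness yielding the single-character trie structure, and the prefix relation within a $\Leqr$-class — are routine and can be invoked directly.
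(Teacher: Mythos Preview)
Your proof is correct and follows essentially the same route as the paper's: assume $\ell\notin\Vs$, pick the unique one-character right extension $\ell b$ (the paper instead takes the unique child $u$ of $\ell$ in $\AST(y)$, which coincides with $\ell b$), and argue via the left-context characterisation of $\Rrep{\cdot}$ that $\ell b$ is again black, contradicting leafness. Your packaging of the key step as ``blackness depends only on $\BegPos$'' is a clean abstraction of the paper's case split (1)--(3), but the underlying argument is the same.
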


\begin{proof}
  Assume on the contrary that
  some leaf $\ell$ of $\BT(y)$ corresponds to
  an internal node of $\AST(y)$ that has exactly one child.
  Since any substring in $\mathcal{L}$ is represented
  by a node of the original suffix tree $\ST(y)$,
  we have $\ell \in \mathcal{R}$.
  Since $\ell = \Rrep{\ell}$,
  $\ell$ is the longest substring of $y$
  which has ending positions $\EndPos(\ell)$ in $y$.
  This implies one of the following situations:
  (1) occurrences of $\ell$ in $y$ are immediately preceded
  by at least two distinct characters $a \neq b$,
  (2) $\ell$ occurs as a prefix of $y$ and
  all the other occurrences of $\ell$ in $y$ are immediately preceded by
  a unique character $a$, or
  (3) $\ell$ occurs exactly once in $y$ as its prefix.
  Let $u$ be the only child of $\ell$ in $\AST(y)$,
  and let $\ell z = u$, where $z \in \Sigma^+$.
  By the definition of $\ell$, $u$ is not black.
  On the other hand, in any of the situations (1)-(3),
  $u = \ell z$ is the longest substring of $y$ which has ending positions
  $\EndPos(u)$ in $y$.
  Hence we have $u = \Rrep{u}$ and $u$ must be black, a contradiction.
  Thus, every leaf $\ell$ of $\BT(y)$ is a node of the original suffix tree
  $\ST(y)$.
\end{proof}

\begin{lemma}[\cite{Narisawa07}] \label{lem:original_node_black_condition}
For any node $x \in \Vs$ of the original suffix tree $\ST(y)$,
its corresponding node in $\AST(y)$ is black iff
(1) $x$ is a leaf of the suffix link tree $\SLT(y)$, or
(2) $x$ is an internal node of $\SLT(y)$ and
for any character $a \in \Sigma$ such that $ax \in \Vs$,
$|\BegPos(ax)| \neq |\BegPos(x)|$.
% (1) there is no suffix link pointing to $x$, or
% (2) for any character $a \in \Sigma$ such that $ax \in \Vs$,
% $|\BegPos(ax)| \neq |\BegPos(x)|$.
\end{lemma}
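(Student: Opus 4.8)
The plan is to rewrite the ``black'' condition $x = \Rrep{x}$ as a statement about which left-extensions of $x$ preserve the set of ending positions, and then to show that any such ending-position-preserving extension is itself a node of $\ST(y)$, so that quantifying over characters $a$ with $ax \in \Vs$ is equivalent to quantifying over all $a$ with $ax \in \Substr(y)$.

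First I would record the basic characterization. By definition $x$ is black iff $x = \Rrep{x}$, i.e. $x$ is the longest string in its class $\Reqc{x}$. Since $\EndPos(ax) \subseteq \EndPos(x)$ for every $a \in \Sigma$, the string $x$ fails to be the longest with its ending-position set exactly when some $a$ satisfies $\EndPos(ax) = \EndPos(x)$; hence $x$ is black iff $\EndPos(ax) \neq \EndPos(x)$ for all $a$. Because $|\EndPos(w)| = |\BegPos(w)|$ equals the number of occurrences of $w$, and because the inclusion $\EndPos(ax) \subseteq \EndPos(x)$ already holds, the equality $\EndPos(ax) = \EndPos(x)$ is equivalent to $|\BegPos(ax)| = |\BegPos(x)|$. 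So blackness becomes the condition that $|\BegPos(ax)| \neq |\BegPos(x)|$ for every $a$ with $ax \in \Substr(y)$.

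The heart of the argument, and the step I expect to require the most care, is the claim that for $x \in \Vs$ any ending-position-preserving left-extension stays inside $\Vs$: if $\EndPos(ax) = \EndPos(x)$ then $ax \in \Vs$. I would prove this from the fact that $x \in \Vs$ means $x$ is right-maximal ($x = \Lrep{x}$). When $\EndPos(ax) = \EndPos(x)$, every occurrence of $x$ is immediately preceded by $a$, so the occurrences of $x$ and of $ax$ are in bijection via their common ending positions; consequently the set of characters that follow $ax$ coincides with the set following $x$, and an occurrence of $ax$ reaches position $|y|$ iff the corresponding occurrence of $x$ does. Right-maximality of $x$ therefore transfers verbatim to $ax$, giving $ax = \Lrep{ax}$ and hence $ax \in \Vs$.

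Finally I would assemble the pieces. The contrapositive of the claim says that if $ax \in \Substr(y)$ but $ax \notin \Vs$, then automatically $\EndPos(ax) \neq \EndPos(x)$; hence the quantifier ``for all $a$ with $ax \in \Substr(y)$'' in the blackness condition may be replaced by ``for all $a$ with $ax \in \Vs$'' without affecting its truth. It then suffices to split on whether $x$ has a child in $\SLT(y)$. If no $a$ satisfies $ax \in \Vs$, then $x$ is a leaf of $\SLT(y)$ and the condition holds vacuously, so $x$ is black, giving case (1). Otherwise $x$ is an internal node of $\SLT(y)$ and is black exactly when $|\BegPos(ax)| \neq |\BegPos(x)|$ for every $a$ with $ax \in \Vs$, giving case (2). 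Since these two cases are exhaustive and mutually exclusive, this yields the stated equivalence.
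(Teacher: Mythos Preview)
The paper does not supply its own proof of this lemma; it is quoted from \cite{Narisawa07} and used as a black box. Your argument is correct and self-contained: the reduction of ``black'' to ``no $a$ with $\EndPos(ax)=\EndPos(x)$'' is the standard unpacking of $x=\Rrep{x}$, and your key step---that for $x\in\Vs$ the equality $\EndPos(ax)=\EndPos(x)$ forces $ax\in\Vs$---is exactly what is needed to restrict the quantifier to children of $x$ in $\SLT(y)$. The final case split on whether $x$ has any $\SLT$-children is then immediate.
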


Using Lemma~\ref{lem:black_leaf_condition} and
Lemma~\ref{lem:original_node_black_condition},
we can compute all leaves of $\BT(y)$ in $O(n)$ time
by a standard traversal on the suffix link tree $\SLT(y)$.
Then, we can compute all internal black nodes of $\BT(y)$
in $O(n)$ time using Lemma~\ref{lem:black_nodes_monotonicity}.
Now, by Theorem~\ref{theo:ST_const}, the next lemma holds:

\begin{lemma} \label{lem:AugmentedST_const}
Given a string $y$ of length $n$ over an integer alphabet,
edge-sorted $\AST(y)$ can be constructed in $O(n)$ time.
\end{lemma}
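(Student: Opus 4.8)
The plan is to materialize $\AST(y)$ by starting from the edge-sorted suffix tree and inserting the new black nodes directly onto the suffix tree edges, so that the whole construction is a constant number of $O(n)$-time traversals. First I would invoke Theorem~\ref{theo:ST_const} to obtain edge-sorted $\ST(y)$ together with its suffix links $\Ls$ in $O(n)$ time; reversing $\Ls$ yields the suffix link tree $\SLT(y)$. Since the leaves of the subtree of $\ST(y)$ rooted at $\Lrep{x}$ are exactly $\BegPos(x)$, a single bottom-up traversal computes $|\BegPos(x)|$ for every $x \in \Vs$ in $O(n)$ time.

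Next I would decide, for each suffix tree node $x \in \Vs$, whether it is black, using Lemma~\ref{lem:original_node_black_condition}. Traversing $\SLT(y)$, a node with no $\SLT$-child is a leaf of $\SLT(y)$ and is marked black by case (1). For an internal node $x$ of $\SLT(y)$, whose $\SLT$-children are precisely the nodes $ax \in \Vs$, I would compare $|\BegPos(ax)|$ against $|\BegPos(x)|$ for each child while descending: $x$ is black (case (2)) iff no child satisfies $|\BegPos(ax)| = |\BegPos(x)|$. This marks all black suffix tree nodes in $O(n)$ total time.

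The remaining black nodes of $\AST(y)$ lie strictly inside suffix tree edges, and I would insert them one edge at a time. The key observation is that blackness is uniform along a single suffix tree edge: every point $u\gamma[1..i]$ with $1 \le i \le |\gamma|$ on an edge $(u, \gamma, u\gamma)$ shares the same $\BegPos$, namely $\BegPos(u\gamma)$, because no branching occurs inside an edge; hence all these points have the same set of immediately preceding characters and the same prefix-of-$y$ status, so each is the longest element of its $\Reqr$-class (i.e.\ black) exactly when the bottom node $u\gamma$ is. Consequently there are no partially black edges: for each edge whose bottom node is black I subdivide it completely, creating one new black node $u\gamma[1..i]$ for each $1 \le i < |\gamma|$, whereas edges with a non-black bottom contribute no interior black nodes. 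That this reproduces $\AST(y)$ follows from Lemma~\ref{lem:black_nodes_monotonicity}, which guarantees the inserted interior points are genuinely black (being prefixes of a black node), together with Lemma~\ref{lem:black_leaf_condition}, which places every maximal black string at a suffix tree node and so guarantees no black node is missed. Since every inserted node belongs to $\mathcal{R} \setminus \mathcal{L}$, the total number of nodes of $\AST(y)$ is $|\mathcal{L} \cup \mathcal{R}| \le |\Vs| + |\Vd| = O(n)$, so all subdivisions cost $O(n)$ in aggregate. Edge-sortedness is preserved for free: subdivision only inserts degree-one internal nodes and never changes the first character of any out-edge at an original branching node, so the out-edges inherited from the edge-sorted $\ST(y)$ remain sorted.

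The step I expect to be the main obstacle is justifying the per-edge uniformity of blackness, since it is what replaces an a priori per-character search for a ``black frontier'' on each edge by a single decision at the edge's bottom node; the argument rests on the constancy of $\BegPos$ along an edge and on characterizing blackness through left-maximality, as already done implicitly in the proof of Lemma~\ref{lem:black_leaf_condition}. The rest is bookkeeping: confirming that the node bound $|\mathcal{L}\cup\mathcal{R}| = O(n)$ keeps the number of inserted nodes linear, and that the subdivision plus an optional final radix sort keep all out-edges sorted in $O(n)$ time.
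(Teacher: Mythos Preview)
Your proposal is correct and follows essentially the same route as the paper: build $\ST(y)$ with suffix links (Theorem~\ref{theo:ST_const}), identify the black suffix-tree nodes via Lemma~\ref{lem:original_node_black_condition} using a traversal of $\SLT(y)$, and then fill in the remaining black nodes by prefix closure along suffix-tree edges (the paper phrases this as ``compute all internal black nodes of $\BT(y)$ using Lemma~\ref{lem:black_nodes_monotonicity}''). Your per-edge uniformity argument via constancy of $\BegPos$ is a clean direct justification of the same fact that the paper obtains from Lemmas~\ref{lem:black_nodes_monotonicity} and~\ref{lem:black_leaf_condition}, but the underlying algorithm and its $O(n)$ analysis are the same.
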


We construct $\DAWG(y)$ with suffix links $\Ld$ from $\AST(y)$,
as follows.
First, we construct a DAG $D$, which is initially
equivalent to the trie $\BT(y)$
consisting only of the black nodes of $\AST(y)$.
%\footnote{The initial trie for $D$ is equivalent to the \emph{longest path tree}
%recently proposed by Takagi et al.~\cite{TakagiIA15}.}
Our algorithm adds edges and suffix links to $D$,
so that the DAG $D$ will finally become $\DAWG(y)$.
In so doing, we traverse $\AST(y)$ in post order.
For each black node $x$ of $\AST(y)$ visited in the post-order traversal,
which is either an internal node or a leaf of the original suffix tree $\ST(y)$,
we perform the following:
Let $\parent(x)$ be the parent of $x$ in the \emph{original suffix tree} $\ST(y)$.
It follows from Lemma~\ref{lem:black_nodes_monotonicity} that
every prefix $x'$ of $x$ with $|\parent(x)| \leq |x'| \leq |x|$
is represented by a black node.
For each black node $x'$ in the path from $\parent(x)$ to $x$ in the DAG $D$,
we compute the in-coming edges to $x'$ and the suffix link of $x'$.

\begin{figure}[t!]
  \centerline{
    \includegraphics[scale=0.35,clip]{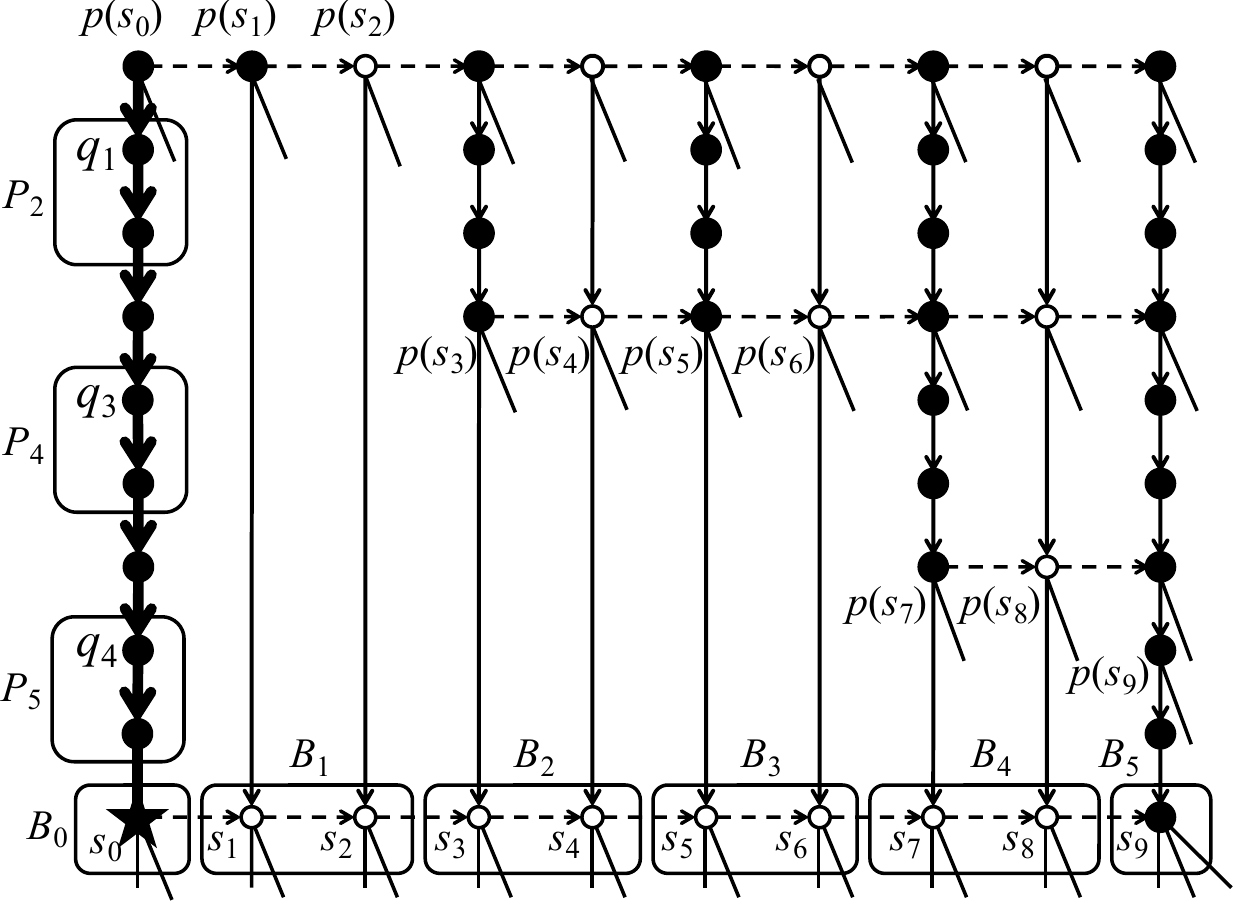}
    \hfil
    \includegraphics[scale=0.35,clip]{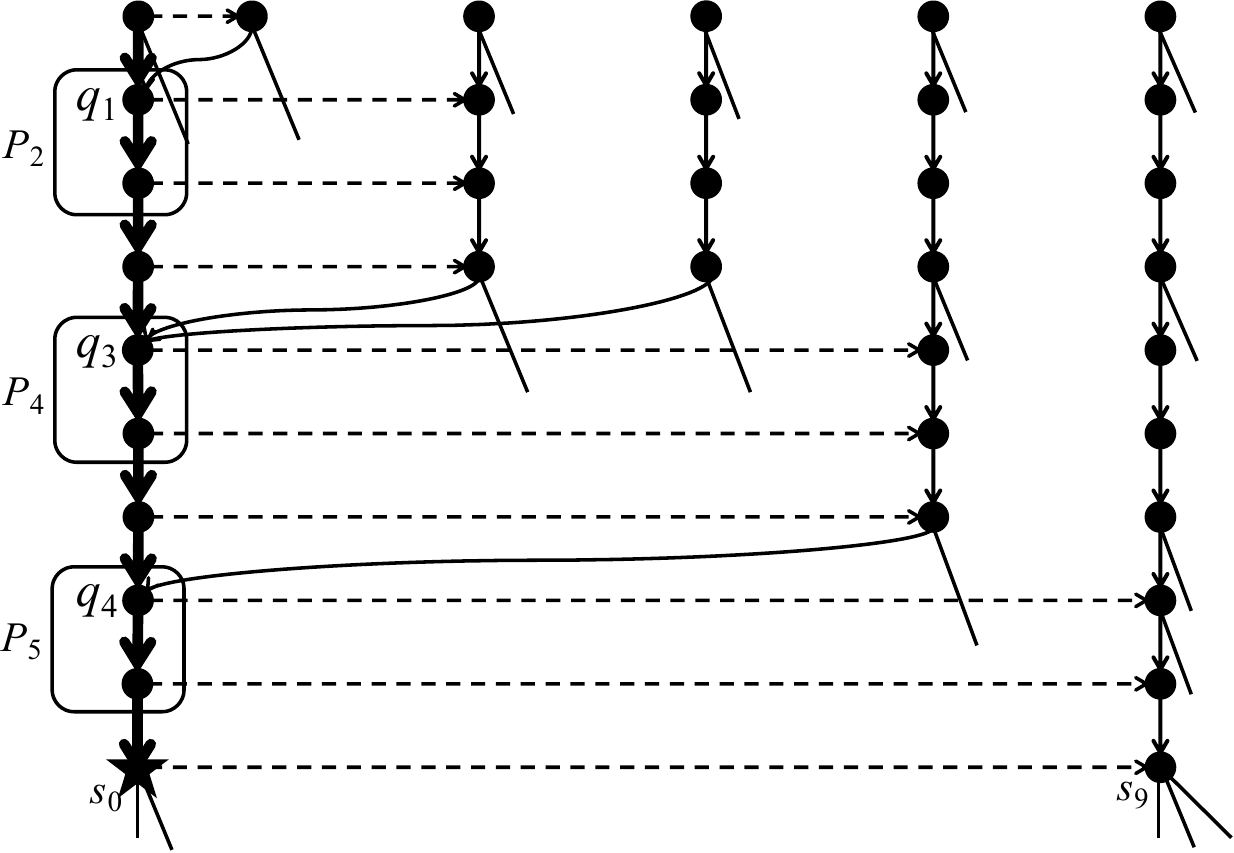}
  }
\caption{
(Left): Illustration for a part of $\AST(y)$,
where the branching nodes are those
that exist also in the original suffix tree $\ST(y)$.
Suppose we have just visited node $x = s_0$ (marked by a star)
in the post-order traversal on $\AST(y)$.
Here, $s_0, \ldots, s_9$ are connected by a chain of suffix links
starting from $s_0$, and $s_9$ is the first black node after $s_0$ in the chain.
In the corresponding DAG $D$,
we will add in-coming edges to the black nodes
in the path from $\parent(x)$ to $x$,
and will add suffix links from these black nodes in the path.
The sequence $s_0, \ldots, s_m$ of nodes in $\AST(y)$ is partitioned into blocks,
such that the parents of the nodes in the same block belong to the
same equivalence class w.r.t. $\Reqr$.
(Right): The in-coming edges and the suffix links have been added
to the nodes in the path from $\parent(x)$ to $x = s_0$.
}
\label{fig:suffix_link_const}
\end{figure}

Let $s_0, \ldots, s_m$ be the sequence of nodes
connected by a chain of suffix links starting from $s_0 = x$,
such that $|\BegPos(s_i)| = |\BegPos(s_0)|$ for all $0 \leq i \leq m - 1$
and $|\BegPos(s_{m})| > |\BegPos(s_0)|$
(see the left diagram of Figure~\ref{fig:suffix_link_const}).
In other words, $s_{m}$ is
the first black node after $s_0$ in the chain of suffix links
(this is true by Lemma~\ref{lem:original_node_black_condition}).
Since $|s_{i}| = |s_{i-1}|-1$ for every $1 \leq i \leq m - 1$,
$\EndPos(s_i) = \EndPos(s_0)$.
Thus, $s_0, \ldots, s_{m-1}$ form a single equivalence class w.r.t. $\Reqr$
and are represented by the same node as $x = s_0$ in the DAWG.

For any $0 \leq i \leq m - 1$, let $d(s_i) = |s_i| - |\parent(s_i)|$.
Observe that the sequence $d(s_0), \ldots, d(s_{m})$
is monotonically non-increasing by the property of suffix trees.
We partition the sequence $s_0, \ldots, s_m$ of nodes
into blocks so that the parents of all nodes in the same block
belong to the same equivalence class w.r.t. $\Reqr$.
Let $r$ be the number of such blocks,
and for each $0 \leq k \leq r-1$,
let $B_k = s_{i_k}, \ldots, s_{i_{k+1}-1}$ be the $k$th such block.
We can easily compute all these blocks by comparing $|\parent(s_{i-1})|$
and $|\parent(s_i)|$ for each pair $s_{i-1}$ and $s_{i}$ of consecutive elements in the sequence $s_1,\ldots, s_m$ of nodes.
Note that for each block $B_k$,
$\parent(s_{i_k})$ is the only black node among the parents
$\parent(s_{i_k}), \ldots, \parent(s_{i_{k+1}-1})$ of the nodes in $B_k$,
since it is the longest one in its equivalence class $\Reqc{\parent(s_{i_k})}$.
Also, every node in the same block has the same value for function $d$.
Thus, for each block $B_k$,
we add a new edge $(\parent(s_{i_k}), b_k, q_k)$ to the DAG $D$,
where $q_k$ is the (black) ancestor of $x$ such that
$|q_k| = |x| - d(s_{i_k}) + 1$,
and $b_k$ is the first character of the label of the edge
from $\parent(s_{i_k})$ to $s_{i_k}$ in $\AST(y)$.
Notice that this new edge added to $D$ corresponds to the edges
between the nodes in the block $B_k$ and their parents in $\AST(y)$.
We also add a suffix link $(\parent(q_k), a, \parent(s_{i_k}))$ to $D$,
where $a = s_{i_k-1}[1]$.
See also the right diagram of Figure~\ref{fig:suffix_link_const}.

For each $2 \leq k \leq r-1$,
let $P_k$ be the path from $q_{k-1}$ to $g_k$,
where $g_k = \parent(\parent(q_{k}))$ for $2 \leq k \leq r-2$
and $g_{r-1} = x = s_0$.
% $\parent(\parent(q_{k}))$ to $q_{k-1}$.
Each $P_k$ is a sub-path of the path from $\parent(s_0)$ to $s_0$,
and every node in $P_k$ has not been given their suffix link yet.
For each node $v$ in $P_k$,
we add the suffix link from $v$ to the
ancestor $u$ of $s_{i_k}$ such that $|s_{i_k}| - |u| = |s_0| - |v|$.
See also the right diagram of Figure~\ref{fig:suffix_link_const}.

Repeating the above procedure for all black nodes of $\AST(y)$
that are either internal nodes or leaves of the original suffix tree $\ST(y)$
in post order,
the DAG $D$ finally becomes $\DAWG(y)$ with suffix links $\Ld$.
We remark however that the edges of $\DAWG(y)$
might not be sorted, since the edges that exist in $\AST(y)$
were firstly inserted to the DAG $D$.
Still, we can easily sort all the edges and suffix links of $\DAWG(y)$ in $O(n)$ total time
after they are constructed by the same technique as in Section~\ref{sec:folklore_algo}.

\begin{theorem} \label{theo:DAWG_linear_const}
Given a string $y$ of length $n$ over an integer alphabet,
we can compute edge-sorted $\DAWG(y)$ with suffix links $\Ld$
in $O(n)$ time and space.
\end{theorem}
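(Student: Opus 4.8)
The plan is to show two things about the post-order procedure described above: that the DAG $D$ it outputs is exactly $\DAWG(y)$ with suffix links $\Ld$, and that the whole computation takes $O(n)$ time and space. The starting point is Lemma~\ref{lem:AugmentedST_const}, which supplies edge-sorted $\AST(y)$ in $O(n)$ time and space. Since a node of $\AST(y)$ is black precisely when it equals $\Rrep{x}$ for some $x$, the black nodes are in one-to-one correspondence with the $\Reqr$-classes, i.e.\ with the node set $\Vd$ of $\DAWG(y)$; consequently initializing $D$ as the black-node trie $\BT(y)$ already realizes the correct node set, and the remaining task is to certify that the inserted edges and suffix links coincide with those of $\DAWG(y)$ and that they are inserted within the time bound.

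For correctness I would first record the two structural facts that the algorithm relies on. Following suffix links from $x=s_0$, the quantity $|\EndPos(\cdot)|=|\BegPos(\cdot)|$ is non-decreasing and stays constant exactly along one $\Reqr$-class, so the nodes $s_0,\dots,s_{m-1}$ are precisely the members of $\Reqc{x}$ (consecutive-length suffixes of $x$) and $s_m$ is the first node of the next class. I would then argue that the block decomposition enumerates the in-coming DAWG edges of the black nodes lying on the compressed path from $\parent(x)$ to $x$ without omission or repetition: within a block $B_k$ the node $\parent(s_{i_k})=\Rrep{\parent(s_{i_k})}$ is the unique black parent, so the block collapses to a single edge $(\parent(s_{i_k}), b_k, q_k)$, the target $q_k$ being forced by the length equation $|q_k|=|x|-d(s_{i_k})+1$; conversely every in-coming edge of a black node on that path is recovered from exactly one block. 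The suffix links are verified in the same spirit: the link attached to $\parent(s_{i_k})$ and the links placed along each path $P_k$ together cover all black nodes on the path from $\parent(x)$ to $x$, and their targets are pinned down by the matching length equation $|s_{i_k}|-|u|=|s_0|-|v|$, so that the reconstructed links are exactly $\Ld$. Ranging the post-order traversal over all black nodes that are internal nodes or leaves of $\ST(y)$ makes the suffix-tree edges partition the black nodes, so each black node receives its in-coming edges and its suffix link exactly once.

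For the time bound I would use amortization. Each such black node $x$ is visited once, and its chain $s_0,\dots,s_{m-1}$ is exactly the class $\Reqc{x}$; since the $\Reqr$-classes are disjoint, the chains are edge-disjoint in the suffix-link structure and their total length is bounded by the $O(n)$ suffix links of $\AST(y)$. Each chain is partitioned into blocks by a single scan comparing $|\parent(s_{i-1})|$ with $|\parent(s_i)|$, every block triggers only $O(1)$ insertions, and the number of blocks summed over all $x$ is bounded by the number of DAWG edges, which is $O(n)$ by Blumer et al.'s $3n-4$ bound. The ancestor queries such as ``the black ancestor $q_k$ with $|q_k|=|x|-d(s_{i_k})+1$'' can be answered in $O(1)$ amortized time by maintaining the current root-to-$x$ path during the same traversal, so no auxiliary structure beyond $O(n)$ space is needed. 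A final $O(n)$ radix sort of the edges (and, if wanted, of the suffix links) as in Section~\ref{sec:folklore_algo} makes them lexicographically ordered, yielding edge-sorted $\DAWG(y)$ with $\Ld$ in $O(n)$ total time and space.

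I expect the main obstacle to be the correctness of the block decomposition, that is, proving the bijection between blocks and the in-coming DAWG edges of the black nodes on the path from $\parent(x)$ to $x$, and confirming that the forced targets $q_k$ and the suffix-link targets are the unique correct ones. This is precisely where the interaction between the suffix-tree shape (the monotonically non-increasing sequence $d(s_0),\dots,d(s_m)$) and the $\Reqr$-equivalence must be handled carefully; by contrast, once the chains are seen to be disjoint and the number of inserted objects is $O(n)$, the amortized running-time accounting is routine.
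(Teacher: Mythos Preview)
Your proposal is correct and follows essentially the same approach as the paper: build $\AST(y)$ via Lemma~\ref{lem:AugmentedST_const}, identify black nodes with $\Vd$, and use the post-order traversal with suffix-link chains and block decomposition to install the DAWG edges and suffix links, finishing with a radix sort. Your amortization argument (chains are disjoint because each $s_0,\dots,s_{m-1}$ constitutes one $\Reqr$-class) is equivalent to the paper's observation that each such chain is a non-branching path of $\SLT(y)$ and hence is traversed only once; both yield the same $O(n)$ bound.
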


\begin{proof}
%The correctness is clear from the above arguments.
The correctness can easily be seen if one recalls
that minimizing $\STrie(y)$ based on its suffix links produces $\DAWG(y)$.
The proposed algorithm simulates this minimization
using only the subset of the nodes of $\STrie(y)$
that exist in $\AST(y)$.
The out-edges of each node of $\DAWG(y)$ are sorted in lexicographical order
as previously described.
%since the out-edges of each node of $\ST(y)$
%(and hence the out-edges of each node of $\AST(y)$) are sorted.

We analyze the time complexity of our algorithm.
We can compute $\AST(y)$ in $O(n)$ time by Lemma~\ref{lem:AugmentedST_const}.
The initial trie for $D$ can easily be computed in $O(n)$ time
from $\AST(y)$.
Let $x$ be any node visited in the post-order traversal on $\AST(y)$
that is either an internal node or a leaf of the original suffix tree $\ST(y)$.
The cost of adding the new in-coming edges
to the black nodes in the path from $\parent(x)$ to $x = s_0$
is linear in the number of nodes in the sequence $s_0, \ldots, s_m$
connected by the chain of suffix links starting from $s_0 = x$.
Since $s_0$ and $s_m$ are the only black nodes in the sequence,
it follows from Lemma~\ref{lem:original_node_black_condition} that
the chain of suffix links from $s_0$ to $s_m$ is a non-branching path
of the suffix link tree $\SLT(y)$.
This implies that the suffix links in this chain
are used only for node $x$ during the post-order traversal of $\AST(y)$.
Since the number of edges in $\SLT(y)$ is $O(n)$,
the amortized cost of adding each edge to $D$ is constant.
Also, the total cost to sort all edges is $O(n)$,
as was previously explained.
Now let us consider the cost of adding
the suffix links from the nodes in each sub-path $P_k$.
For each node $v$ in $P_k$,
the destination node $v$ can be found in constant time
by simply climbing up the path from $s_{i_k}$ in the chain of suffix links.
Overall, the total time cost to transform the trie for $D$
to $\DAWG(y)$ is $O(n)$.

The working space is clearly $O(n)$.
\end{proof}

\begin{example}
  Figure~\ref{fig:snapshots_DAWG_const} shows snapshots of
  the DAWG construction for string $y=\mathtt{aabcabcab}\$$ by our algorithm.
  Step 0: (Left): $\AST(y)$ with suffix links $\Ls$ and (Right): the initial trie for $D$. We traverse $\AST(y)$ in post order.
  Step 1: We arrived at black leaf node $x_1 = \mathtt{aabcabcab}\$$ (indicated by a star). We determine the in-coming edges and suffix links for the black nodes in the path from $\parent(x_1) = \mathtt{a}$ and $x_1$ (indicated by thick black lines). To the right is the resulting DAG $D$ for this step.
  Step 2: We arrived at black branching node $x_2 = \mathtt{abcab}$ (indicated by a star). We determine the in-coming edges and suffix links for the black nodes in the path from $\parent(x_2) = \mathtt{ab}$ and $x_2$ (indicated by thick black lines). To the right is the resulting DAG $D$ for this step.
  Step 3: We arrived at black branching node $x_3 = \mathtt{ab}$ (indicated by a star). We determine the in-coming edges and suffix links for the black nodes in the path from $\parent(x_3) = \mathtt{a}$ and $x_3$ (indicated by thick black lines). To the right is the resulting DAG $D$ for this step.
  Step 4: We arrived at black branching node $x_4 = \mathtt{a}$ (indicated by a star). We determine the in-coming edges and suffix links for the black nodes in the path from $\parent(x_4) = \varepsilon$ and $x_4$ (indicated by thick black lines). To the right is the resulting DAG $D$ for this step.
  Since all branching and leaf black nodes have been processed,
  the final DAG $D$ is $\DAWG(y)$ with suffix links.
\end{example}

\begin{figure}[tbp]
  \centerline{
    \includegraphics[scale=0.3,clip]{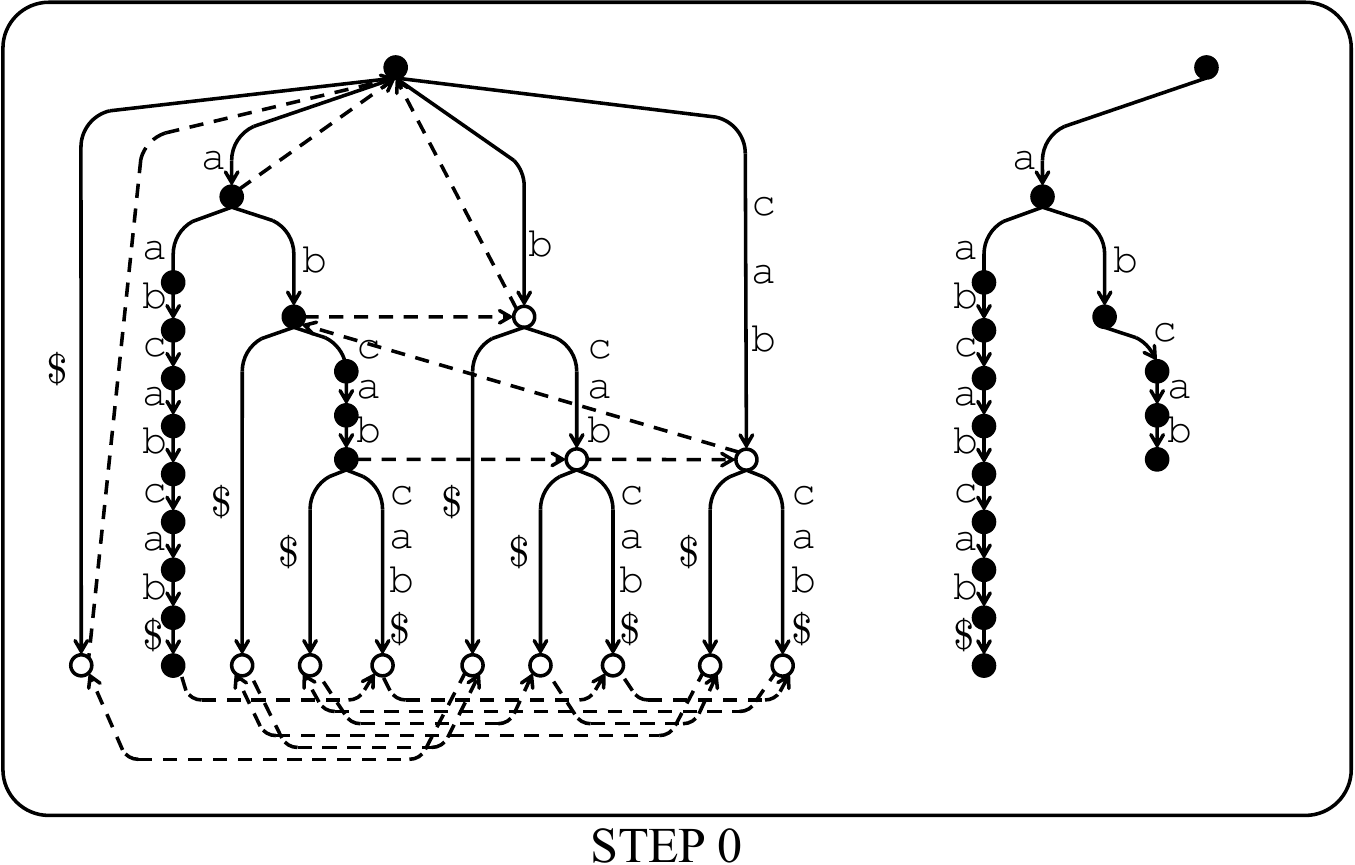}
    \hfill
    \includegraphics[scale=0.3,clip]{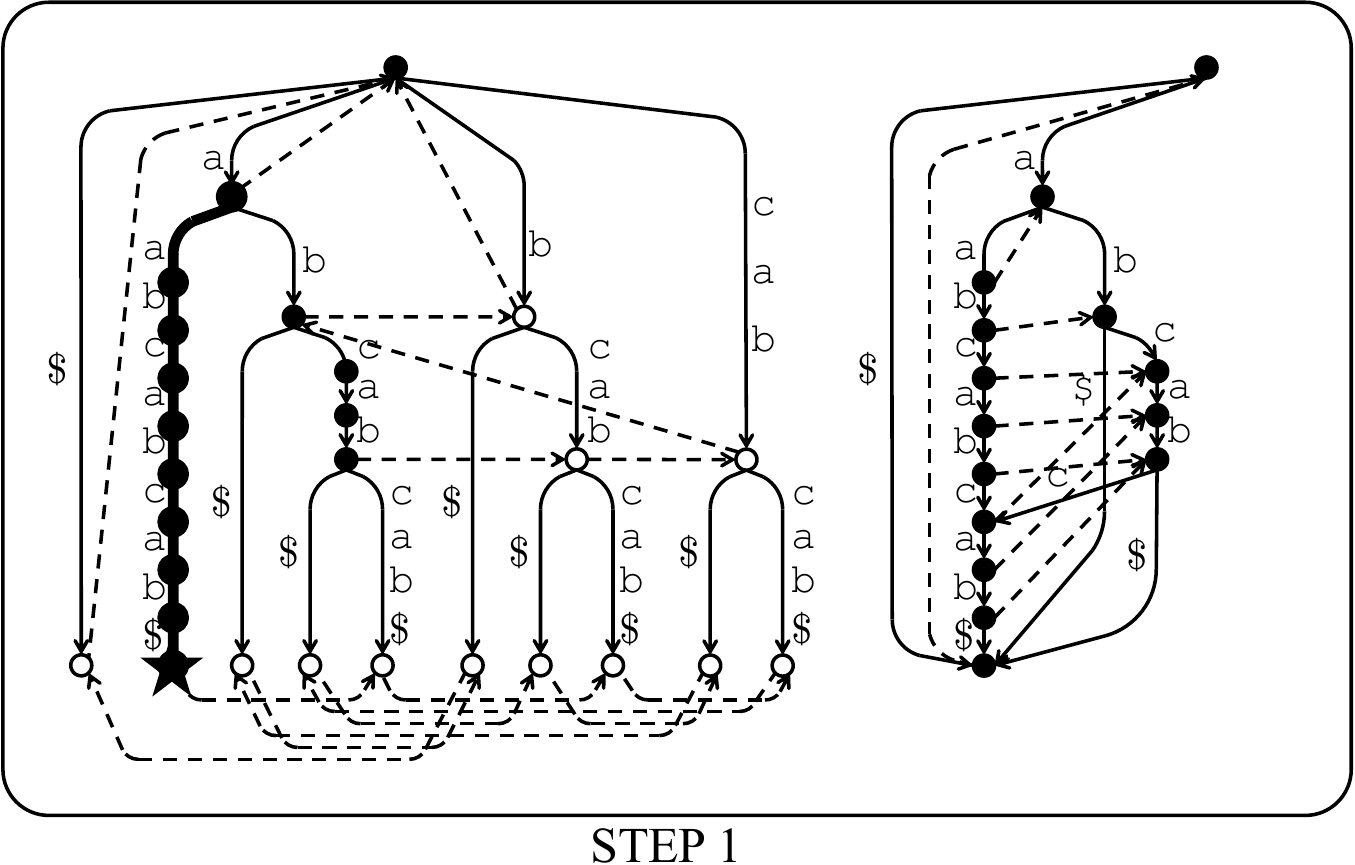}
  }
  \vspace*{1pc}
  \centerline{
    \includegraphics[scale=0.3,clip]{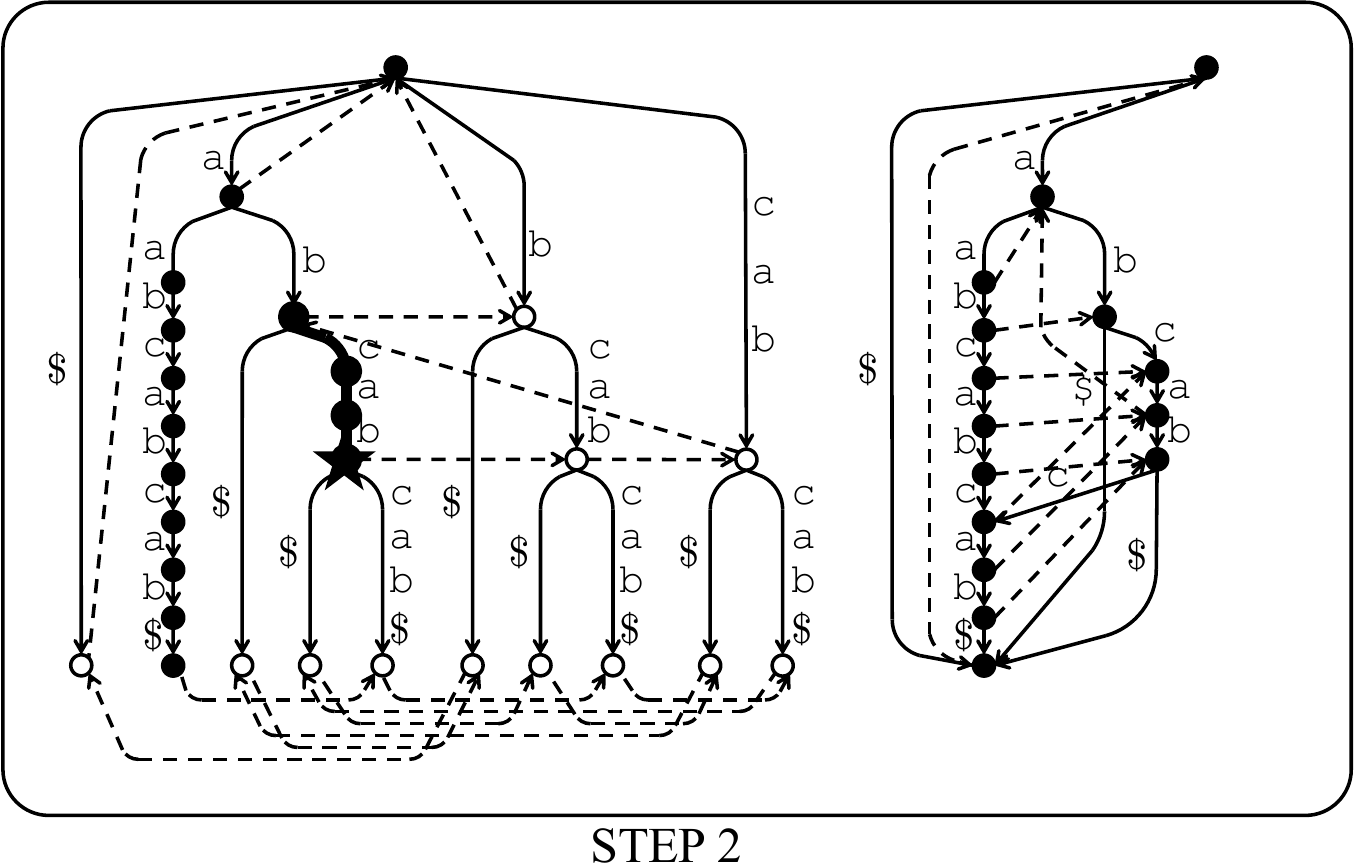}
    \hfill
    \includegraphics[scale=0.3,clip]{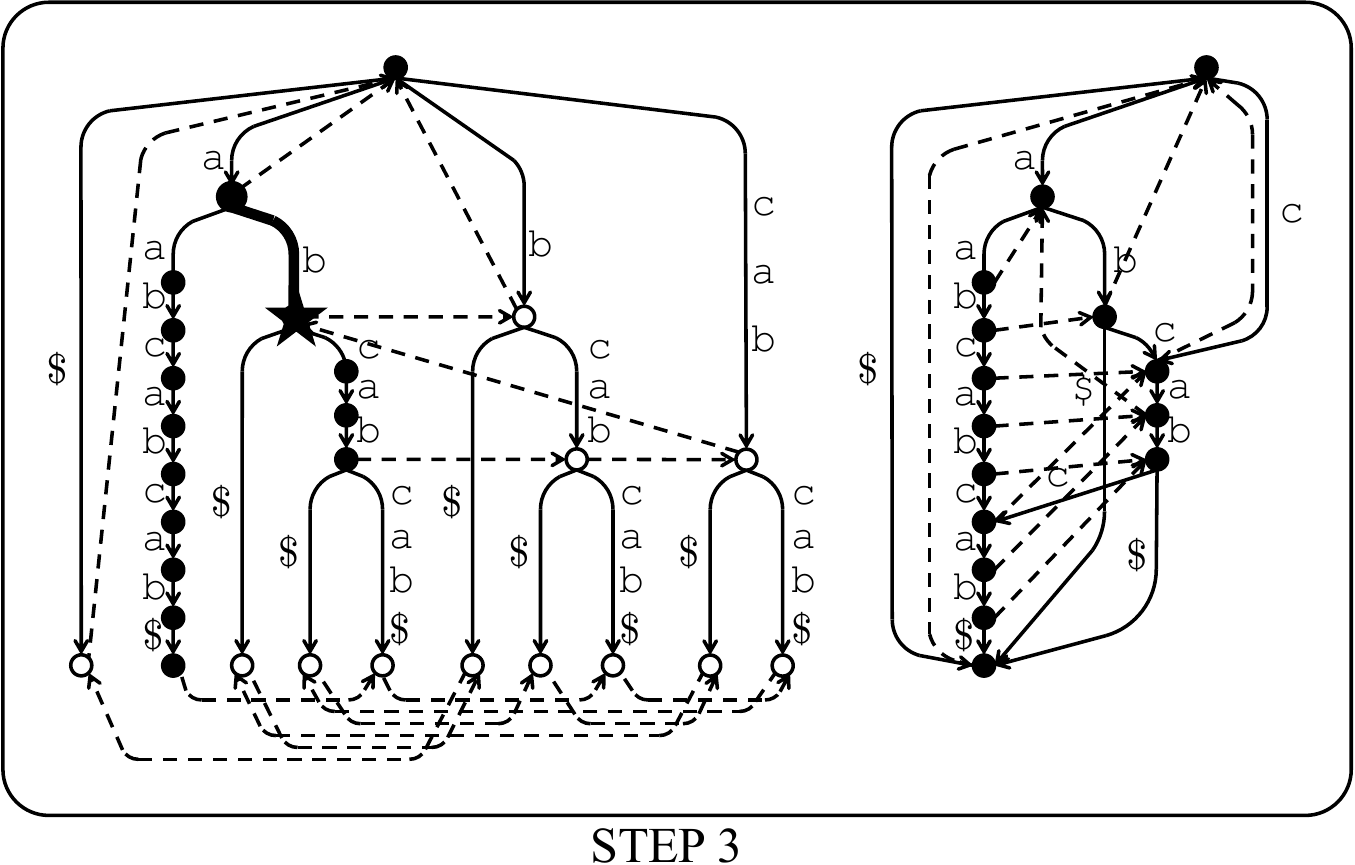}
  }
  \vspace*{1pc}
  \centerline{
    \includegraphics[scale=0.3,clip]{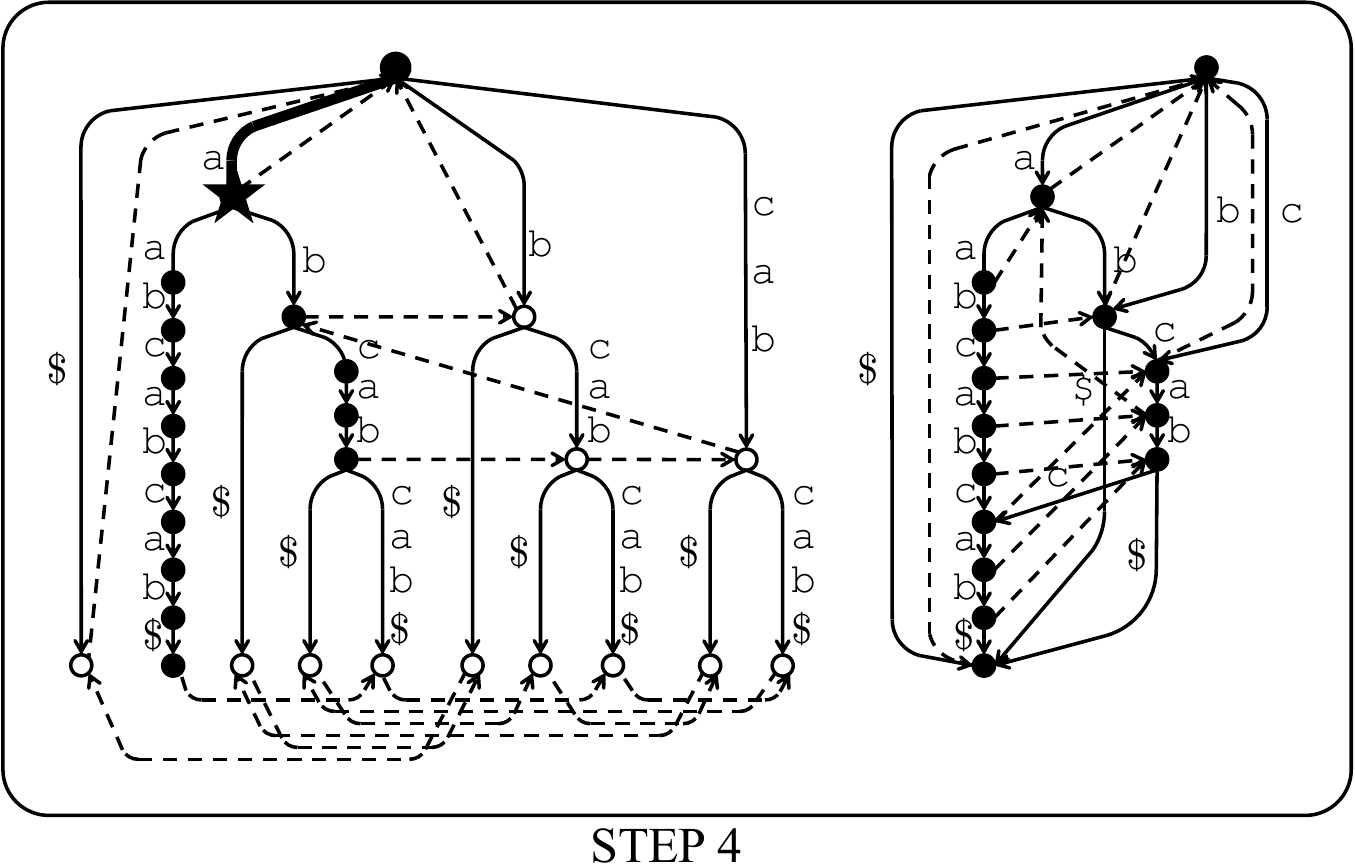}
  }
  \caption{Snapshots for the construction of $\DAWG(y)$ with $y = \mathtt{aabcabcab}\$$.
  }
  \label{fig:snapshots_DAWG_const}
\end{figure}

\section{Constructing Affix Trees in $O(n)$ Time for Integer Alphabet}
\label{sec:atree_linear_const}

%\begin{wrapfigure}{r}{0.4\textwidth}
%  \begin{center}
%    \includegraphics[width=0.4\textwidth]{AFFIXTREE.eps}
%  \end{center}
%  \caption{An example of $\ATree(y)$ with string $y = \mathtt{aabcabcab\$}$.
%    The solid arcs represent the forward edges in $\Eaf$,
%    while the broken arcs represent the backward edges in $\Eab$.
%    For simplicity, the labels of backward edges are omitted.}
%  \label{fig:affix_tree}
%\end{wrapfigure}

\begin{figure}
  \begin{center}
    \includegraphics[width=0.45\textwidth]{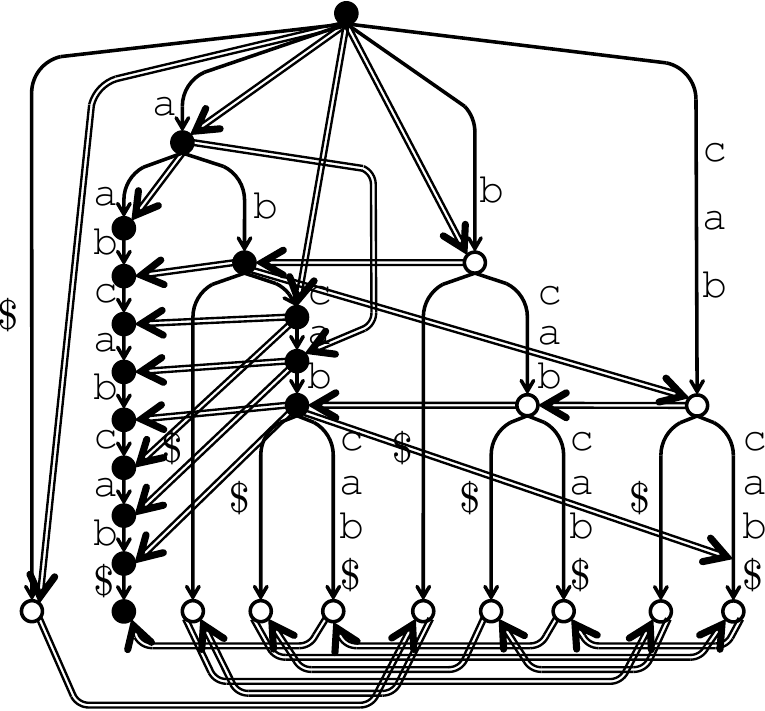}
  \end{center}
  \caption{Illustration of $\ATree(y)$ with string $y = \mathtt{aabcabcab\$}$.
    The solid arcs represent the forward edges in $\Eaf$,
    while the double-lined arcs represent the backward edges in $\Eab$.
    For simplicity, the labels of backward edges are omitted.}
  \label{fig:affix_tree}
\end{figure}

Let $y$ be the input string of length $n$ over an integer alphabet.
Recall the sets
$\mathcal{L} = \{\Lrep{x} \mid x \in \Substr(y)\}$ and
$\mathcal{R} = \{\Rrep{x} \mid x \in \Substr(y)\}$
introduced in Section~\ref{sec:DAWG_linear_const}.
For any set $S \subseteq \Sigma^* \times \Sigma^*$
of ordered pairs of strings,
let $S[1] = \{x_1 \mid (x_1, x_2) \in S \mbox{ for some } x_2 \in \Sigma^*\}$
and $S[2] = \{x_2 \mid (x_1, x_2) \in S \mbox{ for some } x_1 \in \Sigma^*\}$.

The affix tree~\cite{Stoye00} of string $y$, denoted $\ATree(y)$,
is a \emph{bidirectional} text indexing structure defined as follows:
\begin{definition}
  $\ATree(y)$ for string $y$ is an edge-labeled
  DAG $(\Va,\Ea) = (\Va, \Eaf \cup \Eab)$ which has
  two disjoint sets $\Eaf, \Eab$ of edges such that
  \begin{eqnarray*}
   \Va & = & \{(x, \rev{x}) \mid x \in \mathcal{L} \cup \mathcal{R} \},  \\
   \Eaf & = & \{((x, \rev{x}), \beta, (x\beta, \rev{\beta}\rev{x})) \mid x,x\beta \in \Va[1], \\
 && \phantom{\{} \beta \in \Sigma^+, 1\leq \forall i < |\beta|,x\cdot\beta[1..i]\notin \Va[1]\}, \\
   \Eab & = & \{((x, \rev{x}), \rev{\alpha}, (\alpha x, \rev{x}\rev{\alpha})) \mid \rev{x},\rev{x}\rev{\alpha} \in \Va[2], \\
 && \phantom{\{} \alpha \in \Sigma^+, 1\leq \forall i < |\alpha|,\rev{x}\cdot\rev{\alpha}[1..i]\notin \Va[2]\}.
  \end{eqnarray*}
  $\Eaf$ is the set of forward edges labeled by
  substrings of $y$,
  while $\Eab$ is the set of backward edges labeled by substrings of $\rev{y}$.
\end{definition}

%\begin{figure}[t]
%  \begin{center}
%    \includegraphics[width=0.4\textwidth]{AFFIXTREE.eps}
%  \end{center}
%  \caption{An example of $\ATree(y)$ with string $y = \mathtt{aabcabcab\$}$.
%    The solid arcs represent the forward edges in $\Eaf$,
%    while the broken arcs represent the backward edges in $\Eab$.
%    For simplicity, the labels of backward edges are omitted.}
%  \label{fig:affix_tree}
%\end{figure}

\begin{theorem} \label{theo:affix_tree_linear_const}
Given a string $y$ of length $n$ over an integer alphabet of polynomial size in $n$, we can compute edge-sorted $\ATree(y)$
in $O(n)$ time and space.
\end{theorem}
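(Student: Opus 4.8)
The plan is to recognise $\ATree(y)$ as the superposition of the two augmented suffix trees $\AST(y)$ and $\AST(\rev{y})$ glued along a single shared node set, and then to build both trees in $O(n)$ time by Lemma~\ref{lem:AugmentedST_const} and fuse them.

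First I would establish the structural decomposition. Matching the definitions, $\Va[1] = \mathcal{L}\cup\mathcal{R} = \AVs$, and the forward edge set $\Eaf$ is exactly $\AEs$ up to recording the reversed second coordinate of the target: both place an edge from $x$ to $x\beta$ precisely when $x,x\beta\in\AVs$ and no proper nonempty prefix extension of $x$ by $\beta$ lies in $\AVs$. For the backward edges I would use the reversal identity that $\BegPos(x)=\BegPos(x')$ holds in $y$ iff $\EndPos(\rev{x})=\EndPos(\rev{x'})$ holds in $\rev{y}$, which follows because the fixed map $i\mapsto |y|-i+1$ carries beginning positions of $x$ in $y$ to ending positions of $\rev{x}$ in $\rev{y}$. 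Since reversal preserves length, this yields $\rev{\mathcal{L}} = \mathcal{R}(\rev{y})$ and $\rev{\mathcal{R}} = \mathcal{L}(\rev{y})$, so $\Va[2] = \rev{(\mathcal{L}\cup\mathcal{R})} = \mathcal{L}(\rev{y})\cup\mathcal{R}(\rev{y}) = \AVs(\rev{y})$; and substituting $w=\rev{x}$, $\gamma=\rev{\alpha}$ turns the definition of $\Eab$ verbatim into the edge set of $\AST(\rev{y})$. Hence $\ATree(y)$ is nothing but $\AST(y)$ supplying the forward edges and $\AST(\rev{y})$ supplying the backward edges, identified node-by-node through the length-preserving bijection $x\leftrightarrow\rev{x}$ between $\AVs$ and $\AVs(\rev{y})$.

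With the decomposition in hand, I would build $\AST(y)$ in $O(n)$ time by Lemma~\ref{lem:AugmentedST_const} to obtain the forward edges, and build $\AST(\rev{y})$ for the backward edges. The real work is to realise the bijection $x\leftrightarrow\rev{x}$ as explicit pointers between the node objects of the two trees within $O(n)$ time; matching substrings directly is far too slow, and this is where I expect the main difficulty to lie. Half of the correspondence comes for free from Section~\ref{sec:DAWG_linear_const}: running the $\DAWG(y)$ construction produces suffix links $\Ld$ that form the suffix tree of $\rev{y}$ on the classes $\Reqc{\cdot}$, whose longest representatives are exactly the black nodes $\mathcal{R}$ of $\AST(y)$; this pairs each $x\in\mathcal{R}$ with its counterpart $\rev{x}\in\mathcal{L}(\rev{y})$, and lets $\AST(\rev{y})$ be grown by augmenting this suffix tree. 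What remains is to pair each forward node $x\in\mathcal{L}$ with $\rev{x}\in\mathcal{R}(\rev{y})$, i.e. to locate, for every suffix-tree node $x$ of $y$, the explicit node $\rev{x}$ of $\AST(\rev{y})$. I would do this by fixing a canonical occurrence $[i,j]$ of each $x$, so that $\rev{x}$ is the node at string depth $j-i+1$ on the path of $\AST(\rev{y})$ spelling the suffix of $\rev{y}$ beginning at position $|y|-j+1$; the $O(n)$ such weighted-ancestor queries can be resolved in $O(n)$ total time offline (equivalently, by running the same construction symmetrically on $\rev{y}$ and reading its suffix links). Verifying that these two matchings are mutually consistent on $\mathcal{L}\cap\mathcal{R}$ and jointly exhaust $\mathcal{L}\cup\mathcal{R}$ is the crux of the correctness argument.

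Finally I would assemble $\ATree(y)$ by creating one node per pair $(x,\rev{x})$ and attaching to it the forward edges inherited from $\AST(y)$ and the backward edges inherited from $\AST(\rev{y})$ through the computed pointers. As in Section~\ref{sec:folklore_algo}, these edge lists need not be sorted, but extracting all $O(n)$ forward and backward edges, radix sorting them by label over the polynomial-size integer alphabet, and re-inserting them in order costs $O(n)$ time; the total working space is $O(n)$. This yields edge-sorted $\ATree(y)$ in $O(n)$ time and space.
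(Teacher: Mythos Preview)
Your decomposition of $\ATree(y)$ as $\AST(y)$ glued to $\AST(\rev{y})$ along the bijection $x\leftrightarrow\rev{x}$ is correct, and the outline you give for realising that bijection (DAWG suffix links for the half $\mathcal{R}\leftrightarrow\mathcal{L}(\rev{y})$, then offline weighted-ancestor queries or a symmetric construction for $\mathcal{L}\leftrightarrow\mathcal{R}(\rev{y})$) can be made to work in $O(n)$ time, though the fusion step is delicate and you would need to spell out how leaf positions in the suffix-link-tree copy of $\ST(\rev{y})$ are recovered.

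The paper takes a shorter route that avoids building a second tree and matching nodes at all. It observes that not only do $\Va$ and $\Eaf$ coincide with $\AVs$ and $\AEs$, but the backward edges $\Eab$ are precisely the \emph{reversed suffix links of the nodes of $\AST(y)$}. Nodes in $\mathcal{L}$ already carry their suffix links from $\Ls$; the remaining black nodes, those in $\mathcal{R}\setminus\mathcal{L}$, are exactly the nodes whose suffix links the $\DAWG(y)$ construction of Section~\ref{sec:DAWG_linear_const} computes. The only modification is to attach each such suffix link to its black node in $\AST(y)$ instead of to the merged class in the DAG $D$. Reversing all these links yields $\Eab$ directly on $\AST(y)$, with labels recovered from stored string depths. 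Your approach gives a clean symmetric picture at the price of the node-matching machinery; the paper's gives a one-pass construction on a single structure by recognising that the backward tree is already latent in the suffix links.
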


\begin{proof}
  Clearly, there is a one-to-one correspondence between
  each node $(x, \rev{x}) \in \Va$ of $\ATree(y) = (\Va, \Eaf \cup \Eab)$ and
  each node $x \in \AVs$ of $\AST(y) = (\AVs, \AEs)$
  of Section~\ref{sec:DAWG_linear_const}
  (see also Figure~\ref{fig:stdawg} and Figure~\ref{fig:affix_tree}).
  Moreover, there is a one-to-one correspondence between
  each forward edge $(x, \beta, x\beta) \in \Eaf$ of $\ATree(y)$
  and each edge $(x, \beta, x\beta) \in \AEs$ of $\AST(y)$.
  Hence, what remains is to construct the backward edges in $\Eab$
  for $\ATree(y)$.
  A straightforward modification to our DAWG construction algorithm
  of Section~\ref{sec:DAWG_linear_const} can construct
  the backward edges of $\ATree(y)$;
  instead of working on the DAG $D$,
  we directly add the suffix links to the
  black nodes of $\AST(y)$ whose suffix links are not defined yet
  (namely, those that are neither branching nodes nor leaves
  of the suffix link tree $\SLT(y)$).
  Since the suffix links are reversed edges,
  by reversing them we obtain the backward edges of $\ATree(y)$.
  The labels of the backward edges can be easily computed in $O(n)$ time
  by storing in each node the length of the string it represents.
  Finally, we can sort the forward and backward edges in lexicographical order
  in overall $O(n)$ time, using the same idea as in
  Section~\ref{sec:DAWG_linear_const}.
\end{proof}

\section{Constructions of other text indexing structures}

In this section, we present how our algorithms for building
DAWGs and affix trees can be applied to constructing other text indexing data structures in linear time.

\subsection{DAWG for the reversed string}
\label{sec:DAWG_reversed}

Here we show that the algorithm that computes affix trees from Section~\ref{sec:atree_linear_const} can also be regarded as
a linear-time algorithm which computes the Weiner links for the forward and reversed strings simultaneously.
This will be a key to our efficient construction of other text indexing structures, which will be shown later in this section.

For the sake of simplicity, 
we here use a slightly different version of Weiner links than $\WL$ in Section~\ref{sec:folklore_algo}, which are called \emph{modified Weiner links} and are defined as follows:
$$\mWL = \{(x, a, ax) \mid  a \in \Sigma, x \in \Vs, ax \in \Substr(y)\}.$$
The difference from the normal Weiner links is that the destination nodes
of the modified Weiner links $ax$ can be \emph{implicit} (i.e. non-branching) nodes in the suffix tree (see the right diagram of Figure~\ref{fig:ATandWL}).
Hence, it is clear that there is a one-to-one correspondence between
$\WL$ and $\mWL$, and it is easy to convert $\mWL$ to $\WL$ and vice versa.

As in the case with (normal) Weiner links $\WL$, we call each modified Weiner link $(x, a, ax) \in \mWL$ \emph{explicit} if $ax \in \Vs$,
and implicit otherwise.
For each implicit modified Weiner link $(x, a, ax)$,
we consider an auxiliary nodes $ax$. 
We note that such auxiliary nodes $ax$ are
exactly the type-2 nodes of linear-size suffix tries~\cite{CrochemoreEGM16}.

The following theorem holds for the modified Weiner links.
\begin{theorem}
\label{theo:affix_WL}
For any edge $((x, \rev{x}), \rev{\alpha}, (\alpha x, \rev{x}\rev{\alpha})) \in \Eab$ of $\ATree(y)$,
$x \in \mathcal{L} \land \alpha x \notin \mathcal{L}$ iff
$(x, a, ax)$ is an implicit modified Weiner link of $\ST(y)$ where $a = \alpha[|\alpha|]$ (see Figure~\ref{fig:ATandWL}).
\end{theorem}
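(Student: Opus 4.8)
The plan is to prove both directions of the equivalence by unpacking the definitions of the backward edges $\Eab$ of $\ATree(y)$ and of the implicit modified Weiner links $\mWL$, keeping careful track of which strings are longest in their respective equivalence classes. The backward edge $((x, \rev{x}), \rev{\alpha}, (\alpha x, \rev{x}\rev{\alpha})) \in \Eab$ exists precisely when $\rev{x}, \rev{x}\rev{\alpha} \in \Va[2]$ and no proper nonempty prefix of $\rev{x}\rev{\alpha}$ (other than $\rev{x}$) lies in $\Va[2]$; since $\Va[2] = \{\rev{x} \mid x \in \mathcal{L}\cup\mathcal{R}\}$, this is equivalent to saying $x \in \mathcal{L}\cup\mathcal{R}$ and $\alpha x \in \mathcal{L}\cup\mathcal{R}$ with no intermediate member of $\mathcal{L}\cup\mathcal{R}$ strictly between them on the left. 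The point of the theorem is to translate this purely combinatorial condition on $\mathcal{L}\cup\mathcal{R}$ into the single-character left-extension condition that defines an implicit Weiner link from a genuine suffix-tree node $x$.

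First I would handle the forward direction. Assume $x \in \mathcal{L}$ and $\alpha x \notin \mathcal{L}$, and set $a = \alpha[|\alpha|]$ so that $ax$ is the one-character left extension of $x$ corresponding to the last step of the backward edge. Since $x \in \mathcal{L}$ means $x = \Lrep{x}$, the node $x$ is a genuine node of $\ST(y)$, i.e.\ $x \in \Vs$. I need to show $(x, a, ax)$ is a modified Weiner link, i.e.\ $ax \in \Substr(y)$, and that it is \emph{implicit}, i.e.\ $ax \notin \Vs$. Occurrence of $ax$ in $y$ follows from the existence of the backward edge, which witnesses $\alpha x \in \Substr(y)$ and hence $ax \in \Substr(y)$ as a suffix of $\alpha x$ that left-extends $x$. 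For implicitness I would argue that if $ax$ were a suffix-tree node then $ax = \Lrep{ax} \in \mathcal{L}$, which together with $x = \Lrep{x}$ would force a member of $\mathcal{L}$ strictly between $\rev{x}$ and $\rev{x}\rev{\alpha}$ in the prefix chain on the $\rev{y}$ side, contradicting the minimality condition in the definition of $\Eab$ (the edge label $\rev\alpha$ must be the shortest reach from $\rev x$ to the next node of $\Va[2]$). This is where the hypothesis $\alpha x \notin \mathcal{L}$ is used to pin down that the backward edge's endpoint $\alpha x$ is a $\mathcal{R}$-node rather than an $\mathcal{L}$-node.

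For the converse, assume $(x, a, ax)$ is an implicit modified Weiner link with $a = \alpha[|\alpha|]$. Implicit means $x \in \Vs = \mathcal{L}$ (so the source is a suffix-tree node, giving $x \in \mathcal{L}$) while $ax \notin \Vs$, so $ax$ is a non-branching position in $\ST(y)$; I would then use the relationship established in Section~\ref{sec:DAWG_linear_const} between such auxiliary left-extension nodes and $\mathcal{R}$ to show that the corresponding node $\alpha x \in \mathcal{R}$ while $\alpha x \notin \mathcal{L}$, yielding the required backward edge and the conclusion $\alpha x \notin \mathcal{L}$. The main obstacle I anticipate is the bookkeeping that connects the \emph{left}-extension structure of the modified Weiner links (living on $\ST(y)$ over $y$) with the \emph{prefix}-chain structure on $\Va[2]$ (living over $\rev{y}$) through the reversal map; in particular one must verify that the auxiliary implicit-Weiner-link target $ax$ corresponds exactly to a single $\Eab$ step whose other endpoint $\alpha x$ is the next $\mathcal{R}$-node, so that the ``no intermediate node'' condition of $\Eab$ matches the ``$ax \notin \Vs$ but $\alpha x \in \mathcal{R}$'' condition. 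Once the correspondence between $\ATree(y)$ nodes and $\AST(y)$ nodes (already recorded in the proof of Theorem~\ref{theo:affix_tree_linear_const}) and Lemma~\ref{lem:black_nodes_monotonicity} are invoked to control how $\mathcal{R}$-membership propagates along these extensions, both implications follow.
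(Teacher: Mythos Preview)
Your forward direction ($\Rightarrow$) is essentially the paper's argument: you use the ``no intermediate node'' clause in the definition of $\Eab$ to rule out $ax\in\mathcal{L}$ when $|\alpha|>1$, and the hypothesis $\alpha x\notin\mathcal{L}$ handles the boundary case $|\alpha|=1$ (where $ax=\alpha x$). That part is fine.

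The reverse direction ($\Leftarrow$) has a genuine gap. First, a structural point: the backward edge $((x,\rev{x}),\rev{\alpha},(\alpha x,\rev{x}\rev{\alpha}))\in\Eab$ is part of the universal hypothesis of the theorem, so in this direction you are \emph{given} the edge and only need to derive $x\in\mathcal{L}$ and $\alpha x\notin\mathcal{L}$. There is nothing to ``yield'' and no intermediate-node condition to verify; your anticipated ``main obstacle'' is not an obstacle at all. Second, and more importantly, your proposed mechanism for obtaining $\alpha x\notin\mathcal{L}$ does not work: the machinery of Section~\ref{sec:DAWG_linear_const} and Lemma~\ref{lem:black_nodes_monotonicity} controls membership in $\mathcal{R}$ (black nodes), and establishing $\alpha x\in\mathcal{R}$ says nothing about whether $\alpha x\in\mathcal{L}$, since $\mathcal{L}$ and $\mathcal{R}$ are not disjoint. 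You never explain how $\mathcal{R}$-membership would exclude $\mathcal{L}$-membership.

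The argument the paper uses (and that you are missing) is a one-line observation about $\mathcal{L}$ alone: the implicit-link hypothesis gives $ax\notin\mathcal{L}$, i.e.\ there is a character $c$ with $\BegPos(axc)=\BegPos(ax)$. For any $w\in\Sigma^*$ with $wax\in\Substr(y)$ we have $\BegPos(wax)+|w|\subseteq\BegPos(ax)$, so every occurrence of $wax$ is also followed by $c$, whence $wax\notin\mathcal{L}$. Taking $w=\alpha[1..|\alpha|-1]$ gives $\alpha x\notin\mathcal{L}$ directly. No appeal to $\mathcal{R}$, $\AST(y)$, or Lemma~\ref{lem:black_nodes_monotonicity} is needed.
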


\begin{figure}[thb]
  \begin{center}
    \includegraphics[width=0.99\textwidth]{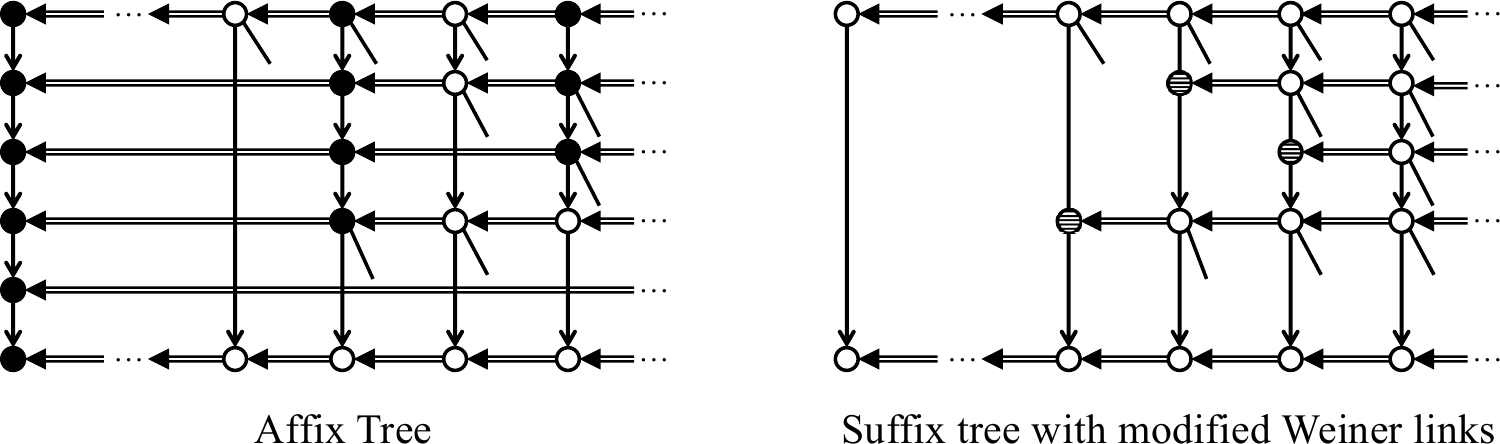}
  \end{center}
  \caption{Illustration for a relation between affix trees and modified Weiner links.
  The black circles represent black nodes $x \in \mathcal{L}$ of affix trees,
  while the striped circles represent auxiliary nodes $ax \notin \Vs$ for modified implicit Weiner link.}
  \label{fig:ATandWL}
\end{figure}

\begin{proof}
($\Leftarrow$)
Since $(x, a, ax)$ is an implicit modified Weiner link of $\ST(y)$,
$x$ is in $\mathcal{L}$ and $wax$ is not in $\mathcal{L}$ for any string $w \in \Sigma^*$.
Thus $x \in \mathcal{L} \land \alpha x \notin \mathcal{L}$.

\noindent ($\Rightarrow$)
By the definition of $\Eab$,
for any $i$~($1\leq i < |\alpha|$), we have $\rev{x}\cdot\rev{\alpha}[1..i]\notin \Va[2]$.
Thus $\alpha[|\alpha|]x \notin \mathcal{L}$.
Obliviously $ax \in \Substr(y)$, and thus
$(x, a, ax)$ is an implicit modified Weiner link of $\ST(y)$.
\end{proof}

By Theorem~\ref{theo:affix_WL}, when constructing the affix tree $\ATree(y)$ of the input string $y$,
the set of modified Weiner links $\mWL$ of $\ST(y)$ is also computed.
Because affix trees are symmetric data structures for the string and its reversal,
the modified Weiner links of the suffix tree $\ST(\rev{y})$ for the reversed string $\rev{y}$ are also computed at the same time.

\subsection{Symmetric CDAWGs}
\label{sec:symmetric_cdawg}

The \emph{Compact DAWG}~\cite{Blumer87} (\emph{CDAWG}) of string $y$,
denoted $\CDAWG(y)$ is an edge-labeled DAG that can be obtained by merging isomorphic subtrees of $\ST(y)$, or equivalently
by performing path-compressions to $\DAWG(y)$.
Each internal node of $\CDAWG(y)$ corresponds to a \emph{maximal repeat}
of the string $y$,
where a substring $x$ of $y$ is called a maximal repeat of $y$
if $x$ occurs in $y$ more than once,
and prepending or appending a character to $x$ decreases the number of its occurrences in $y$.
By the nature of maximal repeats,
$\CDAWG(y)$ for string $y$ and $\CDAWG(\hat{y})$ can share the same set of nodes.
The resulting data structure is yet another bidirectional text indexing structure, called \emph{symmetric CDAWG} for $y$.
Blumer et al.~\cite{Blumer87} showed the following:
\begin{theorem}[\cite{Blumer87}] \label{theo:symmetric_CDAWG}
  Given $\ST(y)$ and $\DAWG(\rev{y})$ for string $y$ which share the same nodes,
  one can build the symmetric CDAWG for string $y$ in linear time.
\end{theorem}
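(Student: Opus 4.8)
The plan is to build the two ``halves'' of the symmetric CDAWG separately and then glue them along their common node set. Recall that an internal node of $\CDAWG(y)$ is a maximal repeat of $y$, i.e., a substring that is both left- and right-maximal and occurs at least twice, and that reversal maps maximal repeats of $y$ bijectively onto maximal repeats of $\rev{y}$. Hence the internal nodes of $\CDAWG(y)$ and of $\CDAWG(\rev{y})$ are in one-to-one correspondence via $x \leftrightarrow \rev{x}$, which is exactly the sense in which the two CDAWGs ``share the same nodes''; together with the source and the sink these form the node set of the symmetric CDAWG. The forward edges of the symmetric CDAWG are the edges of $\CDAWG(y)$ and the backward edges are the (reversed) edges of $\CDAWG(\rev{y})$.

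First I would identify, among the shared nodes of $\ST(y)$ and $\DAWG(\rev{y})$, those that are maximal repeats. By Lemma~\ref{lem:revDAWG} each node of $\DAWG(\rev{y})$ is a class $\Leqc{x}$ whose longest member $\Lrep{x}$ is a node of $\ST(y)$, so the two structures range over the same node set $\Vs$. A node $x$ is right-maximal exactly when it is a branching internal node of $\ST(y)$, and it is left-maximal exactly when at least two distinct characters precede its occurrences in $y$; the latter is read off directly as the out-degree of $\Leqc{x}$ in $\DAWG(\rev{y})$ being at least two, since each out-edge of $\DAWG(\rev{y})$ carries a distinct preceding character (plus the boundary case that $x$ occurs as a prefix of $y$). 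Thus, in $O(1)$ time per shared node, I can mark $x$ as a maximal repeat iff it is branching in $\ST(y)$ and has out-degree at least two in $\DAWG(\rev{y})$, for $O(n)$ time overall.

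Next I would obtain the forward edges by merging the isomorphic subtrees of $\ST(y)$. A suffix-tree node $x$ that is right- but not left-maximal is always preceded by a single character $a$, so its subtree is isomorphic to that of $ax$ and the two are identified (following the explicit Weiner link $(x,a,ax)$). Iterating this identification up the reversed-suffix-link chain sends every suffix-tree node to the maximal repeat obtained by prepending its forced left context, and each surviving node then inherits the suffix-tree out-edges of its class (with duplicate induced edges collapsed and labels kept as position pairs); this is the standard linear-time suffix-tree-to-CDAWG contraction, with the leaves folding into the sink. Symmetrically, I would obtain the backward edges by path-compressing $\DAWG(\rev{y})$, contracting every node of out-degree one into its successor and concatenating the single-character labels into string labels. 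By the maximality analysis above, the nodes surviving the path-compression of $\DAWG(\rev{y})$ are precisely the maximal repeats, i.e., the very same marked shared nodes kept by the forward contraction, so the forward and backward edge sets attach to a common node set with no further work. Finally, as in Section~\ref{sec:folklore_algo}, I radix-sort the forward and backward edges by their first label characters in $O(n)$ time, which is possible for the integer alphabet.

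The main obstacle is the bookkeeping that guarantees the two halves truly coincide on their node sets: I must argue that left-maximality (detected via the $\DAWG(\rev{y})$ out-degree) and right-maximality (detected via $\ST(y)$ branching) together characterize exactly the nodes retained by both transformations, and handle the asymmetric treatment of the source, the sink, and prefix occurrences of $y$, where the presence of the terminal $\mathtt{\$}$ keeps these boundary cases manageable. Everything else — the edge relabeling during contraction and path-compression and the final sort — is routine and clearly runs within $O(n)$ time and space.
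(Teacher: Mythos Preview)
The paper does not supply its own proof of this theorem: it is quoted as a result of Blumer et al.~\cite{Blumer87}, with only the accompanying remark that ``the algorithm by Blumer et al.\ in the above theorem does not use character comparisons.'' There is therefore nothing in the paper to compare your argument against.

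That said, your sketch is a sound reconstruction of how such a construction goes. The identification of the internal CDAWG nodes with maximal repeats, reading right-maximality off as branching in $\ST(y)$ and left-maximality off as out-degree $\geq 2$ in $\DAWG(\rev{y})$, and then performing the two dual contractions (subtree merging on $\ST(y)$ for the forward edges, path compression on $\DAWG(\rev{y})$ for the backward edges) is exactly the standard picture, and your key observation---that both contractions retain precisely the same subset of the shared node set---is correct and is what makes the gluing free. Two small remarks: first, the final radix sort you append is not part of the cited theorem (which, as the paper stresses, uses no character comparisons); in the paper the sorting belongs to the subsequent corollary for integer alphabets. Second, your boundary discussion is slightly off in that the paper only assumes a terminal~$\$$ at the \emph{right} end of $y$, so prefixes of $y$ are not automatically symmetrised; the usual fix (and what Blumer et al.\ do) is to treat the source and sink specially rather than rely on a left end-marker, which you correctly flag as the place needing care.
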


The algorithm by Blumer et al. in the above theorem does not use character comparisons.
Thus, by applying this method after building $\DAWG(\rev{y})$
over $\ST(y)$ using our technique from Section~\ref{sec:DAWG_reversed},
we obtain the following corollary:

\begin{corollary}
Given a string $y$ of length $n$ over an integer alphabet of polynomial size in $n$, we can compute edge-sorted symmetric CDAWG for $y$ in $O(n)$ time and space.
\end{corollary}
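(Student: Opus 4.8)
The plan is to stitch together the linear-time affix-tree construction of Theorem~\ref{theo:affix_tree_linear_const} with Blumer et al.'s reduction in Theorem~\ref{theo:symmetric_CDAWG}, and then to repair the edge order with a radix sort. First I would invoke Theorem~\ref{theo:affix_tree_linear_const} to build $\ATree(y)$ in $O(n)$ time and space. By Theorem~\ref{theo:affix_WL} together with the discussion in Section~\ref{sec:DAWG_reversed}, this construction simultaneously yields the modified Weiner links $\mWL$ of $\ST(y)$; converting these to the ordinary Weiner links $\WL$ costs only $O(n)$, and Lemma~\ref{lem:revDAWG} then identifies the Weiner-link DAG over $\Vs$ with $\DAWG(\rev{y})$.

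The crucial observation is that this representation of $\DAWG(\rev{y})$ is laid directly on top of the suffix tree: by Lemma~\ref{lem:revDAWG} its node set is $\{\Leqc{x} \mid x \in \Vs\}$, so the map $x \mapsto \Leqc{x}$ is a bijection between the nodes of $\ST(y)$ and the nodes of $\DAWG(\rev{y})$. This is exactly the ``share the same nodes'' precondition required by Blumer et al.'s Theorem~\ref{theo:symmetric_CDAWG}. Hence I can feed $\ST(y)$ together with this $\DAWG(\rev{y})$ into their algorithm and obtain the symmetric CDAWG for $y$ in linear time, without having built $\DAWG(\rev{y})$ as a separate, detached structure.

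The one point that needs care --- and what I expect to be the main obstacle --- is edge ordering. The Blumer et al. algorithm of Theorem~\ref{theo:symmetric_CDAWG} performs no character comparisons, so although it runs in linear time it does not guarantee that the out-edges of each CDAWG node are arranged in lexicographic order. To remedy this I would reuse the sorting routine from Section~\ref{sec:folklore_algo}: extract all $O(n)$ edges of the symmetric CDAWG (in both the forward and reverse directions) by a single traversal, radix-sort them by their first label characters --- possible in $O(n)$ because the alphabet has polynomial size in $n$ --- and re-insert each edge at its source node in the sorted order. Because the symmetric CDAWG has $O(n)$ nodes and edges, this sorting step runs in $O(n)$ time and $O(n)$ space.

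Summing the costs --- $O(n)$ for the affix tree (which also delivers $\DAWG(\rev{y})$ over the shared node set), $O(n)$ for Blumer et al.'s reduction, and $O(n)$ for the radix sort --- gives the claimed $O(n)$-time, $O(n)$-space bound, establishing the corollary.
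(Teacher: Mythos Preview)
Your proposal is correct and follows essentially the same route as the paper: build $\DAWG(\rev{y})$ on top of the node set of $\ST(y)$ via the affix-tree/Weiner-link machinery of Section~\ref{sec:DAWG_reversed}, then invoke Blumer et al.'s reduction from Theorem~\ref{theo:symmetric_CDAWG}. The paper's own justification is only the one sentence preceding the corollary; your write-up simply makes explicit the details it leaves implicit, in particular the final radix-sort pass needed to obtain the ``edge-sorted'' guarantee in the statement.
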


\subsection{Linear-size suffix tries}
\label{sec:linear-size_suffix_tries}

When reversed and unlabeled, the modified Weiner links are the suffix links of the so-called \emph{type-2 nodes} of the \emph{linear-size suffix tries}~\cite{CrochemoreEGM16}.
Let $\LSTrie(y)$ denote the linear-size suffix trie of string $y$.
Intuitively, $\LSTrie(y)$ is an edge-labeled rooted tree obtained from $\ST(y)$ as follows:
\begin{enumerate}
\item[(1)] For each \emph{reversed} and \emph{unlabeled} modified Weiner link $(ax, x)$, insert a non-branching explicit node $ax$ if $ax$ is not a branching node of $\ST(y)$. Such non-branching nodes are called type-2 nodes.
\item[(2)] Retain the first character in each edge label, and remove all the following characters along the edge.
\end{enumerate}
It is known that one can restore the whole edge label efficiently on $\LSTrie(y)$, without explicitly storing the input string $y$~\cite{CrochemoreEGM16,HendrianTI19,abs-2301-04295}.
Now the following corollary is immediate from Theorem~\ref{theo:affix_tree_linear_const} and Theorem~\ref{theo:affix_WL}:

\begin{corollary} \label{coro:linear_size_suffix_trie_linear_const}
Given a string $y$ of length $n$ over an integer alphabet of polynomial size in $n$, we can compute edge-sorted $\LSTrie(y)$ in $O(n)$ time and space.
\end{corollary}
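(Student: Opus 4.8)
The plan is to reduce Corollary~\ref{coro:linear_size_suffix_trie_linear_const} to the machinery already in place, namely Theorem~\ref{theo:affix_tree_linear_const} and Theorem~\ref{theo:affix_WL}. First I would recall the construction of $\LSTrie(y)$: it is $\ST(y)$ augmented by the type-2 nodes, where each type-2 node $ax$ arises from a \emph{reversed and unlabeled} modified Weiner link $(ax, x)$ such that $ax$ is not already branching in $\ST(y)$. By Theorem~\ref{theo:affix_WL}, these implicit modified Weiner links are exactly the backward edges $((x,\rev{x}), \rev{\alpha}, (\alpha x, \rev{x}\rev{\alpha})) \in \Eab$ of $\ATree(y)$ for which $x \in \mathcal{L}$ but $\alpha x \notin \mathcal{L}$, with $a = \alpha[|\alpha|]$. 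Thus the type-2 nodes of $\LSTrie(y)$ are precisely the auxiliary nodes $ax \notin \Vs$ that we already materialize when building the backward edges of $\ATree(y)$.

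The construction then proceeds in a few concrete steps. First I would invoke Theorem~\ref{theo:affix_tree_linear_const} to build the edge-sorted $\ATree(y)$ in $O(n)$ time and space; along the way, as noted after Theorem~\ref{theo:affix_WL}, the modified Weiner links $\mWL$ of $\ST(y)$ are computed as a byproduct. Next, from the set of forward edges $\Eaf$ I recover $\ST(y)$ itself (this is the one-to-one correspondence between $\Eaf$ and $\AEs$ noted in the proof of Theorem~\ref{theo:affix_tree_linear_const}). Then, for each implicit modified Weiner link $(x,a,ax)$ identified via the backward edges of $\ATree(y)$, I insert the corresponding type-2 node $ax$ as a non-branching explicit node along the appropriate suffix-tree edge, exactly realizing step~(1) of the $\LSTrie$ construction. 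Reversing and unlabeling these modified Weiner links yields the suffix links attached to the type-2 nodes. Finally, step~(2) of the construction---retaining only the first character of each edge label---is purely local and takes $O(n)$ time; the remaining edge-label information can be restored on demand using the techniques of~\cite{CrochemoreEGM16,HendrianTI19,abs-2301-04295} without storing $y$ explicitly.

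For the sorting and complexity bookkeeping, I would observe that $|\mWL| = O(n)$ since $|\WL| = O(n)$ and $\mWL$ is in one-to-one correspondence with $\WL$, so the number of inserted type-2 nodes is $O(n)$ and the total size of $\LSTrie(y)$ is $O(n)$. The edges and suffix links can be edge-sorted in $O(n)$ time by the same radix-sort argument used in Section~\ref{sec:DAWG_linear_const} and Section~\ref{sec:folklore_algo}: extract all edges, radix sort by label (valid since the alphabet is of polynomial size in $n$), and re-insert in sorted order. Combining these steps gives the claimed $O(n)$ time and space bound.

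I expect the main obstacle to be the careful verification that the auxiliary nodes produced during the affix-tree construction coincide \emph{exactly} with the type-2 nodes of $\LSTrie(y)$, and that no type-2 node is missed or doubly inserted when a node $ax$ happens to be branching in $\ST(y)$ (in which case it is already present and should not be re-added). The condition ``$ax$ is not a branching node of $\ST(y)$'' in step~(1) is precisely the implicitness condition $\alpha x \notin \mathcal{L}$ in Theorem~\ref{theo:affix_WL}, so this correspondence is clean, but it must be stated explicitly to make the corollary rigorous rather than merely plausible. Once this identification is pinned down, the rest of the argument is routine amortized-$O(n)$ traversal and radix-sort bookkeeping.
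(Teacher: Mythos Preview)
Your proposal is correct and follows essentially the same approach as the paper, which simply declares the corollary ``immediate from Theorem~\ref{theo:affix_tree_linear_const} and Theorem~\ref{theo:affix_WL}'' after noting that the auxiliary nodes $ax$ for implicit modified Weiner links are exactly the type-2 nodes of $\LSTrie(y)$. You have fleshed out the details the paper leaves implicit; the only minor imprecision is that $\Eaf$ corresponds to the edges of $\AST(y)$ rather than $\ST(y)$, but $\ST(y)$ is already available from the construction pipeline, so nothing needs to be ``recovered.''
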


\section{Computing Minimal Absent Words in $O(n + |\MAW(y)|)$ Time}
\label{sec:MAW}

As an application to our $O(n)$-time DAWG construction algorithm
of Section~\ref{sec:DAWG_linear_const},
in this section we show an optimal time algorithm
to compute the set of all minimal absent words of a given string
over an integer alphabet.

Finding minimal absent words of length 1 for a given string $y$
(i.e., the characters not occurring in $y$) is easy to do
in $O(n + \sigma)$ time and $O(1)$ working space
for an integer alphabet,
where $\sigma$ is the alphabet size.
In what follows, we concentrate on finding minimal absent words
of $y$ of length at least 2.

Crochemore et al.~\cite{crochemore98:_autom_forbid_words}
proposed a $\Theta(\sigma n)$-time algorithm to compute\\
$\MAW(y)$ for a given string $y$ of length $n$.
The following two lemmas, which show tight connections between $\DAWG(y)$ and $\MAW(y)$,
are implicitly used in their algorithm but under a somewhat different formulation.
Since our $O(n + |\MAW(y)|)$-time solution is built on the lemmas,
we give a proof for completeness.

%\begin{wrapfigure}{r}{0.3\textwidth}
% \centerline{\includegraphics[width=0.25\textwidth]{daw.eps}}
% \caption{Computing minimal absent words from a DAWG. In this case, $axb$ is a MAW since
%   it does not occur in the string while $ax$ and $xb$ do.}
% \label{fig:daw}
%\end{wrapfigure}

\begin{figure}
 \centerline{\includegraphics[width=0.35\textwidth]{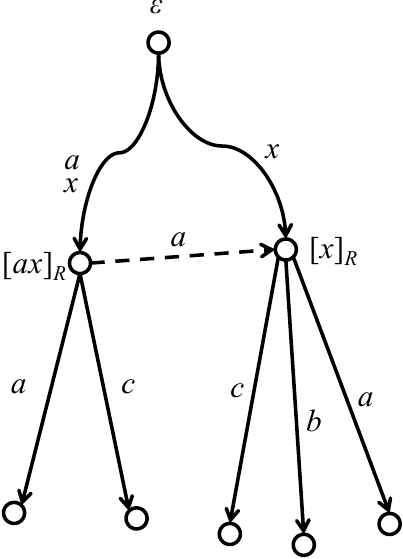}}
 \caption{Computing minimal absent words for the input string from the DAWG. In this case, $axb$ is a MAW since it does not occur in the string while $ax$ and $xb$ do.}
 \label{fig:daw}
\end{figure}

\begin{lemma}
\label{lem:daw}
Let $a,b \in \Sigma$ and $x\in \Sigma^*$.
If $axb\in \MAW(y)$, then $x = \Rrep{x}$,
namely, $x$ is the longest string represented by
node $\Reqc{x} \in \Vd$ of $\DAWG(y)$.
\end{lemma}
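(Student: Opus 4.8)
The plan is to argue by contradiction. Suppose $axb \in \MAW(y)$ but $x \neq \Rrep{x}$. By Lemma~\ref{lem:maw}, the membership $axb \in \MAW(y)$ unpacks into three facts: $axb \notin \Substr(y)$, $ax \in \Substr(y)$, and $xb \in \Substr(y)$. These three ingredients are exactly what I would play against each other.

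First I would exploit the assumed failure $x \neq \Rrep{x}$. Since $\Rrep{x}$ is the longest member of $\Reqc{x}$ and $\EndPos(\Rrep{x}) = \EndPos(x)$, the string $x$ is a proper suffix of $\Rrep{x}$. Writing $\Rrep{x} = wcx$ with $c \in \Sigma$ the character immediately preceding the suffix $x$, the chain $\EndPos(x) = \EndPos(\Rrep{x}) \subseteq \EndPos(cx) \subseteq \EndPos(x)$ forces $\EndPos(cx) = \EndPos(x)$. The structural consequence I would extract is that every occurrence of $x$ in $y$ is immediately preceded by the single character $c$ (in particular, $x$ cannot occur as a prefix of $y$, since a prefix occurrence has no preceding character).

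Next I would feed in the remaining two facts. Because $ax \in \Substr(y)$, there is at least one occurrence of $x$ preceded by $a$; combined with the uniqueness of the preceding character $c$, this pins down $a = c$, so in fact $\EndPos(ax) = \EndPos(x)$. Finally, using $xb \in \Substr(y)$, I would pick an occurrence of $x$ immediately followed by $b$; its right end lies in $\EndPos(x) = \EndPos(ax)$, so that same occurrence is preceded by $a$, which exhibits $axb$ as a substring of $y$. This contradicts $axb \notin \Substr(y)$, completing the argument and forcing $x = \Rrep{x}$.

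The only delicate point is the chain of $\EndPos$ inclusions together with the boundary case of a prefix occurrence, so I expect the main obstacle to be stating cleanly why $x \neq \Rrep{x}$ forces a single preceding character — that is, that equality of ending-position sets between $x$ and a strictly longer right-equivalent string rules out both an occurrence of $x$ at position $1$ and any occurrence preceded by a character other than $c$. Everything after that is routine bookkeeping with occurrence positions, converting $\EndPos$ memberships back into explicit substring occurrences of $ax$, $xb$, and hence $axb$.
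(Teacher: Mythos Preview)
Your argument is correct and is essentially the same as the paper's: both proceed by contradiction, use $x \neq \Rrep{x}$ to produce a character $c$ with $\EndPos(cx)=\EndPos(x)$ that must precede every occurrence of $x$, and then combine this with $ax,xb \in \Substr(y)$ to force $axb \in \Substr(y)$. The only cosmetic difference is the order in which the contradiction is reached (the paper first shows $cxb \in \Substr(y)$ and hence $c\neq a$, then contradicts $ax \in \Substr(y)$; you first pin down $a=c$ and then exhibit $axb$).
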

\begin{proof}
%Suppose $axb \in \MAW(y)$.
Assume on the contrary that $x \neq \Rrep{x}$.
Since $x$ is not the longest string of $\Reqc{x}$,
there exists a character $c \in \Sigma$
such that $cx \in \Substr(y)$ and $\Reqc{x} = \Reqc{cx}$.
Since $axb\in \MAW(y)$, it follows from Lemma~\ref{lem:maw}
that $xb \in \Substr(y)$.
Since $\Reqc{x} = \Reqc{cx}$,
$c$ always immediately precedes $x$ in $y$.
Thus we have $cxb \in \Substr(y)$.

Since $axb \in \MAW(y)$, $c \neq a$.
On the other hand, it follows from Lemma~\ref{lem:maw}
that $ax \in \Substr(y)$.
However, this contradicts that
$c$ always immediately precedes $x$ in $y$ and $c \neq a$.
Consequently, if $axb \in \MAW(y)$, then $x = \Rrep{x}$.
\end{proof}

For any node $v \in \Vd$ of $\DAWG(y)$ and character $b \in \Sigma$,
we write $\DAWGedge(v, b) = u$ if $(v, b, u) \in \Ed$ for some $u \in \Vd$,
and write $\DAWGedge(v, b) = \nil$ otherwise.
For any suffix link $(u, a, v) \in \Ld$ of $\DAWG(y)$,
we write $\DAWGlink(u) = v$. Note that there is exactly one
suffix link coming out from each node
% $u$ Since there exactly exist a single suffix link from each node
$u \in \Vd$ of $\DAWG(y)$, so the character $a$ is unique for each node $u$.

\begin{lemma}
\label{lem:seed}
Let $a,b \in \Sigma$ and $x\in \Sigma^*$.
Then, $axb \in \MAW(y)$ iff
$x = \Rrep{x}$,
$\DAWGedge(\Reqc{x}, b) = \Reqc{xb}$,
$\DAWGlink(\Reqc{ax}) = \Reqc{x}$, and
$\DAWGedge(\Reqc{ax},b) = \nil$.
% $(\Reqc{x},b,\Reqc{xb}) \in \Ed$,
% $(\Reqc{ax},a,\Reqc{x}) \in \Ld$, and
% $(\Reqc{ax},b,\Reqc{axb}) \notin \Ed$.
\end{lemma}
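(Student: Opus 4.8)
The plan is to prove the biconditional in Lemma~\ref{lem:seed} by establishing each direction separately, using Lemma~\ref{lem:maw} and Lemma~\ref{lem:daw} as the main tools, together with the defining properties of $\DAWG(y)$, its edges $\DAWGedge$, and its suffix links $\DAWGlink$. The core idea is to translate the combinatorial characterization of MAWs from Lemma~\ref{lem:maw} (that $axb \in \MAW(y)$ iff $axb \notin \Substr(y)$, $ax \in \Substr(y)$, and $xb \in \Substr(y)$) into statements about the DAWG structure, where membership in $\Substr(y)$ corresponds to the existence of nodes and transitions.

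For the forward direction ($\Rightarrow$), assume $axb \in \MAW(y)$. First I would invoke Lemma~\ref{lem:daw} directly to obtain $x = \Rrep{x}$, which is the first required condition and means $\Reqc{x}$ is the node of $\DAWG(y)$ whose longest represented string is $x$. Next, since $xb \in \Substr(y)$ by Lemma~\ref{lem:maw}, the definition of $\Ed$ guarantees the edge $(\Reqc{x}, b, \Reqc{xb}) \in \Ed$, i.e.\ $\DAWGedge(\Reqc{x}, b) = \Reqc{xb}$. Similarly, since $ax \in \Substr(y)$, the string $ax$ is represented by some node $\Reqc{ax}$, and the suffix link of this node points to $\Reqc{x}$; here I would need to check that the suffix link $\DAWGlink(\Reqc{ax}) = \Reqc{x}$ is exactly the one whose removed character is $a$, which follows because $x = \Rrep{x}$ ensures $ax$ is a (not necessarily longest) member of $\Reqc{ax}$ and its suffix link drops the leading $a$. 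Finally, since $axb \notin \Substr(y)$, there can be no edge labeled $b$ out of $\Reqc{ax}$ leading to a node representing $axb$; I would argue that $\DAWGedge(\Reqc{ax}, b) = \nil$, taking care that the absence of $axb$ as a substring really does forbid any $b$-labeled out-edge from $\Reqc{ax}$ (using that every string represented by $\Reqc{ax}$ shares the same ending positions, so a $b$-transition would witness $axb \in \Substr(y)$).

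For the reverse direction ($\Leftarrow$), assume all four DAWG conditions hold, and I would recover the three conditions of Lemma~\ref{lem:maw}. From $\DAWGedge(\Reqc{x}, b) = \Reqc{xb}$ I get $xb \in \Substr(y)$; from $\DAWGlink(\Reqc{ax}) = \Reqc{x}$ together with the interpretation of suffix links (the link drops a leading character $a$), the source node $\Reqc{ax}$ represents $ax$, so $ax \in \Substr(y)$. The remaining point is to show $axb \notin \Substr(y)$ from $\DAWGedge(\Reqc{ax}, b) = \nil$: if $axb$ were a substring, then $ax$ followed by $b$ would induce a $b$-labeled out-edge from the node $\Reqc{ax}$, contradicting that this transition is $\nil$. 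Combining the three gives $axb \in \MAW(y)$ by Lemma~\ref{lem:maw}. The condition $x = \Rrep{x}$ is used here to ensure that $\Reqc{x}$ and $\Reqc{ax}$ are the correct nodes and that the suffix-link/edge reasoning lines up with the positions of $x$ and $ax$.

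The main obstacle I expect is the careful bookkeeping around suffix links and the equivalence classes $\Reqc{\cdot}$: since several distinct strings may share a node, I must be precise about \emph{which} string a node represents when I assert that $\DAWGlink(\Reqc{ax}) = \Reqc{x}$ corresponds to dropping exactly the character $a$, and about why the absence of a $b$-transition from $\Reqc{ax}$ is equivalent to $axb \notin \Substr(y)$ rather than merely to some representative of that class not extending by $b$. Resolving this cleanly hinges on the fact that $x = \Rrep{x}$ pins down the representative and on the $\EndPos$-based definition of the equivalence classes, which ensures that the existence of an out-edge labeled $b$ from a node is a property of \emph{every} string in that class; once this is made explicit, both directions follow from Lemma~\ref{lem:maw} and the definitions of $\Ed$ and $\Ld$.
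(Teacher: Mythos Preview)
Your proposal is correct and follows essentially the same approach as the paper: both directions are obtained by translating the characterization of Lemma~\ref{lem:maw} into DAWG transitions and suffix links, invoking Lemma~\ref{lem:daw} for $x=\Rrep{x}$ in the forward direction. Your discussion of the bookkeeping around $\Reqc{\cdot}$ and suffix links is more explicit than the paper's terse version, but the underlying argument is the same.
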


\begin{proof}
  ($\Rightarrow$)
  From Lemma~\ref{lem:daw}, $x = \Rrep{x}$.
  From Lemma~\ref{lem:maw}, $axb \not\in \Substr(y)$.
  However, $ax,xb \in \Substr(y)$,
  and thus we have
  $\DAWGedge(\Reqc{ax}, b) = \nil$,
  $\DAWGedge(\Reqc{x}, b) = \Reqc{xb}$, and
  $\DAWGlink(\Reqc{ax}) = \Reqc{x}$,
  where the last suffix link exists since $x = \Rrep{x}$.

  ($\Leftarrow$)
  Since $\DAWGedge(\Reqc{x},b) = \Reqc{xb}$
  and $\DAWGlink(\Reqc{ax}) = \Reqc{x}$,
  we have that $xb,ax \in \Substr(y)$.
  Since $ax \in \Substr(y)$ and $\DAWGedge(\Reqc{ax}, b) = \nil$,
  we have that $axb \not\in\Substr(y)$
  Thus from Lemma~\ref{lem:maw}, $axb \in \MAW(y)$.
\end{proof}

From Lemma~\ref{lem:seed} all MAWs of $y$ can be computed
by traversing all the states of $\DAWG(y)$ and comparing all out-going edges between nodes connected by suffix links.
See also Figure~\ref{fig:daw} for illustration.

A pseudo-code of the algorithm MF-TRIE by Crochemore et al.~\cite{crochemore98:_autom_forbid_words},
which is based on this idea, is shown in Algorithm~\ref{algo:cro}.
Since all characters in the alphabet $\Sigma$
are tested at each node, the total time complexity
becomes $\Theta(n\sigma)$. The working space is $O(n)$, since only
the DAWG and its suffix links are needed.

\begin{algorithm2e}[tb]
\caption{$\Theta(n\sigma)$-time algorithm (MF-TRIE) by Crochemore et al.~\cite{crochemore98:_autom_forbid_words}}
\label{algo:cro}
\KwIn{String $y$ of length $n$}
\KwOut{All minimal absent words for $y$}
\SetKw{AND}{and}
$\MAW \leftarrow \emptyset$\;
Construct $\DAWG(y)$ augmented with suffix links $\Ld$\;
\For{\Each non-source node $u$ of $\DAWG(y)$}{
  \For{\Each character $b \in \Sigma$\label{line:for_loop}}{
    \If{$\DAWGedge(u, b) = \nil$ \AND
      $\DAWGedge(\DAWGlink(u), b) \neq \nil$\label{line:condition}}{
        $\MAW \leftarrow \MAW \! \cup \! \{axb\}$\tcp*{$(u, a, \DAWGlink(u)) \! \in \! \Ld, x \! = \! \Longest(\DAWGlink(u))$}
    % \If{$(\Reqc{x}, a, \Reqc{xa}) \not\in \Ed$ \AND $(\Reqc{\slfunc{\Reqc{x}}},a,\Reqc{\slfunc{x}a}) \in \Ed$}{
      % \For{\Each $x'\;\;(\exists (x,b, x')\in \Ed)$}{
      % \If{$\Rrep{xa}\notin \Vd \land \Rrep{x'a}\in \Vd$}{
      % output$(\Rrep{x}a)$\;
      % }
    }
  }
}
Output $\MAW$\;
\end{algorithm2e}

Next we show that with a simple modification in the for loops of
the algorithm and with a careful
examination of the total cost, the set $\MAW(y)$ of all MAWs of
the input string $y$ can be computed
in $O(n+|\MAW(y)|)$ time and $O(n)$ working space.
%For each node $u$ of $\DAWG(y)$,
%let $\Sigma_u$ denote the set of labels (characters)
%of the out-edge of $u$, namely, $\Sigma_u = \{b \mid \DAWGedge(u, b) \neq \nil\}$.
Basically, the only change is to move the
``$\DAWGedge(\DAWGlink(u), b) \neq \nil$'' condition in Line~\ref{line:condition}
to the for loop of Line~\ref{line:for_loop}.
Namely, when we focus on node $u$ of $\DAWG(y)$,
we test only the characters which label
the out-edges from node $\DAWGlink(u)$.
A pseudo-code of the modified version is shown in Algorithm~\ref{algo:tsuji}.

\begin{algorithm2e}[tb]
\caption{Proposed $O(n+ |\MAW(y)|)$-time algorithm}
\label{algo:tsuji}
\KwIn{String $y$ of length $n$}
\KwOut{All minimal absent words for $y$}
$\MAW \leftarrow \emptyset$\;
Construct edge-sorted $\DAWG(y)$ augmented with suffix links $\Ld$\;
%Sort the out-edges of each node $v \in \Vd$ in lexicographical order\;
\For{\Each non-source node $u$ of $\DAWG(y)$}{
  \For{\Each character $b$ such that $\DAWGedge(\DAWGlink(u), b) \neq \nil$\label{line:for_loop_modified}}{
   \If{$\DAWGedge(u, b) = \nil$}{
% \For{\Each $x\;\;(\exists (x,b, x')\in E_l)$}{
%  \For{\Each $a\;\;(\Rrep{x'a}\in \Vd)$}{
%   \If{$\Rrep{xa}\notin \Vd$}{
    $\MAW \leftarrow \MAW \! \cup \! \{axb\}$\tcp*{$(u, a, \DAWGlink(u)) \! \in \! \Ld, x \! = \! \Longest(\DAWGlink(u))$}
    }
  }
}
Output $\MAW$\;
\end{algorithm2e}

\begin{theorem}
Given a string $y$ of length $n$ over an integer alphabet,
we compute $\MAW(y)$ in optimal $O(n + |\MAW(y)|)$ time
with $O(n)$ working space.
\end{theorem}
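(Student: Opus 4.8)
The plan is to establish correctness and then bound the running time by a careful amortized analysis. For correctness, I would invoke Lemma~\ref{lem:seed}, which characterizes each MAW $axb$ of $y$ by the four conditions $x = \Rrep{x}$, $\DAWGedge(\Reqc{x}, b) = \Reqc{xb}$, $\DAWGlink(\Reqc{ax}) = \Reqc{x}$, and $\DAWGedge(\Reqc{ax}, b) = \nil$. In Algorithm~\ref{algo:tsuji}, setting $u = \Reqc{ax}$, the suffix link yields $\DAWGlink(u) = \Reqc{x}$ with associated character $a$, and the longest string in $\DAWGlink(u)$ is exactly $\Rrep{x} = x$. The inner loop restricts attention to characters $b$ with $\DAWGedge(\DAWGlink(u), b) \neq \nil$, which is precisely the condition $xb \in \Substr(y)$, i.e. $\DAWGedge(\Reqc{x}, b) = \Reqc{xb}$. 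The inner test $\DAWGedge(u, b) = \nil$ captures $axb \notin \Substr(y)$. Thus the algorithm reports $\{axb\}$ exactly when all four conditions of Lemma~\ref{lem:seed} hold, so it enumerates $\MAW(y)$ with no duplicates and no omissions (for MAWs of length $\geq 2$; length-$1$ MAWs are handled separately in $O(n+\sigma)$ time as noted before the theorem).

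For the time bound, I would first account for preprocessing: by Theorem~\ref{theo:DAWG_linear_const}, the edge-sorted $\DAWG(y)$ with suffix links $\Ld$ is built in $O(n)$ time and $O(n)$ space. The main work is the double loop. The key observation is that the inner loop at node $u$ iterates only over the out-edges of $\DAWGlink(u)$, not over the whole alphabet $\Sigma$. Summing the inner-loop cost over all nodes $u$, the total is $\sum_{u} \out(\DAWGlink(u))$, where $\out(v)$ denotes the out-degree of $v$. Since exactly one suffix link emanates from each node $u$, each node $v$ of $\DAWG(y)$ is the target $\DAWGlink(u)$ of at most a bounded number of nodes $u$ — more precisely, I would charge each iteration either to a produced MAW or to an edge of $\DAWG(y)$ that is traversed a constant number of times. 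Each pass of the inner loop performs an $O(1)$ lookup $\DAWGedge(u,b)$ (constant time because the edges are sorted and we can match the sorted out-edges of $\DAWGlink(u)$ against those of $u$ by a simultaneous linear scan), and either detects a MAW or does not.

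The main obstacle, and the heart of the argument, is showing that the iterations that do \emph{not} produce a MAW still sum to $O(n)$. For a fixed $u$ with $v = \DAWGlink(u)$, every out-edge label $b$ of $v$ either gives $\DAWGedge(u,b) = \nil$ (a MAW, charged to the output) or $\DAWGedge(u,b) \neq \nil$ (no MAW). In the latter case there is a genuine out-edge $(u,b,\cdot)$ of $\DAWG(y)$, and I would charge the iteration to that edge of $u$. Because the out-edges of $v = \DAWGlink(u)$ and of $u$ are both sorted, the simultaneous scan visits each out-edge of $u$ at most once, so the non-MAW iterations for $u$ number at most $\out(u) + 1$. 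Summing, the non-MAW cost is $O\!\left(\sum_u (\out(u)+1)\right) = O(|\Ed| + |\Vd|) = O(n)$ by Theorem~1 (Blumer et al.'s bounds $|\Vd| \le 2n-1$, $|\Ed| \le 3n-4$). Combining, the total running time is $O(n) + O(|\MAW(y)|) = O(n + |\MAW(y)|)$, and since $|\MAW(y)| = \Omega(\sigma)$ and any algorithm must at least read the input and write the output, this is optimal. The working space remains $O(n)$, as only $\DAWG(y)$ and its suffix links are stored.
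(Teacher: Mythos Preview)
Your proposal is correct and follows essentially the same approach as the paper: correctness via the characterization in Lemma~\ref{lem:seed}, and the time bound via the same charging argument (MAW-producing iterations charged to the output, non-MAW iterations charged to the matching out-edge of $u$, with the simultaneous sorted scan ensuring each out-edge of $u$ is charged at most once). The one stray remark that each node $v$ is the target of ``at most a bounded number'' of suffix links is false in general, but you immediately replace it with the correct edge-based charging, so the final argument is sound and matches the paper's.
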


\begin{proof}
First, we show the correctness of our algorithm.
For any node $u$ of $\DAWG(y)$,
$\EndPos(\DAWGlink(u)) \supset \EndPos(u)$ holds
since every string in $\DAWGlink(u)$ is a suffix of the strings in $u$.
Thus, if there is an out-edge of $u$ labeled $c$,
then there is an out-edge of $\DAWGlink(u)$ labeled $c$.
Hence, the task is to find every character
$b$ such that there is an out-edge of $\DAWGlink(u)$ labeled $b$
but there is no out-edge of $u$ labeled $b$.
The for loop of Line~\ref{line:for_loop_modified}
of Algorithm~\ref{algo:tsuji} tests all such characters and only those.
Hence, Algorithm~\ref{algo:tsuji} computes $\MAW(y)$ correctly.

Second, we analyze the efficiency of our algorithm.
As was mentioned above,
minimal absent words of length 1 for $y$
can be found in $O(n + \sigma)$ time and $O(1)$ working space.
By Lemma~\ref{lem:MAW_bounds},
$\sigma \leq |\MAW(y)|$ and hence the $\sigma$-term is
dominated by the output size $|\MAW(y)|$.
Now we consider the cost of finding minimal absent words of length
at least 2 by Algorithm~\ref{algo:tsuji}.
Let $b$ be any character such that
there is an out-edge $e$ of $\DAWGlink(u)$ labeled $b$.
There are two cases:
(1) If there is no out-edge of $u$ labeled $b$,
then we output an MAW, so we can charge the cost to check $e$
to an output.
(2) If there is an out-edge $e'$ of $u$ labeled $b$,
then the trick is that we can charge the cost to check $e$ to $e'$.
Since each node $u$ has exactly one suffix link going out from it,
each out-edge of $u$ is charged only once in Case (2).
Since the out-edges of every node $u$ and those of $\DAWGlink(u)$
are both sorted, we can compute their difference
for every node $u$ in $\DAWG(y)$, in overall $O(n)$ time.
%$u$ is lexicographically sorted,
%each character $b$ can be tested in constant time.
Edge-sorted $\DAWG(y)$ with suffix links can be
constructed in $O(n)$ time for an integer alphabet
as in Section~\ref{sec:DAWG_linear_const}.
% We can sort all edges in $\DAWG(y)$ by their labels in $O(n)$ time;
% First, extract all edges in $\DAWG(y)$ by a standard traversal on $\DAWG(y)$, which takes $O(n)$ time.
% Next, radix sort them by their labels, which takes $O(n)$ time because we assumed an integer alphabet.
%Finally, re-insert the edges to their respective nodes in the sorted order.
%% with a bucket $B$ of size $n$
%% such that $B[c]$ stores all edges of $\DAWG(y)$ labeled
%% by character $c$, and taking out all edges from $\DAWG(y)$ for now.
%% Then, we return the edges from $B[c]$ back to $\DAWG(y)$
%% in lexicographical order of characters $c = 1, \ldots, |\Sigma|$.
%% The bucket $B$ can be computed in $O(n)$ time by a standard
%% traversal on $\DAWG(y)$.
Overall, Algorithm~\ref{algo:tsuji} runs in $O(n + |\MAW(y)|)$ time.
The space requirement is clearly $O(n)$.
\end{proof}

\section{Conclusions}\label{sec:conclusion}

In this paper, we proposed $O(n)$-time algorithms for constructing $\DAWG(y)$
and $\ATree(y)$ of a given string $y$ of length $n$,
over an integer alphabet of size polynomial in $n$.
These algorithms transform the suffix tree of $y$ to $\DAWG(y)$ and $\ATree(y)$.
We also showed how to build the symmetric CDAWG and linear-size suffix trie
for the input string over an integer alphabet.

As a further application of $\DAWG(y)$,
we presented an optimal $O(n + |\MAW(y)|)$-time algorithm to compute the set
$\MAW(y)$ of all minimal absent words of $y$ when the edge-sorted DAWG for $y$ is already computed.

\section*{Acknowledgements}

This work was supported by JSPS KAKENHI Grant Numbers
JP18J10967 (YF),
JP17H01697, JP26280003, JP22H03551 (SI), 
JP16H02783, JP20H04141 (HB),
JP25240003, JP18H04098 (MT).

\bibliographystyle{elsarticle-num}
\bibliography{ref}

\end{document}